\newtheorem{thm}{Theorem}[section]
\newtheorem*{thm*}{Theorem}
\newtheorem{cor}{Corollary}[section]
\newtheorem{lem}{Lemma}[section]
\newtheorem{prop}{Proposition}[section]
\theoremstyle{definition}
\newtheorem{defn}{Definition}[section]
\theoremstyle{remark}
\newtheorem{rem}{Remark}[section]
\numberwithin{equation}{section}
\newcommand{\ket}[1]{|{#1}\rangle}
\newcommand{\bra}[1]{\langle{#1}|}
\def\bI{\mathbb I}
\def\mU{\mathcal{U}}
\def\mM{\mathcal{M}}
\def\beq{\begin{equation}}
\def\eeq{\end{equation}}
\def\bC{\mathbb C}
\def\bE{\mathbb E}
\def\tr{\mbox{Tr}}
\DeclarePairedDelimiter{\ceil}{\lceil}{\rceil}
\title{Classical shadows meet quantum optimal mass transport}
\author[1]{Giacomo De Palma}
\author[1,2]{Tristan Klein}
\author[1,3]{Davide Pastorello}
\affil[1]{University of Bologna, Department of Mathematics, Piazza di Porta San Donato 5, 40126 Bologna, Italy}
\affil[2]{ENS de Lyon, Département Informatique, 15 parvis René Descartes 69342 Lyon Cedex 07, France}
\affil[3]{TIFPA-INFN, via Sommarive 14, 38123 Povo (Trento), Italy}
\begin{document}

\maketitle

\begin{abstract}
  \noindent  Classical shadows constitute a protocol to estimate the expectation values of a collection of $M$ observables acting on $O(1)$ qubits of an unknown $n$-qubit state with a number of measurements that is independent of $n$ and that grows only logarithmically with $M$. We propose a local variant of the quantum Wasserstein distance of order 1 of [De Palma \emph{et al.}, IEEE Trans. Inf. Theory 67, 6627 (2021)] and prove that the classical shadow obtained measuring $O(\log n)$ copies of the state to be learned constitutes an accurate estimate with respect to the proposed distance.
  We apply the results to quantum generative adversarial networks, showing that quantum access to the state to be learned can be useful only when some prior information on such state is available.       

\end{abstract}

\section{Introduction}
Quantum tomography consists in finding a classical estimate of an unknown quantum state by measuring a given number of independent copies of the state and plays a key role in quantum information science \cite{ariano2003quantum}.
The number of parameters that are required to describe a generic state of $n$ qubits grows exponentially with $n$.
Therefore, if no information on the state is known a priori, any estimate that is accurate with respect to the trace distance or the fidelity requires an exponential number of copies of the state and becomes quickly unfeasible even for moderately large $n$ \cite{donnell2016efficient,haah2017sample,anshu2023survey}.
However, the situation becomes radically different if we weaken the metric employed to measure the quality of the estimate.
Let us consider the scenario where we are only interested in the expectation values of some observables.
Then, the shadow tomography protocol can estimate such expectation values with a number of copies of the state that scales linearly with the number of qubits and polylogarithmically with the number of observables \cite{aaronson2020shadow}.
This result can be further improved if the observables are tensor products of $O(1)$ Pauli matrices.
Indeed in this case, by measuring a number of copies of the state that scales logarithmically with the number of observables and that is independent on the number of qubits, the classical shadow protocol can generate a classical estimate of the state (the classical shadow) from which the expectation values of the observables can be estimated \cite{huang2020predicting,huang2022learning}.
The striking property of this protocol is that the classical shadow does not depend on the observables to be estimated, and the protocol works also if such observables are revealed only after the measurements have been performed.
A further improvement of the classical shadow protocol allows to estimate the expectation value of the tensor products of any number of Pauli matrices \cite{Huang_2021}.

In this paper, we propose a distance on the set of the states of $n$ qubits that metrizes the convergence of the classical shadow to the state to be estimated, \emph{i.e.}, such that the classical shadow obtained by measuring $O\left(\frac{1}{\epsilon^2}\log\frac{1}{\epsilon}\log n\right)$ copies of the state to be estimated achieves with high probability distance $\epsilon$ from the state.
This distance, which we call the local quantum $W_1$ distance, is built upon the quantum theory of optimal mass transport and is a variant of the quantum Wasserstein distance of order $1$ (or quantum $W_1$ distance) for qubits proposed in \cite{de2021quantum} (several other approaches to quantum optimal mass transport have been proposed, the most relevant are summarized in \autoref{app:appr}).
The local quantum $W_1$ distance is induced by the norm dual to the local quantum norm, \emph{i.e.}, the local quantum $W_1$ distance between the states $\rho$ and $\sigma$ is the maximum of the difference between the expectation values on $\rho$ and on $\sigma$ of an observable with local quantum norm at most one.
The local quantum norm is inspired to the norm employed in the study of quantum spin systems on infinite lattices to turn the space of the local interactions into a Banach space \cite{bratteli2013operatorII,naaijkens2017quantum}, and is built to have low values on observables which are the sum of operators acting on few qubits.

As the quantum $W_1$ distance of \cite{de2021quantum}, the local quantum $W_1$ distance coincides with the trace distance for $n=1$ and is an extensive quantity, \emph{i.e.}, it is superadditive with respect to the composition of quantum systems and additive for product states.
In particular, the local quantum $W_1$ distance recovers the Hamming distance for the states of the computational basis.
The local quantum $W_1$ distance is always upper bounded by the quantum $W_1$ distance.
However, the local norm imposes a much stronger constraint on the observables than the quantum Lipschitz norm dual to the quantum $W_1$ distance, and the local quantum $W_1$ distance between states that are locally indistinguishable can be exponentially smaller than the quantum $W_1$ distance.

Our results are complementary to the results of \cite{rouze2023learning,onorati2023efficient}, which provide a protocol to estimate with respect to the quantum $W_1$ distance of \cite{de2021quantum} any quantum state satisfying a transportation-cost inequality with a number of copies that grows polylogarithmically with the number of qubits.
Quantum transportation-cost inequalities provide an upper bound to the quantum $W_1$ distance in terms of the quantum relative entropy, and have been proved for Gibbs states of local Hamiltonians with a sufficiently strong decay of correlations \cite{de2022quantum,onorati2023efficient}.
This paper removes any assumption on the state to be learned by weakening the metric employed to measure the quality of the estimate.
However, we prove that when restricted to a set of Gibbs states of local Hamiltonians that can be learned efficiently in the quantum $W_1$ distance, the local quantum $W_1$ distance proposed in this paper is equivalent to the quantum $W_1$ distance.

We apply our results to the Quantum Wasserstein Generative Adversarial Network (QWGAN) proposed in \cite{kiani2022learning}, which provides an algorithm to train a variational quantum circuit to learn an unknown quantum state.
We show that, if no a priori information on the state is available, the QWGAN can be equivalently trained on a classical shadow of the state and does not gain any advantage from having quantum access to the state.

The paper is structured as follows.
In \autoref{sec:LQD} we define the local norm and the local quantum $W_1$ distance and prove some of their properties.
In \autoref{sec:ClassicalShadows}, we present the classical shadow protocol of \cite{huang2020predicting} and its improvement of \cite{Huang_2021}.
In \autoref{sec:conv}, we determine the convergence rate of the classical shadow protocol with respect to the local quantum $W_1$ distance.
In \autoref{sec:Gibbs}, we prove the equivalence between the quantum $W_1$ distance and the local quantum $W_1$ distance for the Gibbs states of local Hamiltonians satisfying a transportation-cost inequality.
In \autoref{sec:GAN}, we discuss the application to QWGANs.
We conclude in \autoref{sec:concl}.
\autoref{app:appr} summarizes the main approaches to quantum optimal mass transport, \autoref{app:shadows} presents some related works on classical shadows, and \autoref{app:lem} contains the proofs of the auxiliary lemmas.

\section{The local quantum \texorpdfstring{$W_1$}{W1} distance}\label{sec:LQD}
In this section, we recall the definition of the quantum Wasserstein distance of order $1$, we introduce the local quantum $W_1$ distance and prove its basic properties.

\subsection{Notation}
Let us start by setting the notation for the paper:
\begin{defn}
    For any $k\in\mathbb{N}$ we define
    \begin{equation}
        [k] = \{1,\,\ldots,\,k\}\,.
    \end{equation}
\end{defn}

We consider a quantum system made by $n$ qubits, which we label with the integers from $1$ to $n$.
Each qubit is associated with the Hilbert space $\mathbb{C}^2$, such that the Hilbert space of the system is $\left(\mathbb{C}^2\right)^{\otimes n}$.
\begin{defn}
    For any subset of the qubits $\Lambda\subseteq[n]$, let
    \begin{equation}
        \mathcal{H}_\Lambda = \bigotimes_{x\in\Lambda}\mathbb{C}^2
    \end{equation}
    be the Hilbert space associated with the qubits in $\Lambda$, let $\mathcal{O}_\Lambda$ be the set of the self-adjoint linear operators acting on $\mathcal{H}_\Lambda$, let $\mathcal{O}_\Lambda^T$ be the set of the traceless operators in $\mathcal{O}_\Lambda$, and let $\mathcal{S}_\Lambda$ be the set of the quantum states acting on $\mathcal{H}_\Lambda$.
\end{defn}

\subsection{The quantum \texorpdfstring{$W_1$}{W1} distance and the quantum Lipschitz constant}
In this subsection, we briefly present the quantum $W_1$ distance and the quantum Lipschitz constant of \cite{de2021quantum}.

The quantum $W_1$ distance is based on the notion of neighboring quantum states.
Two states of $n$ qubits are neighboring if they coincide after discarding a suitable qubit.
We define the quantum $W_1$ norm $\|\cdot\|_{W_1}$ as the maximum norm that assigns distance at most one to any couple of neighboring states.
The quantum $W_1$ distance is then the distance induced by the quantum $W_1$ norm.
More formally, we have the following:
\begin{defn}[Quantum $W_1$ norm]
For any $\Delta\in\mathcal{O}_{[n]}^T$ we define
\begin{equation}
    \|\Delta\|_{W_1} = \min\left\{\frac{1}{2}\sum_{x\in[n]}\left\|\Delta^{(x)}\right\|_1 : \Delta^{(x)}\in\mathcal{O}_{[n]}^T\,,\;\mathrm{Tr}_x\Delta^{(x)} = 0\;\forall\,x\in[n]\,,\;\Delta = \sum_{x\in[n]}\Delta^{(x)}\right\}\,.
\end{equation}
\end{defn}
The quantum $W_1$ distance can be thought as a quantum version of the Hamming distance, since it exactly recovers the Hamming distance for the states of the computational basis.

We define the dependence of the observable $H$ on the qubit $x$ as twice the minimum operator norm of the difference between $H$ and any observable that does not act on $x$:
\begin{defn}[\!\!\cite{de2022wasserstein}]\label{def:partial}
    For any $x\in[n]$ and any $H\in\mathcal{O}_{[n]}$ we define
    \begin{equation}
        \partial_x H = 2\min_{K\in\mathcal{O}_{x^c}}\left\|H - K\right\|_\infty\,,
    \end{equation}
    where $x^c=[n]\setminus \{x\}$
\end{defn}

We then define the quantum Lipschitz constant of the observable $H$ as the maximum dependence of $H$ on a qubit:
\begin{defn}[Quantum Lipschitz constant]
    For any $H\in\mathcal{O}_{[n]}$ we define
    \begin{equation}
        \|H\|_L = \max_{x\in[n]}\partial_x H\,.
    \end{equation}
\end{defn}

The quantum $W_1$ norm on $\mathcal{O}_{[n]}^T$ and the quantum Lipschitz constant on $\mathcal{O}_{[n]}$ are mutually dual:
\begin{prop}[\!\!\cite{de2021quantum}]\label{prop:W1L}
    For any $\Delta\in\mathcal{O}_{[n]}^T$ we have
    \begin{equation}\label{eq:W1L}
    \|\Delta\|_{W_1} = \max_{H\in\mathcal{O}_{[n]} : \|H\|_L\le1}\mathrm{Tr}\left[\Delta\,H\right]\,.
    \end{equation}
\end{prop}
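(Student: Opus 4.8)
The plan is to read the stated identity as an instance of convex (Lagrangian) duality between the primal minimization that defines $\|\Delta\|_{W_1}$ and the maximization over Lipschitz-bounded observables on the right-hand side. I would split the argument into the two inequalities, proving the easy one directly and obtaining the reverse one from strong duality.

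First I would establish the weak-duality direction $\max_{\|H\|_L\le1}\mathrm{Tr}[\Delta H]\le\|\Delta\|_{W_1}$. Fix any $H$ with $\|H\|_L\le1$ and any admissible decomposition $\Delta=\sum_x\Delta^{(x)}$ with $\mathrm{Tr}_x\Delta^{(x)}=0$. The crucial observation is that each summand is insensitive to the part of $H$ that does not act on $x$: for every $K\in\mathcal{O}_{x^c}$ one has $\mathrm{Tr}[\Delta^{(x)}(\mathbb{1}_x\otimes K)]=\mathrm{Tr}_{x^c}[(\mathrm{Tr}_x\Delta^{(x)})K]=0$, hence $\mathrm{Tr}[\Delta^{(x)}H]=\mathrm{Tr}[\Delta^{(x)}(H-K)]$. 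Applying Hölder's inequality and then minimizing over $K$ yields $\mathrm{Tr}[\Delta^{(x)}H]\le\|\Delta^{(x)}\|_1\min_K\|H-K\|_\infty=\tfrac12\|\Delta^{(x)}\|_1\,\partial_xH\le\tfrac12\|\Delta^{(x)}\|_1$. Summing over $x$ and taking the infimum over all admissible decompositions then gives the claimed bound.

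For the reverse inequality I would set up the Lagrangian dual of the primal program defining $\|\Delta\|_{W_1}$, introducing an operator multiplier $H\in\mathcal{O}_{[n]}$ for the linear constraint $\sum_x\Delta^{(x)}=\Delta$ and a multiplier $L_x\in\mathcal{O}_{x^c}$ for each partial-trace constraint $\mathrm{Tr}_x\Delta^{(x)}=0$. After collecting terms, the inner minimization over each $\Delta^{(x)}$ takes the form $\inf_{\Delta^{(x)}}\bigl(\tfrac12\|\Delta^{(x)}\|_1-\mathrm{Tr}[(H+\mathbb{1}_x\otimes L_x)\Delta^{(x)}]\bigr)$, which equals $0$ when $\|H+\mathbb{1}_x\otimes L_x\|_\infty\le\tfrac12$ and $-\infty$ otherwise, since the convex conjugate of $\tfrac12\|\cdot\|_1$ is the indicator function of the operator-norm ball of radius $\tfrac12$. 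Minimizing the resulting feasibility condition over $L_x$ reproduces exactly $\min_K\|H-K\|_\infty=\tfrac12\partial_xH$, so the dual program reads $\max\{\mathrm{Tr}[H\Delta]:\partial_xH\le1\ \forall x\}=\max_{\|H\|_L\le1}\mathrm{Tr}[\Delta H]$, and strong duality closes the gap.

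The main obstacle, and the step I would treat most carefully, is the justification of strong duality: I would verify a Slater-type constraint qualification (the affine constraints are feasible and the trace-norm objective is finite) so that the optima are attained on both sides and the min--max interchange is valid, for instance via Sion's minimax theorem in this finite-dimensional convex--concave setting. The remaining bookkeeping---that $\mathbb{1}_x\otimes L_x$ ranges over all of $\mathcal{O}_{x^c}$ as $L_x$ does, and that the partial-trace multiplier merges with $H$ precisely into the quantity controlled by $\partial_x$---is routine once the conjugate computation above is in place.
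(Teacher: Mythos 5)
The paper itself contains no proof of \autoref{prop:W1L}: it is quoted from \cite{de2021quantum}, so there is no in-paper argument to compare against, and I judge your proposal on its own merits and against the proof in that reference. Your argument is correct in both halves. The weak-duality direction is complete and rigorous: the identity $\mathrm{Tr}\left[\Delta^{(x)}\left(\mathbb{I}_x\otimes K\right)\right]=\mathrm{Tr}\left[\left(\mathrm{Tr}_x\Delta^{(x)}\right)K\right]=0$ for $K\in\mathcal{O}_{x^c}$, followed by H\"older and minimization over $K$, gives $\mathrm{Tr}\left[\Delta^{(x)}H\right]\le\frac{1}{2}\left\|\Delta^{(x)}\right\|_1\partial_xH$, and summing over $x$ and infimizing over decompositions yields one inequality. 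The Lagrangian half is also set up correctly: the Fenchel conjugate of $\frac{1}{2}\|\cdot\|_1$ is indeed the indicator of the operator-norm ball of radius $\frac{1}{2}$, and the existence of $L_x\in\mathcal{O}_{x^c}$ with $\left\|H-\mathbb{I}_x\otimes L_x\right\|_\infty\le\frac{1}{2}$ is, by \autoref{def:partial}, exactly the constraint $\partial_xH\le1$, so your dual program is precisely the right-hand side of \eqref{eq:W1L}.

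Two points need tightening before the strong-duality step is airtight. First, ``the affine constraints are feasible'' is asserted but never shown, and it is the one nontrivial ingredient your constraint qualification needs; you should exhibit a feasible point, e.g.\ the telescoping decomposition
\begin{equation*}
\Delta^{(x)}=\frac{\mathbb{I}_{[x-1]}}{2^{x-1}}\otimes\mathrm{Tr}_{[x-1]}\Delta-\frac{\mathbb{I}_{[x]}}{2^{x}}\otimes\mathrm{Tr}_{[x]}\Delta\,,\qquad x\in[n]\,,
\end{equation*}
which satisfies $\mathrm{Tr}_x\Delta^{(x)}=0$ and sums to $\Delta$ because $\mathrm{Tr}\,\Delta=0$. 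Second, Sion's minimax theorem does not apply as invoked: it requires one of the two sets to be compact, and here both the primal variables and the multipliers range over full vector spaces. The clean justification is strong duality for finite-dimensional convex programs whose constraints are all affine (where feasibility alone suffices), or alternatively one can quotient the dual variable $H$ by multiples of the identity --- on which $\mathrm{Tr}[\Delta H]$ does not depend since $\Delta$ is traceless --- so that the dual feasible set becomes compact. For comparison, \cite{de2021quantum} argues by norm duality rather than an explicit Lagrangian: it identifies the unit ball of $\|\cdot\|_{W_1}$ as the convex hull of the union over $x$ of the sets $\left\{\Delta^{(x)}:\mathrm{Tr}_x\Delta^{(x)}=0,\ \left\|\Delta^{(x)}\right\|_1\le2\right\}$, computes the support function of each piece (via the duality between the trace norm on the subspace $\{\mathrm{Tr}_x\,\cdot=0\}$ and the operator-norm distance to its annihilator $\mathcal{O}_{x^c}$) to get $\partial_x H$, hence $\|H\|_L$ as the dual norm, and concludes by finite-dimensional reflexivity. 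Your route is a legitimate alternative that produces the dual program by direct computation, at the price of having to verify the constraint qualification that the norm-duality argument sidesteps.
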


Despite the fact that the Lipschitz constant seems to constrain the maximization in \eqref{eq:W1L} to local observables, the quantum $W_1$ distance between states that are locally indistinguishable can be large.
This is a consequence of the continuity of the von Neumann entropy $S(\rho) = -\mathrm{Tr}\left[\rho\ln\rho\right]$ with respect to the quantum $W_1$ distance:
\begin{thm}[\!\!\cite{de2022wasserstein}]\label{thm:S}
For any two states of $n$ qubits $\rho,\,\sigma\in\mathcal{S}_{[n]}$,
\begin{equation}
\frac{\left|S(\rho) - S(\sigma)\right|}{n} \le h_2\left(\frac{\left\|\rho-\sigma\right\|_{W_1}}{n}\right) + \frac{\left\|\rho-\sigma\right\|_{W_1}}{n}\ln3\,,
\end{equation}
where $h_2$ is the binary entropy function
\begin{equation}
h_2(x) = -x\ln x - \left(1-x\right)\ln\left(1-x\right)\,,\qquad 0\le x\le 1\,.
\end{equation}
\end{thm}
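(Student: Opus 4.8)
The plan is to upgrade the global entropy comparison into a collection of single-qubit modifications, to control each one by an Audenaert--Fannes-type estimate, and then to recombine the per-qubit bounds through the concavity of the target right-hand side. Concretely, I would first fix an optimal decomposition $\rho-\sigma=\sum_{x\in[n]}\Delta^{(x)}$ with $\Delta^{(x)}\in\mathcal{O}_{[n]}^T$, $\mathrm{Tr}_x\Delta^{(x)}=0$, and $\sum_{x\in[n]}\tfrac12\|\Delta^{(x)}\|_1=\|\rho-\sigma\|_{W_1}$, and set $w_x=\tfrac12\|\Delta^{(x)}\|_1$ so that $\sum_{x\in[n]}w_x=\|\rho-\sigma\|_{W_1}$. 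The guiding intuition is that each $\Delta^{(x)}$ is a perturbation that becomes invisible after discarding qubit $x$, so it should be able to alter the entropy only through a channel acting on qubit $x$.

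The core step is a local entropy-modification estimate: if two states of $n$ qubits agree after tracing out qubit $x$ and sit at trace distance $w_x$ from each other, then their von Neumann entropies differ by at most $g(w_x):=h_2(w_x)+w_x\ln 3$. I would establish this by an Audenaert-type argument, reducing to the extremal (majorization-minimal/maximal) configuration at fixed reduced state on $x^c$ and fixed trace distance; the single-qubit character of the modification, carried by the three traceless Pauli directions on $x$, is what I expect to produce the constant $\ln 3$ through the dimension term of such a bound, although pinning this constant down exactly is delicate. To convert the decomposition into a telescoping chain, I would build a sequence $\rho=\omega_0,\omega_1,\dots,\omega_n=\sigma$ in which $\omega_{x-1}$ and $\omega_x$ are neighboring at qubit $x$ with $\tfrac12\|\omega_{x-1}-\omega_x\|_1\le w_x$, apply $|S(\rho)-S(\sigma)|\le\sum_{x\in[n]}|S(\omega_{x-1})-S(\omega_x)|$, and bound each summand by $g(w_x)$.

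The final recombination is routine: since $g''(w)=-\tfrac{1}{w(1-w)}<0$ on $(0,1)$ the function $g$ is concave, and $\|\rho-\sigma\|_{W_1}\le\tfrac n2\|\rho-\sigma\|_1\le n$ guarantees $\tfrac1n\sum_{x}w_x\le1$, so Jensen's inequality yields
\[
\sum_{x\in[n]} g(w_x)\;\le\; n\,g\!\left(\tfrac1n\sum_{x\in[n]} w_x\right)\;=\;n\left[h_2\!\left(\tfrac{\|\rho-\sigma\|_{W_1}}{n}\right)+\tfrac{\|\rho-\sigma\|_{W_1}}{n}\ln 3\right],
\]
and dividing by $n$ gives the claim. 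I expect the main obstacle to be precisely the local entropy-modification estimate together with the construction of a chain of \emph{genuine} density operators $\omega_x$: the naive partial sums $\sigma+\sum_{y\le x}\Delta^{(y)}$ need not be positive semidefinite, so the interpolating states must be built more carefully, for instance by transporting one qubit at a time while holding the complementary reduced state fixed. It is also worth noting that the subadditive $h_2$ term is what forces a genuinely entropic argument here: a purely dual estimate via \autoref{prop:W1L}, bounding $\mathrm{Tr}[(\rho-\sigma)\ln\rho_t]$ by $\|\rho-\sigma\|_{W_1}\,\|\ln\rho_t\|_L$ along the linear interpolation $\rho_t=(1-t)\sigma+t\rho$, would both run into the divergence of $\|\ln\rho_t\|_L$ near vanishing eigenvalues and produce only a term linear in $\|\rho-\sigma\|_{W_1}$, missing the $h_2$ contribution.
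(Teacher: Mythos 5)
First, a point of comparison: this paper does not prove \autoref{thm:S} at all --- it is quoted from \cite{de2022wasserstein} --- so your proposal can only be judged on its own merits, and as written it has two genuine gaps, the first of which is fatal. Your telescoping step requires a chain of \emph{bona fide} density operators $\rho=\omega_0,\dots,\omega_n=\sigma$ with $\mathrm{Tr}_x(\omega_x-\omega_{x-1})=0$ and $\tfrac12\sum_x\|\omega_x-\omega_{x-1}\|_1\le\|\rho-\sigma\|_{W_1}$; equivalently, that the quantum $W_1$ distance coincides with the path metric generated by the trace distance on neighboring pairs of states. Classically this is exactly what an optimal coupling provides (replace the coordinates of $X$ by those of $Y$ one at a time), but the quantum $W_1$ norm of \cite{de2021quantum} is defined through operator decompositions precisely because quantum couplings of this kind do not exist. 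Note that in an optimal decomposition each $\Delta^{(x)}$ \emph{is} of the form $t_x(\alpha_x-\beta_x)$ with $\alpha_x,\beta_x$ neighboring states (normalize the positive and negative parts, which have equal partial trace over $x$); the obstruction is that the mass $t_x\beta_x$ to be removed need not be dominated by the running partial sum, so $\sigma+\sum_{y\le x}\Delta^{(y)}$ need not be positive semidefinite. You flag exactly this, but ``built more carefully, transporting one qubit at a time'' is not a construction, and no such chain representation of the quantum $W_1$ distance is established in the works this paper relies on. Without it there are no states to which your local lemma can be applied, and the whole telescoping collapses.

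The second gap is the local lemma itself. The statement --- if $\mathrm{Tr}_x\rho=\mathrm{Tr}_x\sigma$ and $t=\tfrac12\|\rho-\sigma\|_1$, then $|S(\rho)-S(\sigma)|\le h_2(t)+t\ln3$ --- is true and in fact tight: for the two-qubit maximally entangled state versus $\mathbb{I}/4$ one has $t=3/4$ and both sides equal $\ln4$. But it does not follow from any off-the-shelf bound: Audenaert--Fannes applied in the ambient dimension gives $t\ln(2^n-1)$, and Alicki--Fannes--Winter for the conditional entropy (which equals the entropy difference here, since the marginals on $x^c$ agree) gives $2t\ln2+(1+t)\,h_2\!\left(\tfrac{t}{1+t}\right)$, which is strictly weaker (at $t=3/4$ it exceeds $2.2$ while the claim gives $\ln4\approx1.39$). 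The constant $\ln 3=\ln(d^2-1)$, \emph{i.e.}, the ``effective dimension $d^2$'' forced by the equal-marginal constraint, is precisely the content of the lemma; your sketch (``reduce to the extremal configuration\ldots pinning this constant down exactly is delicate'') asserts it rather than proves it, and your heuristic attribution of $\ln 3$ to the three Pauli directions is not an argument. A final, minor and fixable point: in the Jensen recombination you also need $g(w)=h_2(w)+w\ln3$ to be \emph{monotone}, not just concave, to pass from the chain costs to $\|\rho-\sigma\|_{W_1}/n$; this holds only for $w\le3/4$, and the regime $\|\rho-\sigma\|_{W_1}/n>3/4$ must be handled separately (it is trivial there, since $g\ge\ln3>\ln2\ge|S(\rho)-S(\sigma)|/n$), but this should be said.
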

Indeed, on the one hand \autoref{thm:S} implies that any pure state is far from the maximally mixed state:
\begin{prop}
For any $n\in\mathbb{N}$ and any pure state of $n$ qubits $\rho\in\mathcal{S}_{[n]}$ we have
\begin{equation}
\left\|\rho - \frac{\mathbb{I}}{2^n}\right\|_{W_1} > 0.189\,n\,.
\end{equation}
\end{prop}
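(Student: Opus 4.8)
The plan is to apply \autoref{thm:S} with $\sigma=\mathbb{I}/2^n$ the maximally mixed state. Since $\rho$ is pure we have $S(\rho)=0$, while $S(\mathbb{I}/2^n)=n\ln 2$, so $\left|S(\rho)-S(\sigma)\right|=n\ln 2$. Writing $t=\left\|\rho-\mathbb{I}/2^n\right\|_{W_1}/n$ (which lies in $[0,1]$, as required for $h_2$ to be defined), the inequality of \autoref{thm:S} becomes
\begin{equation}
\ln 2 \le h_2(t) + t\ln 3 =: f(t)\,,
\end{equation}
so the whole statement reduces to showing that this constraint forces $t>0.189$.

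Next I would study the single-variable function $f$ on $[0,1]$. Its derivative is $f'(t)=\ln\frac{1-t}{t}+\ln 3=\ln\frac{3(1-t)}{t}$, which is strictly positive for $t<3/4$ and strictly negative for $t>3/4$; hence $f$ is strictly increasing on $[0,3/4]$ (and, as a sanity check, $f(3/4)=\ln 4=2\ln2>\ln2$, so the constraint is consistent). If $t>3/4$ the desired bound is immediate, so I may assume $t\le 3/4$ and work on the increasing branch.

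The crux is then a direct numerical check that $f(0.189)<\ln 2$: one computes $h_2(0.189)\approx 0.4848$ and $0.189\,\ln 3\approx 0.2076$, giving $f(0.189)\approx 0.6924<0.6931\approx\ln 2$. Combined with the strict monotonicity of $f$ on $[0,3/4]$, this shows that $t\le 0.189$ would imply $f(t)\le f(0.189)<\ln 2$, contradicting the displayed inequality. Therefore $t>0.189$, that is $\left\|\rho-\mathbb{I}/2^n\right\|_{W_1}>0.189\,n$, as claimed.

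The only delicate point, and the step deserving the most care, is the numerical margin: $f(0.189)$ falls below $\ln 2$ by less than $10^{-3}$, so the constant $0.189$ is essentially the exact threshold $t_0$ solving $f(t_0)=\ln 2$ rounded down. The estimates of $\ln 0.189$ and $\ln 0.811$ must therefore be carried out to sufficient precision to guarantee the strict inequality $f(0.189)<\ln 2$; everything else is routine calculus.
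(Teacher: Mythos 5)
Your proof is correct and follows essentially the same route as the paper: apply \autoref{thm:S} with $\sigma=\mathbb{I}/2^n$, use $S(\rho)=0$ and $S(\mathbb{I}/2^n)=n\ln 2$ to reduce to the scalar constraint $h_2(w)+w\ln 3\ge\ln 2$, and conclude $w>0.189$. The only difference is that you make explicit the monotonicity and numerical analysis behind the final step, which the paper states without elaboration ("from which $w>0.189$"); your verification that $f(0.189)<\ln 2$ and that $f$ is increasing on $[0,3/4]$ is accurate.
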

\begin{proof}
Setting
\begin{equation}
    \left\|\rho - \frac{\mathbb{I}}{2^n}\right\|_{W_1} = n\,w\,,\qquad 0\le w\le 1\,,
\end{equation}
\autoref{thm:S} implies
\begin{equation}
h_2(w) + w\ln 3 \ge \ln 2\,,
\end{equation}
from which $w>0.189$.
\end{proof}

On the other hand, pure states that are locally indistinguishable from the maximally mixed state do exist:
\begin{prop}[\!\!\cite{arnaud2013exploring}]
For any $n$ sufficiently large, there exists a pure state of $n$ qubits $\rho\in\mathcal{S}_{[n]}$ such that for any region $\Lambda\subset[n]$ of size $|\Lambda|\le\lfloor0.189\,n\rfloor$, the marginal of $\rho$ on $\Lambda$ is maximally mixed.
\end{prop}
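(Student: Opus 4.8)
The plan is to reduce the statement to a combinatorial existence result about stabilizer codes and then to establish that result by the probabilistic method, via a quantum Gilbert--Varshamov type counting argument. The key observation is that the required property — every marginal on a region of size at most $k:=\lfloor0.189\,n\rfloor$ is maximally mixed — is exactly the defining property of a $k$-uniform state, and that $k$-uniform \emph{stabilizer} states admit a transparent algebraic characterization that makes the counting tractable.

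First I would recall that for a stabilizer state with stabilizer group $S$ (a maximal abelian subgroup of the $n$-qubit Pauli group, of order $2^n$), the reduced state on a region $\Lambda$ is
$$\rho_\Lambda = \frac{1}{2^{|\Lambda|}}\sum_{g\in S_\Lambda} g\big|_\Lambda\,,$$
where $S_\Lambda$ collects the stabilizer elements supported entirely inside $\Lambda$. Hence $\rho_\Lambda=\mathbb{I}/2^{|\Lambda|}$ if and only if $S_\Lambda$ is trivial. Therefore it suffices to exhibit a single stabilizer group all of whose nonidentity elements have Pauli weight strictly greater than $k$: such a group produces a state that is $k$-uniform and proves the proposition.

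To construct such a group I would count in the symplectic $\mathbb{F}_2^{2n}$ picture, where stabilizer groups correspond to Lagrangian subspaces, of which there are $\prod_{j=1}^n(2^j+1)$, and where for any fixed nonzero vector the fraction of Lagrangians containing it equals $1/(2^n+1)\approx 2^{-n}$. Since the number of nonidentity Pauli operators of weight at most $k$ is $N_{\le k}=\sum_{j=1}^k\binom{n}{j}3^j$, the expected number of low-weight elements in a uniformly random stabilizer group is at most $N_{\le k}/(2^n+1)$, and a union bound shows a good group exists as soon as $N_{\le k}<2^n+1$. Using $N_{\le k}=2^{\,n\left(h_2(k/n)+(k/n)\log_2 3+o(1)\right)}$ with $h_2$ the binary entropy in bits, this holds for large $n$ precisely when $h_2(\delta)+\delta\log_2 3<1$ with $\delta=k/n$; solving the boundary equation $h_2(\delta)+\delta\log_2 3=1$ returns the threshold $\delta^*\approx0.189$, the very constant governing the earlier propositions of this section.

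The main obstacle I anticipate is not the asymptotics but keeping the counting clean and rigorous: identifying the bad event correctly as ``$S$ contains a nonidentity Pauli of weight $\le k$'', verifying that the probability a fixed Pauli lies in a random stabilizer group is exactly $1/(2^n+1)$ (so that phases and the passage between the Pauli group and its symplectic quotient are handled consistently, noting that the weight property is insensitive to the sign choices), and confirming that the subleading terms do not erode the constant $0.189$ once the floor $\lfloor0.189\,n\rfloor$ is taken. An alternative route via Haar-random pure states and Levy's concentration of measure would deliver only \emph{approximately} maximally mixed marginals; securing the \emph{exact} maximal mixing demanded here is exactly why the discrete, algebraic stabilizer construction is the natural choice.
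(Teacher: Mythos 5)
The paper gives no proof of this proposition at all — it is quoted directly from the cited reference \cite{arnaud2013exploring} — and your argument is essentially the standard one behind that result: exact $k$-uniformity of a stabilizer state is equivalent to its stabilizer group containing no nonidentity element of weight $\le k$, and a first-moment count over Lagrangian subspaces of $\mathbb{F}_2^{2n}$ (the quantum Gilbert--Varshamov bound at rate zero) yields existence whenever $h_2(\delta)+\delta\log_2 3<1$, whose threshold $\delta^*\approx0.1893$ strictly exceeds $0.189$. Your individual steps check out — the marginal formula $\rho_\Lambda=2^{-|\Lambda|}\sum_{g\in S_\Lambda}g|_\Lambda$, the count $\prod_{j=1}^{n}\left(2^j+1\right)$ of Lagrangians with exact fraction $1/\left(2^n+1\right)$ containing a fixed nonzero vector, the lifting of a Lagrangian to a sign-consistent stabilizer group, and the union bound — so the proposal is a correct, self-contained route to the statement.
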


The local quantum $W_1$ distance will capture the property of local distinguishability, and will take an exponentially small value for any two locally indistinguishable states.

\subsection{The local quantum norm and the local quantum \texorpdfstring{$W_1$}{W1} distance}
In analogy to \autoref{prop:W1L}, we wish to define the local quantum $W_1$ norm as the dual of a local quantum norm for observables.
Since we want the local quantum $W_1$ distance to capture the property of local distinguishability, we require the local quantum $W_1$ norm of a sum of operators acting on few qubits to be small.

We consider all the decompositions of an observable $H$ as a sum of local operators.
In analogy to \autoref{def:partial}, we define the dependence of any such decomposition on a qubit $x$ as the sum of the operator norm of each local operator that acts on $x$ weighted by a penalty $c_k$ that grows with the locality $k$ of the operator (\emph{i.e.}, $k$ is the number of qubits on which the operator acts).
We then define the local norm of such decomposition as the maximum dependence on a qubit, and the local norm of $H$ as the minimum local norm of all its possible decompositions:
\begin{defn}[Local quantum norm]\label{local norm}
Let $1 = c_1 \le \ldots \le c_n\,.$
For any $H\in\mathcal{O}_{[n]}$ we define
\begin{equation}\label{eq:loc}
\|H\|_{\mathrm{loc}} = 2\min\left\{\max_{x\in[n]}\sum_{\Lambda\ni x}c_{\left|\Lambda\right|}\left\|H_\Lambda\right\|_\infty : H = \sum_{\Lambda\subseteq[n]}H_\Lambda\,,\; H_\Lambda\in\mathcal{O}_\Lambda\right\}\,.
\end{equation}
\end{defn}

\begin{rem}
The local quantum norm depends on the choice of the penalties $\{c_k\}_{k\in\mathbb{N}}$.
Such norm is analog to the norm defined in \cite{bratteli2013operatorII,naaijkens2017quantum} in the context of quantum spin systems on infinite lattices to turn the set of interactions into a Banach space.
\cite{bratteli2013operatorII,naaijkens2017quantum} define an interaction through its decomposition as a sum of local operators, so their norm does not involve the minimization over the decompositions.
\cite{bratteli2013operatorII,naaijkens2017quantum} choose the penalties to grow exponentially with the size and eventually with the diameter of the region over which the operator acts, but at this stage we prefer to keep the freedom in the choice of the penalties.
\end{rem}

We can now define the local quantum $W_1$ norm as the dual of the local quantum norm:
\begin{defn}[Local quantum $W_1$ norm]
    We define the local quantum $W_1$ norm as the norm on $\mathcal{O}_{[n]}^T$ that is dual to the local norm on $\mathcal{O}_{[n]}$: For any $\Delta\in\mathcal{O}_{[n]}^T$,
    \begin{equation}\label{eq:defW1loc}
        \|\Delta\|_{W_1\mathrm{loc}} = \max\left\{\mathrm{Tr}\left[\Delta\,H\right] : H\in\mathcal{O}_{[n]}\,,\|H\|_{\mathrm{loc}}\le1\right\}\,.
    \end{equation}
\end{defn}

The local quantum $W_1$ norm can be computed with a linear program.
\eqref{eq:defW1loc} constitutes the dual program, while the primal program is provided by the following:
\begin{prop}
    For any $\Delta\in\mathcal{O}_{[n]}^T$ we have
    \begin{equation}\label{eq:Wloc}
        \|\Delta\|_{W_1\mathrm{loc}} = \min\left\{\sum_{x\in[n]}a_x : \frac{\left\|\mathrm{Tr}_{\Lambda^c}\Delta\right\|_1}{2\,c_{|\Lambda|}} \le \sum_{x\in\Lambda}a_x\;\forall\,\Lambda\subseteq[n]\right\}\,.
    \end{equation}
\end{prop}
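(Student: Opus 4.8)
The plan is to recognize \eqref{eq:Wloc} as an instance of strong linear-programming duality applied to the maximization \eqref{eq:defW1loc}, once the latter has been rewritten purely in terms of scalar variables. The first step is to unfold \autoref{local norm}. Since the local norm is itself a minimization over decompositions, the constraint $\|H\|_{\mathrm{loc}}\le 1$ holds if and only if there exists a decomposition $H=\sum_{\Lambda\subseteq[n]}H_\Lambda$ with $H_\Lambda\in\mathcal{O}_\Lambda$ and $\max_{x}\sum_{\Lambda\ni x}c_{|\Lambda|}\|H_\Lambda\|_\infty\le\tfrac12$. Because $\mathrm{Tr}[\Delta\,H]$ depends on the decomposition only through $H=\sum_\Lambda H_\Lambda$, I can therefore replace the maximization over $H$ in \eqref{eq:defW1loc} by a maximization directly over all such decompositions $\{H_\Lambda\}$, writing $\mathrm{Tr}[\Delta\,H]=\sum_\Lambda\mathrm{Tr}[\Delta\,H_\Lambda]$.

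The second step eliminates the ``directions'' of the local operators. Since $H_\Lambda$ acts only on $\Lambda$, the partial-trace identity gives $\mathrm{Tr}[\Delta\,H_\Lambda]=\mathrm{Tr}[(\mathrm{Tr}_{\Lambda^c}\Delta)\,H_\Lambda]$, and $\mathrm{Tr}_{\Lambda^c}\Delta$ is self-adjoint because $\Delta$ is. For fixed norms $u_\Lambda:=\|H_\Lambda\|_\infty$ the objective decouples across $\Lambda$, as the constraints involve only these norms, and the duality between $\|\cdot\|_\infty$ and $\|\cdot\|_1$ on self-adjoint operators yields $\max\{\mathrm{Tr}[(\mathrm{Tr}_{\Lambda^c}\Delta)\,H_\Lambda] : H_\Lambda=H_\Lambda^\dagger,\ \|H_\Lambda\|_\infty\le u_\Lambda\}=u_\Lambda\,\|\mathrm{Tr}_{\Lambda^c}\Delta\|_1$, attained by taking $H_\Lambda$ equal to $u_\Lambda$ times the sign operator of $\mathrm{Tr}_{\Lambda^c}\Delta$. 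The term $\Lambda=\varnothing$ contributes nothing since $\Delta$ is traceless. This reduces \eqref{eq:defW1loc} to the scalar linear program
\[
\|\Delta\|_{W_1\mathrm{loc}}=\max\Big\{\textstyle\sum_\Lambda u_\Lambda\,\|\mathrm{Tr}_{\Lambda^c}\Delta\|_1 : u_\Lambda\ge 0,\ \textstyle\sum_{\Lambda\ni x}c_{|\Lambda|}\,u_\Lambda\le\tfrac12\ \ \forall\,x\in[n]\Big\}.
\]

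The final step is to dualize this program. Introducing a nonnegative multiplier $y_x$ for each constraint $x\in[n]$, strong duality gives the value $\min\{\tfrac12\sum_x y_x : y_x\ge0,\ c_{|\Lambda|}\sum_{x\in\Lambda}y_x\ge\|\mathrm{Tr}_{\Lambda^c}\Delta\|_1\ \forall\,\Lambda\}$, and the substitution $a_x=\tfrac12\,y_x$ turns this into exactly the right-hand side of \eqref{eq:Wloc}; note that the explicit nonnegativity $a_x\ge0$ is automatically enforced by the singleton constraints $\Lambda=\{x\}$, using $c_1=1$ and $\|\mathrm{Tr}_{x^c}\Delta\|_1\ge0$, so it need not be imposed in \eqref{eq:Wloc}. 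The main point requiring care is the justification of strong duality with no gap: this holds because the primal is feasible (take all $u_\Lambda=0$) and bounded (each $u_\Lambda$ is capped through any constraint $x\in\Lambda$, and every nonempty $\Lambda$ contains some $x$), so for this finite-dimensional linear program both optima are attained and coincide. I expect the reduction to the scalar program and the verification that the norm-duality and decoupling steps are tight to be the only genuine obstacles, the remainder being the bookkeeping of the change of variables $a_x=\tfrac12 y_x$.
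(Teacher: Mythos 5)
Your proposal is correct and takes essentially the same route as the paper's proof: both rewrite \eqref{eq:defW1loc} as a maximization over decompositions $\{H_\Lambda\}$, use the duality between $\|\cdot\|_\infty$ and $\|\cdot\|_1$ to reduce it to a scalar linear program in the variables $\|H_\Lambda\|_\infty$, and then obtain \eqref{eq:Wloc} by linear-programming duality. The only difference is one of detail: you carry out the dualization explicitly (multipliers $y_x$, the substitution $a_x=\tfrac12 y_x$, the observation that $a_x\ge0$ follows from the singleton constraints) and verify feasibility and boundedness for strong duality, all of which the paper leaves implicit by simply declaring the scalar maximization to be the dual of \eqref{eq:Wloc}.
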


\begin{proof}
    We have
    \begin{align}\label{eq:Wloclp}
        \|\Delta\|_{W_1\mathrm{loc}} &= \max\left\{\sum_{\Lambda\subseteq[n]}\mathrm{Tr}\left[\Delta\,H_\Lambda\right] : H_\Lambda\in\mathcal{O}_\Lambda\,,\;2\sum_{\Lambda\ni x}c_{|\Lambda|}\left\|H_\Lambda\right\|_\infty \le 1\;\forall\,x\in[n]\right\}\nonumber\\
        &= \max\left\{\sum_{\Lambda\subseteq[n]}\left\|\mathrm{Tr}_{\Lambda^c}\Delta\right\|_1\left\|H_\Lambda\right\|_\infty : H_\Lambda\in\mathcal{O}_\Lambda\,,\;2\sum_{\Lambda\ni x}c_{|\Lambda|}\left\|H_\Lambda\right\|_\infty \le 1\;\forall\,x\in[n]\right\}\nonumber\\
        &= \max\left\{\sum_{\Lambda\subseteq[n]}\left\|\mathrm{Tr}_{\Lambda^c}\Delta\right\|_1 t_\Lambda : t_\Lambda\ge0\;\forall\,\Lambda\subseteq[n]\,,\;2\sum_{\Lambda\ni x}c_{|\Lambda|}\,t_\Lambda \le 1\;\forall\,x\in[n]\right\}\,.
    \end{align}
    The last maximization in \eqref{eq:Wloclp} is the dual program of the primal linear program \eqref{eq:Wloc}.
    The claim follows.
\end{proof}

\begin{rem}
For $n=1$, the local quantum $W_1$ norm coincides with one half times the trace norm.
\end{rem}

\subsection{Properties of the local quantum \texorpdfstring{$W_1$}{W1} distance}
In this subsection we prove some basic properties of the local quantum $W_1$ distance.

\begin{itemize}
\item The local quantum $W_1$ distance always lies between the trace distance divided by the maximum penalty and the quantum $W_1$ distance:
\begin{prop}\label{prop:Lloc}
    We have
    \begin{equation}
        \|\cdot\|_L \le \|\cdot\|_{\mathrm{loc}} \le 2\,c_n\,\|\cdot\|_\infty\,,\qquad \frac{\|\cdot\|_1}{2\,c_n} \le \|\cdot\|_{W_1\mathrm{loc}} \le \|\cdot\|_{W_1}\,.
    \end{equation}
\end{prop}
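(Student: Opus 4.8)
The plan is to prove the two inequalities between the observable norms first, namely $\|\cdot\|_L\le\|\cdot\|_{\mathrm{loc}}$ and $\|\cdot\|_{\mathrm{loc}}\le 2c_n\|\cdot\|_\infty$, and then to transfer them to the corresponding $W_1$-type norms by the general principle that an inequality between two (semi)norms reverses the inclusion of their unit balls and hence the order of the dual norms. The upper bound on $\|\cdot\|_{\mathrm{loc}}$ is immediate: I would insert the trivial decomposition $H_{[n]}=H$ and $H_\Lambda=0$ for $\Lambda\neq[n]$ into \eqref{eq:loc}. The only surviving term has $\Lambda=[n]$, which contains every qubit, so the maximum over $x$ equals $c_n\|H\|_\infty$, giving $\|H\|_{\mathrm{loc}}\le 2c_n\|H\|_\infty$.

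For the lower bound $\|H\|_L\le\|H\|_{\mathrm{loc}}$, I would fix a qubit $x$ and an arbitrary admissible decomposition $H=\sum_{\Lambda\subseteq[n]}H_\Lambda$, and then split it according to whether a region contains $x$, writing $H=\sum_{\Lambda\ni x}H_\Lambda+K$ with $K=\sum_{\Lambda\not\ni x}H_\Lambda$. Since every region in the second sum avoids $x$, we have $K\in\mathcal{O}_{x^c}$, so $K$ is an admissible competitor in \autoref{def:partial}, and
\[
\partial_x H \le 2\|H-K\|_\infty \le 2\sum_{\Lambda\ni x}\|H_\Lambda\|_\infty \le 2\sum_{\Lambda\ni x}c_{|\Lambda|}\|H_\Lambda\|_\infty\,,
\]
where the last step uses $c_{|\Lambda|}\ge c_1=1$ for every nonempty $\Lambda$. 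Bounding the right-hand side by $2\max_{x'}\sum_{\Lambda\ni x'}c_{|\Lambda|}\|H_\Lambda\|_\infty$ and then taking the maximum over $x$ shows $\|H\|_L$ is at most this quantity for every decomposition; minimizing over decompositions gives $\|H\|_L\le\|H\|_{\mathrm{loc}}$.

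Finally I would deduce the two $W_1$-side inequalities by duality, using \autoref{prop:W1L} and the definition \eqref{eq:defW1loc}, and noting that adding a multiple of the identity to $H$ changes neither the local norm nor $\tr[\Delta H]$ for traceless $\Delta$. From $\|H\|_L\le\|H\|_{\mathrm{loc}}$ the unit ball of $\|\cdot\|_{\mathrm{loc}}$ is contained in that of $\|\cdot\|_L$, hence for $\Delta\in\mathcal{O}_{[n]}^T$,
\[
\|\Delta\|_{W_1\mathrm{loc}} = \max_{\|H\|_{\mathrm{loc}}\le1}\tr[\Delta\,H] \le \max_{\|H\|_L\le1}\tr[\Delta\,H] = \|\Delta\|_{W_1}\,.
\]
Likewise, from $\|H\|_{\mathrm{loc}}\le 2c_n\|H\|_\infty$ any $H$ with $\|H\|_\infty\le 1/(2c_n)$ satisfies $\|H\|_{\mathrm{loc}}\le1$, so $\|\Delta\|_{W_1\mathrm{loc}}\ge\max_{\|H\|_\infty\le 1/(2c_n)}\tr[\Delta\,H]=\|\Delta\|_1/(2c_n)$, where the last equality is the standard duality between the operator and trace norms, attained at $H=\mathrm{sgn}\,\Delta$. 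None of these steps is a genuine obstacle; the only points needing a little care are keeping the order of the maximum over $x$ and the minimum over decompositions straight in the lower bound on $\|\cdot\|_{\mathrm{loc}}$, and checking that the remainder $K$ is really supported off the qubit $x$ so that it is admissible in \autoref{def:partial}.
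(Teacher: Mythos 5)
Your proposal is correct and follows essentially the same route as the paper: the trivial decomposition $H_{[n]}=H$ for the upper bound, the splitting of an arbitrary decomposition into regions containing $x$ and regions avoiding $x$ (with the remainder $K\in\mathcal{O}_{x^c}$ serving as the competitor in \autoref{def:partial}) for the lower bound, and duality for the $W_1$-side inequalities. The only difference is that you spell out the duality step and the admissibility of $K$ explicitly, which the paper leaves implicit.
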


\begin{proof}
Let $H\in\mathcal{O}_{[n]}$.
Choosing $H_{[n]} = H$ in \eqref{eq:loc} we get
\begin{equation}
    \|H\|_{\mathrm{loc}} \le 2\,c_n\left\|H\right\|_\infty\,.
\end{equation}

Let
\begin{equation}
    H = \sum_{\Lambda\subseteq[n]}H_\Lambda\,,\qquad H_\Lambda\in\mathcal{O}_\Lambda\,.
\end{equation}
We have for any $x\in[n]$
\begin{equation}
    \partial_x H \le 2\left\|\sum_{\Lambda\ni x}H_\Lambda\right\|_\infty \le 2\sum_{\Lambda\ni x}\left\|H_\Lambda\right\|_\infty \le 2\sum_{\Lambda\ni x}c_{|\Lambda|}\left\|H_\Lambda\right\|_\infty\,,
\end{equation}
therefore
\begin{equation}
    \|H\|_L = \max_{x\in[n]}\partial_x H \le 2\max_{x\in[n]}\sum_{\Lambda\ni x}c_{|\Lambda|}\left\|H_\Lambda\right\|_\infty
\end{equation}
and
\begin{equation}
    \|H\|_L \le \|H\|_{\mathrm{loc}}\,.
\end{equation}

The inequality
\begin{equation}
    \frac{\|\cdot\|_1}{2\,c_n} \le \|\cdot\|_{W_1\mathrm{loc}} \le \|\cdot\|_{W_1}
\end{equation}
follows by duality.
\end{proof}

\item The local quantum $W_1$ norm can be upper bounded by the trace norm of the partial traces:
\begin{prop}\label{prop:W1locUB}
    For any $\Delta\in\mathcal{O}_{[n]}^T$ we have
    \begin{equation}\label{eq:DeltaR}
        \|\Delta\|_{W_1\mathrm{loc}} \le \sum_{x\in[n]}\max_{\Lambda\ni x}\frac{\left\|\mathrm{Tr}_{\Lambda^c}\Delta\right\|_1}{2\left|\Lambda\right|c_{|\Lambda|}} \le n\max_{\Lambda\subseteq[n]}\frac{\left\|\mathrm{Tr}_{\Lambda^c}\Delta\right\|_1}{2\left|\Lambda\right|c_{|\Lambda|}}\,.
    \end{equation}
\end{prop}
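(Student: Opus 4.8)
The plan is to read off the bound directly from the primal linear program for the local quantum $W_1$ norm established immediately above, namely
\begin{equation}
\|\Delta\|_{W_1\mathrm{loc}} = \min\left\{\sum_{x\in[n]}a_x : \frac{\left\|\mathrm{Tr}_{\Lambda^c}\Delta\right\|_1}{2\,c_{|\Lambda|}} \le \sum_{x\in\Lambda}a_x\;\forall\,\Lambda\subseteq[n]\right\}\,.
\end{equation}
Because this is a minimization, \emph{any} feasible assignment of the variables $\{a_x\}_{x\in[n]}$ yields an upper bound on $\|\Delta\|_{W_1\mathrm{loc}}$ equal to its objective value $\sum_{x\in[n]}a_x$. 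The entire proof of the first inequality therefore reduces to exhibiting a single well-chosen feasible point.

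The candidate I would take is
\begin{equation}
a_x = \max_{\Lambda\ni x}\frac{\left\|\mathrm{Tr}_{\Lambda^c}\Delta\right\|_1}{2\left|\Lambda\right|c_{|\Lambda|}}\,,
\end{equation}
so that $\sum_{x\in[n]}a_x$ is precisely the middle expression in the claimed chain. To verify feasibility I would fix an arbitrary $\Lambda\subseteq[n]$ and observe that, for every $x\in\Lambda$, the region $\Lambda$ is itself admissible in the maximization defining $a_x$, whence $a_x\ge \frac{\left\|\mathrm{Tr}_{\Lambda^c}\Delta\right\|_1}{2\left|\Lambda\right|c_{|\Lambda|}}$. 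Summing this inequality over the $|\Lambda|$ elements $x\in\Lambda$ cancels the factor $|\Lambda|$ and reproduces exactly the required constraint $\sum_{x\in\Lambda}a_x\ge \frac{\left\|\mathrm{Tr}_{\Lambda^c}\Delta\right\|_1}{2\,c_{|\Lambda|}}$. Since this holds for every $\Lambda$, the point is feasible and the first inequality follows.

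The second inequality is then immediate: bounding each inner maximum over $\Lambda\ni x$ by the unrestricted maximum over all $\Lambda\subseteq[n]$ and summing over the $n$ qubits produces the factor $n$. I do not anticipate a genuine obstacle here; the only real design choice is the feasible point, and the factor $1/|\Lambda|$ in the definition of $a_x$ is exactly what is needed to make the per-qubit maxima assemble into the $\Lambda$-indexed constraints upon summation. As a minor bookkeeping point, the nonnegativity of the $a_x$ needed to respect the singleton constraints (recall $c_1=1$) holds automatically, since each $a_x$ is a maximum of nonnegative trace norms.
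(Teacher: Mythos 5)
Your proof is correct and is essentially identical to the paper's: the paper likewise obtains the bound by substituting $a_x = \max_{\Lambda\ni x}\frac{\left\|\mathrm{Tr}_{\Lambda^c}\Delta\right\|_1}{2\left|\Lambda\right|c_{|\Lambda|}}$ as a feasible point into the primal linear program \eqref{eq:Wloc}, leaving the feasibility check (which you spell out) implicit. No gaps; your verification that summing the per-qubit maxima over $x\in\Lambda$ cancels the factor $|\Lambda|$ is exactly the argument the paper intends.
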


\begin{proof}
    The claim follows by setting in \eqref{eq:Wloc}
    \begin{equation}
        a_x = \max_{\Lambda\ni x}\frac{\left\|\mathrm{Tr}_{\Lambda^c}\Delta\right\|_1}{2\left|\Lambda\right|c_{|\Lambda|}} \qquad \forall\,x\in[n]\,.
    \end{equation}
\end{proof}

\item As promised, the quantum $W_1$ distance between any two locally indistinguishable states is suppressed by the penalties:
\begin{cor}
Let $\rho,\,\sigma\in\mathcal{S}_{[n]}$ such that $\rho_\Lambda=\sigma_\Lambda$ for any region $\Lambda\subset[n]$ with size $|\Lambda| < k$.
    Then,
    \begin{equation}
        \left\|\rho - \sigma\right\|_{W_1\mathrm{loc}} \le \frac{n}{k\,c_k}\,.
    \end{equation}
\end{cor}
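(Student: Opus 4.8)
The plan is to read this off directly from \autoref{prop:W1locUB} applied to $\Delta = \rho - \sigma$, using the local-indistinguishability hypothesis to eliminate every small region from the maximization. First I would set $\Delta = \rho - \sigma \in \mathcal{O}_{[n]}^T$ and note that $\mathrm{Tr}_{\Lambda^c}\Delta = \rho_\Lambda - \sigma_\Lambda$. By hypothesis this partial trace vanishes for every $\Lambda$ with $|\Lambda| < k$, so in the second bound of \autoref{prop:W1locUB},
\[
\|\rho - \sigma\|_{W_1\mathrm{loc}} \le n \max_{\Lambda\subseteq[n]} \frac{\left\|\mathrm{Tr}_{\Lambda^c}\Delta\right\|_1}{2\,|\Lambda|\,c_{|\Lambda|}},
\]
the numerator is zero unless $|\Lambda| \ge k$. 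Hence the maximum is effectively taken only over the regions of size at least $k$.

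Next I would bound each surviving term. Since $\rho_\Lambda$ and $\sigma_\Lambda$ are quantum states, each has unit trace norm, so the triangle inequality gives $\left\|\mathrm{Tr}_{\Lambda^c}\Delta\right\|_1 = \left\|\rho_\Lambda - \sigma_\Lambda\right\|_1 \le 2$, and therefore for $|\Lambda| \ge k$
\[
\frac{\left\|\mathrm{Tr}_{\Lambda^c}\Delta\right\|_1}{2\,|\Lambda|\,c_{|\Lambda|}} \le \frac{1}{|\Lambda|\,c_{|\Lambda|}}.
\]
Finally, the assumption $1 = c_1 \le \cdots \le c_n$ makes the map $j \mapsto j\,c_j$ nondecreasing, so $|\Lambda|\,c_{|\Lambda|} \ge k\,c_k$ whenever $|\Lambda| \ge k$. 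Thus every surviving term is at most $1/(k\,c_k)$, and multiplying by $n$ yields $\left\|\rho - \sigma\right\|_{W_1\mathrm{loc}} \le n/(k\,c_k)$.

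I do not expect any genuine obstacle: this is essentially a one-line consequence of \autoref{prop:W1locUB}. The only point that needs a moment's care is observing that restricting the maximum to $|\Lambda| \ge k$ simultaneously caps the numerator (at most $2$, uniformly over states) and, through the monotonicity of $j \mapsto j\,c_j$, forces the denominator to be at least $2\,k\,c_k$; these two facts together pin down the uniform bound $1/(k\,c_k)$ on the surviving terms.
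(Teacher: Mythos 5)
Your proof is correct and follows essentially the same route as the paper's: both apply the second bound of \autoref{prop:W1locUB} to $\Delta = \rho-\sigma$, observe that the hypothesis kills all regions with $|\Lambda|<k$, bound $\left\|\rho_\Lambda-\sigma_\Lambda\right\|_1\le 2$, and use the monotonicity of $j\mapsto j\,c_j$ (from $c_1\le\cdots\le c_n$) to conclude the maximum over $|\Lambda|\ge k$ is $1/(k\,c_k)$. The only difference is that you spell out explicitly the steps the paper compresses into a single displayed chain of inequalities.
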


\begin{proof}
    We have from \autoref{prop:W1locUB}
    \begin{equation}
    \left\|\rho - \sigma\right\|_{W_1\mathrm{loc}} \le n\max_{\Lambda\subseteq[n]}\frac{\left\|\rho_\Lambda - \sigma_\Lambda\right\|_1}{2\left|\Lambda\right|c_{|\Lambda|}} \le n\max_{\Lambda\subseteq[n] : |\Lambda|\ge k}\frac{1}{\left|\Lambda\right|c_{|\Lambda|}} = \frac{n}{k\,c_k}\,.
    \end{equation}
    The claim follows.
\end{proof}

\item As the quantum $W_1$ distance, also the local quantum $W_1$ distance is superadditive with respect to the composition of quantum systems and additive with respect to the tensor product:
\begin{prop}\label{prop:tens}
    For any $\Delta\in\mathcal{O}_{m+n}^T$ we have, for any region $\Lambda$ of size $m$,
    \begin{equation}
        \|\Delta\|_{W_1\mathrm{loc}} \ge \|\mathrm{Tr}_{\Lambda^c}\Delta\|_{W_1\mathrm{loc}} + \|\mathrm{Tr}_{\Lambda}\Delta\|_{W_1\mathrm{loc}},
    \end{equation}
    and for any $\rho,\sigma\in\mathcal{S}_{[m+n]}$ we have
    \begin{equation}
        \|\rho-\sigma\|_{W_1\mathrm{loc}} \ge
        \|\rho_{\Lambda}-\sigma_{\Lambda}\|_{W_1\mathrm{loc}} +
        \|\rho_{\Lambda^c}-\sigma_{\Lambda^c}\|_{W_1\mathrm{loc}}.
    \end{equation}
    Moreover, equality is achieved when $\rho = \rho_1\otimes\rho_2$, $\sigma = \sigma_1\otimes\sigma_2$, $\rho_1, \sigma_1\in\mathcal{S}_{[m]}, \rho_2, \sigma_2\in\mathcal{S}_{[n]}$.
\end{prop}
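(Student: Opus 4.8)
The plan is to derive the two superadditivity inequalities directly from the dual characterisation \eqref{eq:defW1loc}, and then to promote them to equalities on product states by establishing a matching subadditivity bound. The mechanism behind superadditivity is that every qubit $x$ lies either in $\Lambda$ or in $\Lambda^c$ but never in both, so local operators supported inside $\Lambda$ and local operators supported inside $\Lambda^c$ never share a qubit in the penalty sum of \eqref{eq:loc}.

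For the operator inequality I would pick $H_\Lambda\in\mathcal{O}_\Lambda$ and $H_{\Lambda^c}\in\mathcal{O}_{\Lambda^c}$ attaining $\|\mathrm{Tr}_{\Lambda^c}\Delta\|_{W_1\mathrm{loc}}$ and $\|\mathrm{Tr}_\Lambda\Delta\|_{W_1\mathrm{loc}}$ with local norm at most $1$, fix optimal decompositions $H_\Lambda=\sum_{\Lambda'\subseteq\Lambda}(H_\Lambda)_{\Lambda'}$ and $H_{\Lambda^c}=\sum_{\Lambda''\subseteq\Lambda^c}(H_{\Lambda^c})_{\Lambda''}$, and test $\Delta$ against $H=H_\Lambda\otimes\mathbb{I}+\mathbb{I}\otimes H_{\Lambda^c}$. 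Concatenating the two decompositions, a qubit $x\in\Lambda$ receives penalties only from the regions $\Lambda'\ni x$ and a qubit $x\in\Lambda^c$ only from the regions $\Lambda''\ni x$, so the penalty sum at every qubit stays at most $\tfrac12$ and hence $\|H\|_{\mathrm{loc}}\le1$. Since $\mathrm{Tr}[\Delta\,(H_\Lambda\otimes\mathbb{I})]=\mathrm{Tr}[(\mathrm{Tr}_{\Lambda^c}\Delta)\,H_\Lambda]$ and similarly for the other term, \eqref{eq:defW1loc} gives $\|\Delta\|_{W_1\mathrm{loc}}\ge\mathrm{Tr}[\Delta\,H]=\|\mathrm{Tr}_{\Lambda^c}\Delta\|_{W_1\mathrm{loc}}+\|\mathrm{Tr}_\Lambda\Delta\|_{W_1\mathrm{loc}}$. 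Taking $\Delta=\rho-\sigma$ yields the state inequality immediately.

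For the product case I would combine this lower bound with the triangle inequality applied to $\rho_1\otimes\rho_2-\sigma_1\otimes\sigma_2=(\rho_1-\sigma_1)\otimes\rho_2+\sigma_1\otimes(\rho_2-\sigma_2)$, which reduces the matching upper bound to the lemma $\|\Delta_1\otimes\omega\|_{W_1\mathrm{loc}}\le\|\Delta_1\|_{W_1\mathrm{loc}}$ for every $\Delta_1\in\mathcal{O}_{[m]}^T$ and every state $\omega\in\mathcal{S}_{[n]}$; the two summands are instances of it (the second being the analogous bound with the fixed factor on the left, by the same argument). I expect this ancilla lemma to be the crux. I would prove it dually: for $H=\sum_{\Gamma\subseteq[m+n]}H_\Gamma$ with $\|H\|_{\mathrm{loc}}\le1$, set $\tilde H=\mathrm{Tr}_{[n]}[(\mathbb{I}_{[m]}\otimes\omega)\,H]\in\mathcal{O}_{[m]}$, so that $\mathrm{Tr}[(\Delta_1\otimes\omega)\,H]=\mathrm{Tr}[\Delta_1\,\tilde H]$. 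Writing $\Gamma_A=\Gamma\cap[m]$ and $\Gamma_B=\Gamma\cap[n]$, each block becomes $\mathrm{Tr}_{\Gamma_B}[(\mathbb{I}_{\Gamma_A}\otimes\omega^{\Gamma_B})\,H_\Gamma]\in\mathcal{O}_{\Gamma_A}$, with $\omega^{\Gamma_B}=\mathrm{Tr}_{[n]\setminus\Gamma_B}\omega$, and this furnishes a decomposition of $\tilde H$ into operators supported on the sets $\Gamma_A\subseteq[m]$. Two observations then give $\|\tilde H\|_{\mathrm{loc}}\le1$: first, the state-weighted partial trace $X\mapsto\mathrm{Tr}_{\Gamma_B}[(\mathbb{I}\otimes\omega^{\Gamma_B})X]$ is the Heisenberg adjoint of the channel $\tau\mapsto\tau\otimes\omega^{\Gamma_B}$, hence a unital completely positive map and an operator-norm contraction, so each block has norm at most $\|H_\Gamma\|_\infty$; second, $|\Gamma_A|\le|\Gamma|$ and the monotonicity $c_1\le\ldots\le c_n$ give $c_{|\Gamma_A|}\le c_{|\Gamma|}$. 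Hence for each $x\in[m]$ the penalty sum of the new decomposition is at most $\sum_{\Gamma\ni x}c_{|\Gamma|}\|H_\Gamma\|_\infty\le\tfrac12$, so $\mathrm{Tr}[\Delta_1\tilde H]\le\|\Delta_1\|_{W_1\mathrm{loc}}$; maximising over $H$ proves the lemma, and together with superadditivity it yields the asserted equality.
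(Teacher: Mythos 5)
Your proof is correct, and it takes a genuinely different route from the paper's: you work entirely in the dual picture \eqref{eq:defW1loc}, while the paper works with the primal linear program \eqref{eq:Wloc}. For superadditivity the paper notes that the sum of the two marginal norms equals the value of the linear program obtained from the one computing $\|\Delta\|_{W_1\mathrm{loc}}$ by keeping only the constraints indexed by $\Lambda\subseteq[m]$ or $\Lambda\subseteq\{m+1,\dots,m+n\}$, and dropping constraints can only lower the minimum; you instead glue optimal dual witnesses into $H_\Lambda\otimes\mathbb{I}+\mathbb{I}\otimes H_{\Lambda^c}$ and use disjointness of supports to keep every per-qubit penalty sum at most $\tfrac12$, hence $\|H\|_{\mathrm{loc}}\le1$ --- the standard witness-combination argument, with the factor-of-two bookkeeping handled correctly. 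For the equality case both proofs use the same splitting $(\rho_1-\sigma_1)\otimes\rho_2+\sigma_1\otimes(\rho_2-\sigma_2)$ plus a tensor lemma, but the lemmas and their proofs differ: the paper's \autoref{tensorlem} gives $\|\Delta_1\otimes\Delta_2\|_{W_1\mathrm{loc}}\le\|\Delta_1\|_{W_1\mathrm{loc}}\|\Delta_2\|_1$ by another LP manipulation, whereas you prove $\|\Delta_1\otimes\omega\|_{W_1\mathrm{loc}}\le\|\Delta_1\|_{W_1\mathrm{loc}}$ for a state $\omega$ by pushing the dual witness through the Heisenberg adjoint of the channel $\tau\mapsto\tau\otimes\omega$, using that this adjoint is unital and completely positive (hence an operator-norm contraction) together with the monotonicity $c_{|\Gamma\cap[m]|}\le c_{|\Gamma|}$. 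Your route is more conceptual, mirrors the classical Wasserstein argument, shows more generally that tensoring with a fixed state cannot increase the norm, and as a bonus sidesteps a small blemish in the paper, whose \autoref{tensorlem} is stated for traceless $\Delta_2$ but applied to $\Delta_2=\rho_2$, a state. The paper's route is more elementary once \eqref{eq:Wloc} is available: it needs no attainment of maxima and none of the minor bookkeeping your argument passes over silently (blocks with $\Gamma\cap[m]=\emptyset$ produce multiples of the identity, which pair to zero against the traceless $\Delta_1$; distinct $\Gamma$'s with the same $\Gamma\cap[m]$ should be merged into a single term of the decomposition, which only lowers penalty sums by the triangle inequality) --- all harmless, but worth stating explicitly.
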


\begin{proof}
    Without loss of generality, we can consider the case where $\Lambda$ is made by the first $m$ qubits.
    Using \eqref{eq:Wloc}, we can then write
    \begin{equation}
        \|\mathrm{Tr}_{m+1,\dots,m+n}\Delta\|_{W_1\mathrm{loc}} = \min\left\{\sum_{x\in[m]}a_x : \frac{\left\|\mathrm{Tr}_{\Lambda^c}\Delta\right\|_1}{2\,c_{|\Lambda|}} \le \sum_{x\in\Lambda}a_x\;\forall\,\Lambda\subseteq[m]\right\}
    \end{equation}
    and
    \begin{equation}
        \|\mathrm{Tr}_{1,\dots,m}\Delta\|_{W_1\mathrm{loc}} = \min\left\{\sum_{x=m+1}^{m+n}a_x : \frac{\left\|\mathrm{Tr}_{\Lambda^c}\Delta\right\|_1}{2\,c_{|\Lambda|}} \le \sum_{x\in\Lambda}a_x\;\forall\,\Lambda\subseteq[m+1,\dots,m+n]\right\}.
    \end{equation}
    Therefore,
    \begin{align}
        &\|\mathrm{Tr}_{1,\dots,m}\Delta\|_{W_1\mathrm{loc}} + \|\mathrm{Tr}_{m+1,\dots,m+n}\Delta\|_{W_1\mathrm{loc}}\nonumber\\
        &= \min\left\{\sum_{x\in[m+n]}a_x : \frac{\left\|\mathrm{Tr}_{\Lambda^c}\Delta\right\|_1}{2\,c_{|\Lambda|}} \le \sum_{x\in\Lambda}a_x\;\forall\,\Lambda\subseteq[n]\ \mathrm{or}\ \Lambda\subseteq[m+1,\dots,m+n]\right\}\nonumber\\
        &\le \min\left\{\sum_{x\in[m+n]}a_x : \frac{\left\|\mathrm{Tr}_{\Lambda^c}\Delta\right\|_1}{2\,c_{|\Lambda|}} \le \sum_{x\in\Lambda}a_x\;\forall\,\Lambda\subseteq[m+n]\right\}\nonumber\\
        &\le \|\Delta\|_{W_1\mathrm{loc}},
    \end{align}
    where the first inequality is a result of the first linear program being the same as in \eqref{eq:Wloc}, with less constraints, which proves the inequality part of the claim.

    For the equality case, we just showed
    \begin{equation}
        \|\rho_1\otimes\rho_2-\sigma_1\otimes\sigma_2\|_{W_1\mathrm{loc}} \ge
        \|\rho_1-\sigma_1\|_{W_1\mathrm{loc}} +
        \|\rho_2-\sigma_2\|_{W_1\mathrm{loc}}\,.
    \end{equation}
    We use \autoref{tensorlem} to prove the other inequality as such
    \begin{align}
        \|\rho_1\otimes\rho_2-\sigma_1\otimes\sigma_2\|_{W_1\mathrm{loc}} &\le \|(\rho_1-\sigma_1)\otimes\rho_2\|_{W_1\mathrm{loc}} + \|\sigma_1\otimes(\rho_2-\sigma_2)\|_{W_1\mathrm{loc}}\nonumber\\
        &\le \|\rho_1-\sigma_1\|_{W_1\mathrm{loc}} +
        \|\rho_2-\sigma_2\|_{W_1\mathrm{loc}}\,,
    \end{align}
    which concludes the proof.
\end{proof}

\item In particular, the local quantum $W_1$ distance recovers the Hamming distance for the states of the computational basis:
\begin{cor}
    For any $x,\,y\in\{0,1\}^n$,
    \begin{equation}
    \left\||x\rangle\langle x| - |y\rangle\langle y|\right\|_{W_1\mathrm{loc}} = h(x,y) = \left|\left\{i\in[n] : x_i\neq y_i\right\}\right|\,.
    \end{equation}
\end{cor}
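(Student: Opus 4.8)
The plan is to exploit the complete factorization of the computational basis states, $\ket{x}\bra{x} = \bigotimes_{i=1}^n \ket{x_i}\bra{x_i}$ and likewise for $y$, and to reduce the whole computation to the single-qubit case by means of the additivity of the local quantum $W_1$ distance on product states established in \autoref{prop:tens}. Writing $\rho = \ket{x}\bra{x}$ and $\sigma = \ket{y}\bra{y}$ as products $\rho = \bigotimes_i \rho_i$, $\sigma = \bigotimes_i \sigma_i$ with $\rho_i = \ket{x_i}\bra{x_i}$ and $\sigma_i = \ket{y_i}\bra{y_i}$, one application of the equality case of \autoref{prop:tens} peels off a single qubit,
\begin{equation}
\|\rho - \sigma\|_{W_1\mathrm{loc}} = \|\rho_1 - \sigma_1\|_{W_1\mathrm{loc}} + \Big\| \textstyle\bigotimes_{i\ge 2}\rho_i - \bigotimes_{i\ge 2}\sigma_i \Big\|_{W_1\mathrm{loc}}\,,
\end{equation}
and an induction on $n$ (at each step both remaining tensor factors are still genuine product states, so the equality case keeps applying) yields
\begin{equation}
\|\rho - \sigma\|_{W_1\mathrm{loc}} = \sum_{i=1}^n \big\| \ket{x_i}\bra{x_i} - \ket{y_i}\bra{y_i} \big\|_{W_1\mathrm{loc}}\,.
\end{equation}

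Next I would evaluate each single-qubit summand using the remark that for $n=1$ the local quantum $W_1$ norm equals one half of the trace norm. If $x_i = y_i$ the term vanishes, while if $x_i \neq y_i$ then $\{\ket{x_i},\ket{y_i}\} = \{\ket{0},\ket{1}\}$ and the difference $\ket{x_i}\bra{x_i}-\ket{y_i}\bra{y_i}$ has eigenvalues $\pm 1$, so its trace norm is $2$ and the summand equals $1$. Summing over $i$ leaves exactly $|\{i : x_i \neq y_i\}| = h(x,y)$, which is the claim.

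The result is essentially immediate once \autoref{prop:tens} is available, so there is no serious obstacle; the only point requiring a line of care is the passage from the two-factor equality to the full $n$-fold factorization, handled by the induction indicated above. As a self-contained alternative that bypasses both the induction and the $n=1$ remark, one can verify the claim directly from the linear program \eqref{eq:Wloc}. Indeed, with $D := \{i : x_i \neq y_i\}$ one has $\mathrm{Tr}_{\Lambda^c}(\rho-\sigma) = \ket{x_\Lambda}\bra{x_\Lambda} - \ket{y_\Lambda}\bra{y_\Lambda}$, whose trace norm is $2$ when $\Lambda\cap D\neq\emptyset$ (the two restricted strings are then orthogonal) and $0$ otherwise. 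The singleton constraints $\Lambda=\{i\}$ therefore force $a_i \ge 1$ for $i \in D$ and $a_i \ge 0$ for $i \notin D$, giving the lower bound $\sum_i a_i \ge |D|$, while the feasible choice $a_i = \mathbf{1}[i\in D]$ satisfies every constraint (since $c_{|\Lambda|}\ge c_1 = 1$ makes $\sum_{x\in\Lambda}a_x = |\Lambda\cap D| \ge 1 \ge c_{|\Lambda|}^{-1}$ whenever $\Lambda$ meets $D$) and attains $\sum_i a_i = |D|$, furnishing the matching upper bound.
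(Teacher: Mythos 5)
Your main argument is exactly the paper's proof: apply the equality case of \autoref{prop:tens} to reduce to single qubits (the paper does the $n$-fold splitting in one step, you spell out the induction, which is a harmless elaboration), then use the $n=1$ remark identifying $\|\cdot\|_{W_1\mathrm{loc}}$ with half the trace norm, so each differing bit contributes $1$. That part is correct and needs no further comment.

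Your self-contained alternative via the linear program \eqref{eq:Wloc} is also correct and is a genuinely different route worth noting. The key observations all check out: $\mathrm{Tr}_{\Lambda^c}(\rho-\sigma)$ has trace norm $2$ precisely when $\Lambda$ meets $D=\{i: x_i\neq y_i\}$ and $0$ otherwise; the singleton constraints force $a_i\ge 1$ on $D$ and $a_i\ge 0$ off $D$, giving the lower bound $|D|$; and the indicator assignment $a_i=\mathbf{1}[i\in D]$ is feasible because $c_{|\Lambda|}\ge c_1=1$, giving the matching upper bound. What this buys is independence from \autoref{prop:tens} (and hence from \autoref{tensorlem} in the appendix, on which the equality case of that proposition rests), at the cost of being special to computational-basis states, whereas the paper's route exposes the general additivity structure that makes the corollary an instance of a broader fact.
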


\begin{proof}
    We have from \autoref{prop:tens}
    \begin{align}
       \left\||x\rangle\langle x| - |y\rangle\langle y|\right\|_{W_1\mathrm{loc}} &= \sum_{i\in[n]} \left\||x_i\rangle\langle x_i| - |y_i\rangle\langle y_i|\right\|_{W_1\mathrm{loc}} = \frac{1}{2}\sum_{i\in[n]} \left\||x_i\rangle\langle x_i| - |y_i\rangle\langle y_i|\right\|_1\nonumber\\
       &= h(x,y)\,.
    \end{align}
\end{proof}

\item The local quantum $W_1$ norm is contracting with respect to the action of single-qubit quantum channels:
\begin{prop}
    For any $\Delta\in\mathcal{O}_{n}^T$, $\Phi$ a quantum channel acting on a single qubit, we have
    \begin{equation}
        \|\Phi(\Delta)\|_{W_1\mathrm{loc}} \le \|\Delta\|_{W_1\mathrm{loc}}.
    \end{equation}
\end{prop}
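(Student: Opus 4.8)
The plan is to pass to the dual (Heisenberg) picture and reduce the statement to a contractivity property of the \emph{local norm} under the adjoint channel. Using \eqref{eq:defW1loc} I write $\|\Phi(\Delta)\|_{W_1\mathrm{loc}} = \max\{\mathrm{Tr}[\Phi(\Delta)\,H] : \|H\|_{\mathrm{loc}}\le1\}$ and move $\Phi$ onto the observable via the Hilbert--Schmidt inner product, $\mathrm{Tr}[\Phi(\Delta)\,H] = \mathrm{Tr}[\Delta\,\Phi^*(H)]$, where $\Phi^*$ is the adjoint of $\Phi$ extended by the identity to the qubits on which $\Phi$ does not act. Since $\Phi$ is trace preserving and completely positive, $\Phi^*$ is unital and completely positive. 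If I can show $\|\Phi^*(H)\|_{\mathrm{loc}}\le\|H\|_{\mathrm{loc}}$, then $\Phi^*(H)$ is feasible for the maximization defining $\|\Delta\|_{W_1\mathrm{loc}}$ whenever $H$ is feasible, so taking the maximum over $H$ yields the claim. (One first notes $\mathrm{Tr}[\Phi(\Delta)]=\mathrm{Tr}[\Delta]=0$, so $\Phi(\Delta)\in\mathcal{O}_{[n]}^T$ and the left-hand side is well defined.)

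To prove $\|\Phi^*(H)\|_{\mathrm{loc}}\le\|H\|_{\mathrm{loc}}$, I take an optimal decomposition $H=\sum_{\Lambda\subseteq[n]}H_\Lambda$ with $H_\Lambda\in\mathcal{O}_\Lambda$ achieving the minimum in \eqref{eq:loc}, and apply $\Phi^*$ termwise to obtain $\Phi^*(H)=\sum_\Lambda\Phi^*(H_\Lambda)$. Let $y$ be the qubit on which $\Phi$ acts. The key structural observation is that $\Phi^*$ preserves supports: if $y\notin\Lambda$, then $\Phi^*$ acts as the identity on $H_\Lambda$ (by unitality, $\Phi^*(\mathbb{I}_y)=\mathbb{I}_y$), so $\Phi^*(H_\Lambda)=H_\Lambda\in\mathcal{O}_\Lambda$; if $y\in\Lambda$, then $\Phi^*(H_\Lambda)$ still acts as the identity outside $\Lambda$, so again $\Phi^*(H_\Lambda)\in\mathcal{O}_\Lambda$. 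Hence $\{\Phi^*(H_\Lambda)\}_\Lambda$ is an admissible decomposition of $\Phi^*(H)$ indexed by the same regions.

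The remaining ingredient is that a single-qubit channel does not increase the operator norm of the local pieces, $\|\Phi^*(H_\Lambda)\|_\infty\le\|H_\Lambda\|_\infty$. This holds because $\Phi^*$ (extended by the identity) is unital and positive, and any unital positive map is a contraction for the operator norm on self-adjoint elements: from $-\|H_\Lambda\|_\infty\,\mathbb{I}\le H_\Lambda\le\|H_\Lambda\|_\infty\,\mathbb{I}$ one gets $-\|H_\Lambda\|_\infty\,\mathbb{I}\le\Phi^*(H_\Lambda)\le\|H_\Lambda\|_\infty\,\mathbb{I}$. Combining the two facts,
\begin{equation}
\|\Phi^*(H)\|_{\mathrm{loc}} \le 2\max_{x\in[n]}\sum_{\Lambda\ni x}c_{|\Lambda|}\left\|\Phi^*(H_\Lambda)\right\|_\infty \le 2\max_{x\in[n]}\sum_{\Lambda\ni x}c_{|\Lambda|}\left\|H_\Lambda\right\|_\infty = \|H\|_{\mathrm{loc}},
\end{equation}
where the first inequality uses that $\{\Phi^*(H_\Lambda)\}$ is just one admissible decomposition (so the minimum in \eqref{eq:loc} can only be smaller) and the last equality uses optimality of $\{H_\Lambda\}$.

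I expect the only subtle point to be the support-preservation step: one must verify that applying a channel to a single qubit sends an operator supported on $\Lambda$ to an operator still supported on $\Lambda$, both when the active qubit lies inside $\Lambda$ and when it lies outside, the latter case being exactly where unitality of $\Phi^*$ guarantees invariance. Everything else is a routine combination of Hilbert--Schmidt duality and the contractivity of unital positive maps, so the argument avoids the primal linear program \eqref{eq:Wloc} altogether.
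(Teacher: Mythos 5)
Your proof is correct, but it runs in the opposite direction from the paper's. The paper works entirely in the primal picture: it uses the linear-program characterization \eqref{eq:Wloc}, which expresses $\|\cdot\|_{W_1\mathrm{loc}}$ through the trace norms of the marginals $\mathrm{Tr}_{\Lambda^c}\Delta$, and then checks region by region that each constraint can only become weaker after applying $\Phi$ — either the active qubit is traced out (so the marginal is unchanged, by trace preservation) or it is not (so the marginal is $\Phi$ applied to the old marginal, and its trace norm shrinks by data processing). You instead work in the dual (Heisenberg) picture: you reduce everything to the standalone statement $\|\Phi^*(H)\|_{\mathrm{loc}}\le\|H\|_{\mathrm{loc}}$ for the unital CP adjoint, proved by pushing $\Phi^*$ through an optimal local decomposition (support preservation, with unitality handling the case $y\notin\Lambda$) and using operator-norm contractivity of unital positive maps on self-adjoint elements. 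Each route has its merits: the paper's argument is shorter once \eqref{eq:Wloc} is in hand and needs only trace-norm contraction, whereas yours bypasses the primal LP (and hence the duality step behind it) and isolates a dual fact of independent interest — the local quantum norm itself is monotone under single-qubit unital CP maps — which is exactly the observable-side counterpart of the proposition and extends verbatim to tensor products of single-qubit channels. Two small points worth making explicit if you write this up: the order-preservation step $-\|H_\Lambda\|_\infty\,\mathbb{I}\le\Phi^*(H_\Lambda)\le\|H_\Lambda\|_\infty\,\mathbb{I}$ requires positivity of the \emph{extended} map $\mathrm{id}\otimes\Phi^*$, which is where complete positivity (not mere positivity) of $\Phi^*$ is used; and $\Phi^*(H)$ is self-adjoint because $\Phi^*$ is Hermiticity-preserving, so it is indeed a feasible point of the maximization \eqref{eq:defW1loc}.
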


\begin{proof}
    Using \eqref{eq:Wloc}, we can write
    \begin{equation}
        \|\Phi(\Delta)\|_{W_1\mathrm{loc}} = \min\left\{\sum_{x\in[n]}a_x : \frac{\left\|\mathrm{Tr}_{\Lambda^c}\Phi(\Delta)\right\|_1}{2\,c_{|\Lambda|}} \le \sum_{x\in\Lambda}a_x\;\forall\,\Lambda\subseteq[n]\right\}.
    \end{equation}
    Without loss of generality, $\Phi$ acts on qubit $i$, and there are two cases:
    \begin{enumerate}
        \item If $i\in\Lambda^c$, $\left\|\mathrm{Tr}_{\Lambda^c}\Phi(\Delta)\right\|_1 = \left\|\mathrm{Tr}_{\Lambda^c}\Delta\right\|_1$.
        \item Otherwise, $\left\|\mathrm{Tr}_{\Lambda^c}\Phi(\Delta)\right\|_1 = \left\|\Phi(\mathrm{Tr}_{\Lambda^c}\Delta)\right\|_1 \le \left\|\mathrm{Tr}_{\Lambda^c}\Delta\right\|_1$, since $\Phi$ is a trace-preserving operation.
    \end{enumerate}
    Therefore,
    \begin{equation}
        \|\Phi(\Delta)\|_{W_1\mathrm{loc}} \le \min\left\{\sum_{x\in[n]}a_x : \frac{\left\|\mathrm{Tr}_{\Lambda^c}\Delta\right\|_1}{2\,c_{|\Lambda|}} \le \sum_{x\in\Lambda}a_x\;\forall\,\Lambda\subseteq[n]\right\} = \|\Delta\|_{W_1\mathrm{loc}}.
    \end{equation}
\end{proof}
\end{itemize}

\section{Classical shadows}\label{sec:ClassicalShadows}

The \emph{classical shadow} is a notion introduced to formulate tomographic protocols for extracting information from unknown quantum states with very few measurements (see \autoref{app:shadows} for further details and related works). Let $\mU$ be an ensemble of unitary operators over a $n$-qubit Hilbert space (i.e. any $U\in\mU$ has a statistical weight attached). 
\begin{defn}
$\mU$ is said to be {\bf tomographically complete} if for each $\rho\not = \sigma$ there are $U\in \mU$ and $\ket b$ element of the computational basis $\{\ket b \, :\, b\in\{0,1\}^n\}$ such that:
\beq
\langle b| U\rho U^\dagger |b\rangle \not = \langle b| U\sigma U^\dagger |b\rangle.
\eeq
\end{defn}

\noindent
This definition requires that if two states are different, there is always an evolution $U\in\mU$ such that the evolved states can be distinguished by a measurement in the computational basis.  

Given a tomographically complete ensemble $\mU$ and an unknown state $\rho$ acting on $(\bC^2)^{\otimes n}$, consider the following elementary protocol \cite{huang2020predicting}: 
\\
\\
1. Sample $U\in\mU$ and evolve the state $\rho$;\\
2. Measure the state $U\rho U^\dagger$ in the computational basis.\\
3. Obtained the outcome $\hat b\in\{0,1\}^n$, apply the inverse evolution to ${|\hat b\rangle}$ obtaining $U^\dagger |\hat{b}\rangle\langle{ \hat b}| U$ which can be saved as classical information.
\\
\\
Repeating the three steps above, one can consider the density matrix given by the expectation over $\mU$ and over the possible outcomes: 
\beq
\bE_{U, \hat b} [U^\dagger |\hat{b}\rangle\langle{ \hat b}| U    ]=\bE_{U\sim \mU} \left[\sum_{b\in\{0,1\}^n} \langle b|U\rho U^\dagger|b\rangle \,\,U^\dagger \ket{b}\bra{ b} U\right]=: \mM(\rho),
\eeq
The linear map $\mM$ is completely positive and trace preserving, therefore it is a quantum channel.
Moreover, the requirement of tomographic completeness implies that $\mathcal{M}$ is invertible, though the inverse is in general not completely positive. The {\bf classical shadow} of $\rho$ is defined by:
\beq
\hat \rho:=\mM^{-1}\left(U^{\dagger} |\hat b\rangle\langle\hat b| U\right),
\eeq
it depends on the choice of $\mU$ and it is obtained by a single measurement on $\rho$. By construction, $\bE_{U,\hat b}[\hat \rho]=\rho$. However, $\mM^{-1}$ is not a quantum channel in general, so the classical shadow is computed classically and stored as classical information, as it may not be a quantum state.

Classical shadows can be used to predict expectation values of given observables $O_1,...,O_M$ on the unknown state $\rho$. In fact, the classical shadows define the following random variables:
\beq
\hat o_i:=\tr(O_i \hat \rho)\qquad \forall i\in[M],
\eeq
with the nice property:
\beq
\bE_{U,\hat b}[\hat o_i]=\tr(O_i\rho)\qquad \forall i\in[M].
\eeq

\begin{lem}[\!\!\cite{huang2020predicting}]\label{Var}
Let $O$ be an observable of the $n$-qubit system. The fluctuations of the random variable $\hat o=\tr(O\hat \rho)$ around $\langle O\rangle_{\rho}$ are described by the variance:
\beq
Var[\hat o]= \bE_{U, \hat b}\left[\left(\hat o -\bE_{U,\hat b}[\hat o]\right)^2\right]\leq \left\| O-\frac{\mathrm{Tr}[O]}{2^n}\bI\right\|^2_{sh},
\eeq
where the shadow norm is defined by:
\begin{align}
\| O\|_{sh} &:= \max_{\sigma}\left( \bE_{U\sim \mU}\left[\sum_b  \langle b|U\sigma U^\dagger|b\rangle \,\,{\langle b|U\mM^{-1}(O) U^\dagger \ket{b}}^2\right]\right)^{\frac{1}{2}}\nonumber\\
&= \left\|\bE_{U\sim \mU}\sum_b {\langle b|U\mM^{-1}(O) U^\dagger \ket{b}}^2 \;  U^\dagger|b\rangle\langle b|U\right\|_\infty^\frac{1}{2}.
\end{align}

\end{lem}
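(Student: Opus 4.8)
The plan is to reduce to a traceless observable, bound the variance by the second moment, and then recognize that second moment as exactly the quantity maximized in the shadow norm. Throughout I would use the two structural facts about $\mM$: that it is trace preserving, and that it is self-adjoint with respect to the Hilbert--Schmidt inner product.

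First I would exploit that $\mM^{-1}$ is trace preserving. Since $\mM$ is a trace-preserving channel and invertible, $\tr[\mM^{-1}(Y)]=\tr[Y]$ for all $Y$, so $\tr[\hat\rho]=\tr[U^\dagger|\hat b\rangle\langle\hat b|U]=1$ for \emph{every} realization of $(U,\hat b)$. Writing $O_0 = O - \frac{\tr[O]}{2^n}\bI$, this gives $\hat o = \tr[O_0\hat\rho] + \frac{\tr[O]}{2^n}$, where the second term is a deterministic constant. Since the variance is invariant under deterministic shifts, $\mathrm{Var}[\hat o] = \mathrm{Var}[\tr[O_0\hat\rho]]$, which I would then bound by the second moment, $\mathrm{Var}[\tr[O_0\hat\rho]]\le\bE_{U,\hat b}[\tr[O_0\hat\rho]^2]$.

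Next I would move $\mM^{-1}$ from the state onto the observable. The key observation is that $\mM$ is self-adjoint for the Hilbert--Schmidt inner product: a direct computation gives
\begin{equation}
\tr[A\,\mM(B)] = \bE_{U\sim\mU}\left[\sum_b \langle b|UBU^\dagger|b\rangle\,\langle b|UAU^\dagger|b\rangle\right],
\end{equation}
which is symmetric in $A$ and $B$, so $\mM$ and hence $\mM^{-1}$ are self-adjoint. Therefore $\tr[O_0\hat\rho] = \tr[\mM^{-1}(O_0)\,U^\dagger|\hat b\rangle\langle\hat b|U] = \langle\hat b|U\mM^{-1}(O_0)U^\dagger|\hat b\rangle$. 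Taking the expectation over the outcome $\hat b$, which equals $b$ with Born probability $\langle b|U\rho U^\dagger|b\rangle$, I obtain
\begin{equation}
\bE_{U,\hat b}[\tr[O_0\hat\rho]^2] = \bE_{U\sim\mU}\left[\sum_b \langle b|U\rho U^\dagger|b\rangle\,\langle b|U\mM^{-1}(O_0)U^\dagger|b\rangle^2\right].
\end{equation}

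Finally I would compare with the shadow norm. The right-hand side above is precisely the expression inside $\|O_0\|_{sh}^2$ evaluated at the particular state $\rho$; relaxing $\rho$ to the maximizing $\sigma$ yields $\bE_{U,\hat b}[\tr[O_0\hat\rho]^2]\le\|O_0\|_{sh}^2 = \|O - \frac{\tr[O]}{2^n}\bI\|_{sh}^2$, which chains with the earlier inequalities to give the claim. I would also verify the second representation of the shadow norm by writing $\langle b|U\sigma U^\dagger|b\rangle = \tr[\sigma\,U^\dagger|b\rangle\langle b|U]$, so that maximizing $\tr[\sigma A]$ over states $\sigma$ with $A = \bE_{U\sim\mU}\sum_b \langle b|U\mM^{-1}(O)U^\dagger|b\rangle^2\,U^\dagger|b\rangle\langle b|U\ge 0$ returns the largest eigenvalue $\|A\|_\infty$. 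The only genuinely delicate step is the self-adjointness argument that lets $\mM^{-1}$ act on the observable rather than on the (generically non-physical) shadow state; the remaining ingredients are the shift-invariance of the variance and the state-to-supremum relaxation.
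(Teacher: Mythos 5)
Your proof is correct. The paper itself states this lemma without proof, citing \cite{huang2020predicting}, and your argument is precisely the standard one from that reference: reduction to the traceless part $O_0$ via trace preservation of $\mM^{-1}$, self-adjointness of $\mM$ with respect to the Hilbert--Schmidt inner product to move $\mM^{-1}$ onto the observable, bounding the variance by the second moment, and relaxing the true state $\rho$ to the maximizing $\sigma$ in the definition of $\|\cdot\|_{sh}$.
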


For any observable $O$, a classical shadow predicts $\tr(O\rho)$ in expectation and the characterization of the variance above allows to boost the convergence providing a good approximation with few measurement processes. Consider $K$ collections of $N$ classical shadows and take the median of the $K$ empirical means over $N$ as an estimator: 
\beq
\hat o(N, K):= \mbox{median}\{\hat o^{(1)}(N,1),...,\hat o^{(K)}(N,1)\},
\eeq
$$
\hat o^{(k)}(N,1)=\frac{1}{N}\sum_{j=N(k-1)+1}^{Nk} \tr(O\hat \rho_j)\qquad k\in[K]
$$

\begin{thm}[\!\!\cite{huang2020predicting}]\label{th.conv}
Given an ensemble $\mU$ of unitaries and observables $O_1,...,O_M$ on an $n$-qubit Hilbert space, let $\epsilon,\delta\in[0,1]$ and set the values:
$$K=2\log\left(\frac{2M}{\delta}\right),$$
$$N=\frac{34}{\epsilon^2}\max_i \left\| O_i-\frac{\mathrm{Tr}(O_i)}{2^n}\bI\right\|^2_{sh}.$$
 Then:
 $$|\hat o_i(N,K)-\mathrm{Tr}(O_i\rho)|\leq \epsilon \qquad\forall i\in[M]$$
with probability at most $1-\delta$.
\end{thm}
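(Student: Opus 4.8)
The plan is to carry out the standard \emph{median-of-means} analysis, using \autoref{Var} to control the variance of a single classical-shadow estimate. Write $\mu_i := \tr(O_i\rho)$ for the quantity to be estimated and $\sigma_i^2 := \left\|O_i - \frac{\tr(O_i)}{2^n}\bI\right\|_{sh}^2$ for the corresponding shadow-norm bound. Each single-shot estimate $\tr(O_i\hat\rho_j)$ is unbiased for $\mu_i$ by the identity $\bE_{U,\hat b}[\hat\rho] = \rho$ established above, and by \autoref{Var} it has variance at most $\sigma_i^2$. Since the shadows $\hat\rho_j$ are i.i.d., I first record that each block average $\hat o_i^{(k)}(N,1)$, being the mean of $N$ independent copies, is unbiased with variance at most $\sigma_i^2/N$.

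Next I would establish concentration of a single block. By the choice $N = \frac{34}{\epsilon^2}\max_j\sigma_j^2 \ge \frac{34\,\sigma_i^2}{\epsilon^2}$, the variance of $\hat o_i^{(k)}(N,1)$ is at most $\epsilon^2/34$, so Chebyshev's inequality gives
\beq
\Pr\!\left[\left|\hat o_i^{(k)}(N,1) - \mu_i\right| \ge \epsilon\right] \le \frac{1}{34} < \frac14 .
\eeq
Thus each block is accurate with probability at least $33/34$, comfortably above $1/2$.

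Then I would amplify this via the median. The median $\hat o_i(N,K)$ deviates from $\mu_i$ by more than $\epsilon$ only if at least half of the $K$ blocks do; since the blocks use disjoint sets of shadows they are independent, so the number of \emph{bad} blocks is stochastically dominated by a $\mathrm{Binomial}(K,1/34)$ variable whose mean lies far below $K/2$. A Chernoff bound on the upper tail of this binomial then yields a failure probability of the form $e^{-\Omega(K)}$, and the constants $34$ and $K = 2\log(2M/\delta)$ are tuned precisely so that this is at most $\delta/(2M)$ for each fixed $i$. A union bound over the $M$ observables finally gives
\beq
\Pr\!\left[\exists\, i\in[M]:\left|\hat o_i(N,K) - \mu_i\right| > \epsilon\right] \le M\cdot\frac{\delta}{2M} = \frac{\delta}{2}\le\delta,
\eeq
so that all estimates are simultaneously $\epsilon$-accurate with probability at least $1-\delta$. (The statement's ``at most $1-\delta$'' should read ``at least $1-\delta$''.)

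I expect the only real work to be this median-amplification step: converting the constant per-block failure probability into the exponential-in-$K$ bound and checking that the specific constants close the union bound cleanly. Everything else --- unbiasedness, the variance/Chebyshev estimate, and the independence of the blocks --- is routine, and the crucial variance input is supplied directly by \autoref{Var}. Since the result is quoted from \cite{huang2020predicting}, the proof merely reproduces this classical median-of-means argument.
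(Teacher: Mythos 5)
Your proof is correct and is precisely the argument the paper points to: the paper offers no proof of its own for this theorem (it is quoted from \cite{huang2020predicting} with only the remark that it follows from standard properties of the median-of-means estimator), and your chain --- unbiasedness, the variance bound from \autoref{Var}, Chebyshev on each block, Chernoff amplification of the median, and a union bound over the $M$ observables --- is exactly that standard argument. Two minor notes: closing the stated constants requires the multiplicative (KL) form of the Chernoff bound, since Hoeffding's quadratic form only gives exponent $2K\left(1/2-1/34\right)^2\approx 0.44\,K<K/2$; and you are right that ``at most $1-\delta$'' in the statement is a typo for ``at least $1-\delta$''.
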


The proof of \autoref{th.conv} is based on standard properties of the estimator \emph{median of means}. Since each classical shadow results from a single measurement on $\rho$, the total number of measurements required to estimate the expectation values $\tr(O_i\rho)$ up to error $\epsilon$ is:
\beq
N= O\left(\frac{\log M}{\epsilon^2}\max_i \left\| O_i-\frac{\mathrm{Tr}(O_i)}{2^n}\bI\right\|^2_{sh}\right).
\eeq
The sample complexity is logarithmic in the number of observables we consider and does not depend on the number of qubits. However, there is a dependence on the chosen ensemble $\mU$ via the shadow norm.

\begin{defn}
A {\bf Pauli measurement primitive} is a tomographically complete ensemble $\mathcal U$ such that any $U\in\mathcal U$ is a tensor product $U=U_1\otimes\cdots \otimes U_n$ of randomly selected 1-qubit Clifford gates $U_i\in\mbox{Cl}(2)$. Equivalently, a random Pauli matrix is measured on each qubit.
\end{defn}

Let us focus on the case where classical shadows are constructed applying a Pauli measurement primitive and let us consider the case of $k$-local observables, i.e. $O$ is given by an elementary tensor product supported on $k$ qubits like $O=O_1\otimes\cdots\otimes O_k\otimes \bI^{n-k}$ for instance. According to \cite[Proposition S2 and Lemma S3]{huang2020predicting}, we can state the following result.

\begin{prop}\label{Pauli_shadows}
Let $\rho$ be a $n$-qubit quantum states. The classical shadow of $\rho$ constructed out from a Pauli measurement primitive is:
\beq\label{CS_product}
\hat\rho=\bigotimes_{i=1}^n (3U_i^\dagger \ket{\hat b_i}\bra{\hat b_i}U_i-\bI)\quad\mbox{where}\quad  \hat b_i\in\{0,1\}\quad \forall i=1,...,n.
\eeq
Moreover, let $O$ be a $k$-local observable, then:
\beq
\|O\|_{sh}^2=3^k.
\eeq

\end{prop}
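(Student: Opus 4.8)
The plan is to reduce everything to a single qubit and then tensorize. The crucial observation is that for a Pauli measurement primitive the sampled unitary factorizes as $U=U_1\otimes\cdots\otimes U_n$ with the $U_i$ independent single-qubit Cliffords, and the computational-basis vectors factorize as $\ket b=\ket{b_1}\otimes\cdots\otimes\ket{b_n}$. By linearity this makes the measurement channel factorize, $\mM=\mM_1^{\otimes n}$, where $\mM_1$ is the single-qubit channel
\beq
\mM_1(\rho)=\bE_{U_1}\sum_{b_1}\bra{b_1}U_1\rho U_1^\dagger\ket{b_1}\,U_1^\dagger\ket{b_1}\bra{b_1}U_1\,.
\eeq
First I would compute $\mM_1$ explicitly. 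Averaging over the three single-qubit Pauli bases and using the Bloch representation $\rho=\tfrac12(\bI+\vec r\cdot\vec\sigma)$, each basis dephases $\vec r$ onto one axis, so $\mM_1$ is the depolarizing channel $\mM_1(\rho)=\tfrac13\rho+\tfrac23\tfrac{\tr\rho}{2}\bI$, whose inverse is $\mM_1^{-1}(X)=3X-\tr[X]\,\bI$.

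For the first claim I would then apply $\mM^{-1}=(\mM_1^{-1})^{\otimes n}$ to the single-shot snapshot $U^\dagger\ket{\hat b}\bra{\hat b}U=\bigotimes_i U_i^\dagger\ket{\hat b_i}\bra{\hat b_i}U_i$. Since each factor $U_i^\dagger\ket{\hat b_i}\bra{\hat b_i}U_i$ is a rank-one projector with unit trace, $\mM_1^{-1}$ sends it to $3U_i^\dagger\ket{\hat b_i}\bra{\hat b_i}U_i-\bI$, which yields the product formula \eqref{CS_product} after tensoring.

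For the shadow norm I would start from the operator expression for $\|O\|_{sh}^2$ in \autoref{Var}. Because $U$, $\ket b$ and $\mM^{-1}(O)=\bigotimes_i\mM_1^{-1}(O_i)$ all factorize across qubits and the $U_i$ are independent, the operator $\bE_U\sum_b{\bra b U\mM^{-1}(O)U^\dagger\ket b}^2\,U^\dagger\ket b\bra b U$ is itself a tensor product of single-qubit operators, and its operator norm is the product of the single-qubit operator norms. For the $n-k$ idle qubits $\mM_1^{-1}(\bI)=\bI$ and the matrix element is $1$, so each such factor contributes $1$. For each of the $k$ active qubits $O_i$ is a non-identity Pauli, hence traceless, so $\mM_1^{-1}(O_i)=3O_i$; averaging over the three Pauli bases only the basis diagonalizing $O_i$ survives, with ${\bra{b_i}O_i\ket{b_i}}^2=1$, giving $9\cdot\tfrac13\bI=3\bI$ and thus a factor $3$. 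Multiplying the $k$ factors of $3$ with the trivial ones gives $\|O\|_{sh}^2=3^k$.

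The main obstacle is organizational rather than deep: one must verify rigorously that the expectation commutes with the tensor structure so that the operator inside the norm genuinely factorizes — this uses the independence of the $U_i$ together with the multiplicativity of $\|\cdot\|_\infty$ over tensor products — and one must evaluate the single-qubit Clifford averages, most cleanly by invoking single-qubit Clifford covariance to reduce $O_i$ to $\sigma_z$ before enumerating the three measurement bases.
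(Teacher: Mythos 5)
Your proof is correct and takes essentially the same route as the paper, which states this proposition without its own proof by citing Proposition S2 and Lemma S3 of \cite{huang2020predicting}: the factorization $\mM=\mM_1^{\otimes n}$ over independent single-qubit Cliffords, the single-qubit depolarizing channel with inverse $\mM_1^{-1}(X)=3X-\mathrm{Tr}[X]\,\bI$, and the per-qubit evaluation of the shadow-norm operator (factor $3\bI$ on active qubits, $\bI$ on idle ones) are exactly the ingredients of that reference's argument. The one point worth making explicit is that the exact equality $\|O\|_{sh}^2=3^k$ requires the $k$ nontrivial tensor factors to be traceless with unit operator norm (e.g.\ non-identity Pauli matrices), as you correctly assume; for general single-qubit factors one only obtains bounds, and your reading matches how the proposition is actually used in the paper's later theorems.
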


The following \autoref{conv.mean} provides a criterion for estimating the expectation values of local observables using the empirical mean of collected classical shadows. The proof is essentially a consequence of the Bernstein's concentration inequality, a similar result is proved in  \cite{PRXQuantum.3.020365}.

\begin{thm}\label{conv.mean}
Let $O_1,...,O_M$ $k$-local observables, let $\hat\rho_1,...,\hat\rho_N$ be classical shadows of the unknown $n$-qubit state $\rho$ constructed out from a Pauli measurement primitive, and let $\hat\rho:=(1/N)\sum_{i=1}^N\hat\rho_i$ be their empirical mean. Then, for any $\epsilon, \delta>0$, if
\beq
N\geq 3^{k+1}\frac{\log\left(\frac{2M}{\delta}\right)}{\epsilon^2}
\eeq
we have
\beq
|\emph{\tr}(\hat \rho O_m)-\emph{\tr}(\rho O_m)|\leq \epsilon \quad \forall m=1,...,M
\eeq
with probability at least $1-\delta$.

\end{thm}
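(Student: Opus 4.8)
The plan is to fix a single observable $O=O_m$, view the $N$ single-shadow estimates $X_j:=\tr(O\hat\rho_j)$ as i.i.d.\ real random variables, and apply a Bernstein concentration bound to their empirical mean $\tr(\hat\rho\,O)=\frac1N\sum_{j=1}^N X_j$; a union bound over the $M$ observables then yields the claim. By the construction of the classical shadow one has $\bE[X_j]=\tr(O\rho)$, so the $X_j$ are i.i.d.\ and unbiased, and the whole task reduces to controlling their range and their variance. Throughout I use the normalization $\|O_i\|_\infty\le1$ that is implicit in \autoref{Pauli_shadows} (the statement $\|O\|_{sh}^2=3^k$ is normalization dependent).

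First I would bound the range. Using the product form of the Pauli classical shadow from \autoref{Pauli_shadows}, $\hat\rho_j=\bigotimes_{i=1}^n\bigl(3U_i^\dagger\ket{\hat b_i}\bra{\hat b_i}U_i-\bI\bigr)$, together with the fact that $O$ acts as the identity outside the $k$ qubits of its support, the estimate factorizes as $X_j=\prod_{i=1}^k\tr\bigl[O_i\bigl(3U_i^\dagger\ket{\hat b_i}\bra{\hat b_i}U_i-\bI\bigr)\bigr]$, the $n-k$ trivial factors each contributing $\tr\bigl(3U_i^\dagger\ket{\hat b_i}\bra{\hat b_i}U_i-\bI\bigr)=1$. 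Each single-qubit snapshot has eigenvalues $2$ and $-1$, hence trace norm $3$, so every factor is bounded in absolute value by $3\|O_i\|_\infty\le3$ and therefore $|X_j|\le3^k$. Combined with $|\tr(O\rho)|\le\|O\|_\infty\le1$, this gives the almost-sure bound $b:=3^k+1$ on the centered variables $X_j-\bE[X_j]$.

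Next I would bound the variance. \autoref{Var} gives $\mathrm{Var}[X_j]\le\bigl\|O-\tfrac{\tr O}{2^n}\bI\bigr\|_{sh}^2$, and since subtracting a multiple of the identity leaves the observable $k$-local, \autoref{Pauli_shadows} yields $\sigma^2:=\mathrm{Var}[X_j]\le3^k$. With the range bound $b$ and the variance bound $\sigma^2$ in hand, Bernstein's inequality gives, for each $m$,
\begin{equation}
\Pr\left[\left|\tr(\hat\rho\,O_m)-\tr(\rho\,O_m)\right|\ge\epsilon\right]\le 2\exp\left(-\frac{N\epsilon^2}{2\bigl(\sigma^2+b\,\epsilon/3\bigr)}\right).
\end{equation}
I would then union-bound over $m\in[M]$ and verify that the stated $N\ge3^{k+1}\log(2M/\delta)/\epsilon^2$ makes each exponent at least $\log(2M/\delta)$: in the natural regime $\epsilon\le1$ one checks that $2(\sigma^2+b\,\epsilon/3)\le 2\cdot3^k+\tfrac23(3^k+1)\epsilon\le 3^{k+1}$, so each failure probability is at most $\delta/M$ and the total failure probability is at most $\delta$.

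The routine parts are the unbiasedness of $X_j$ and the final union bound. The part that needs care is the bookkeeping of constants: the prefactor $3^{k+1}$ is exactly what is needed to absorb $2(\sigma^2+b\,\epsilon/3)$ once the range $b\le3^k+1$ and variance $\sigma^2\le3^k$ are substituted (using $\epsilon\le1$). Thus the main point of the argument is to extract these two bounds with the correct constants from \autoref{Var} and \autoref{Pauli_shadows}, and to choose the version of Bernstein's inequality whose constants land precisely on the clean prefactor $3^{k+1}$.
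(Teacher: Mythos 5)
Your proposal is correct and follows essentially the same route as the paper's proof: Bernstein's inequality applied to the centered single-shadow estimates, with the range bound $3^k+1$ extracted from the product form of the Pauli shadow (\autoref{Pauli_shadows}), the variance bound $3^k$ from \autoref{Var} together with \autoref{Pauli_shadows}, and a final union bound over the $M$ observables. The only differences are cosmetic — you obtain the range bound by factorizing the trace while the paper uses H\"older's inequality on the marginals, and you are in fact slightly more explicit than the paper about the $\epsilon\le 1$ bookkeeping needed to absorb the Bernstein denominator into the prefactor $3^{k+1}$.
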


\begin{proof}
The claim is a consequence of the following inequality: 
\begin{equation}\label{eq:Bern}
    \mathbb P[\max_m|\tr(\hat O_m\hat\rho)-\tr(O_m\rho)|\geq \epsilon]\leq 2 M\exp\left[-\frac{\epsilon^2 N}{3^{k+1}}\right],
\end{equation}
that we can prove applying the Bernstein's concentration inequality: Let $X_1,...,X_N$ be independent random variables such that $\mathbb E[X_i]=0$ and $|X_i|\leq R$ almost surely for all $i=1,...N$. Then, for $\epsilon>0$:
\begin{equation}\label{Bernstein}
    \mathbb P\left[\left|\frac{1}{T}\sum_{i=1}^N X_i\right|\geq\epsilon\right]\leq 2\exp\left(-\frac{\epsilon^2N^2/2}{\sigma^2+RN\epsilon}\right),
\end{equation}
where $\sigma^2=\sum_{i=1}^N \mathbb E[(X_i)^2]$. Given a $k$-local observable $O$, let us define the random variables as $X_i:=|\tr(O\hat\rho_i)-\tr(O\rho)|$ that are independent and centered by construction of the classical shadows. Let $\Lambda\subseteq [n]$, with $|\Lambda| =k$, be the region on which $O$ acts non-trivially. By the H\"older's inequlaity, $\tr(O_\Lambda\rho_\Lambda)\leq \|O_\Lambda\|_\infty \,\,\|\rho_\Lambda\|_1$, we have:
$$|X_i|=|\tr(O_\Lambda\hat\rho_{i\Lambda})-\tr(O_\Lambda \rho_\Lambda)|\leq \|O_\Lambda\|_\infty(\|\rho_\Lambda\|_1+\|\hat\rho_{i\Lambda}\|_1)\leq 1+3^k,$$
where we used the factorized form \eqref{CS_product} of $\hat\rho_i$ and the fact that $\|O_\Lambda\|_\infty\leq 1$. In view of \autoref{Var} and \autoref{Pauli_shadows}, we have $\mathbb E\left[X_i^2\right]\leq\|O\|_{sh}=3^k$, then $\sigma^2\leq N 3^k$. Now, we apply \eqref{Bernstein} obtaining:
\begin{equation}
    \mathbb P\left[|\tr(O\hat\rho)-\tr(O\rho)|\geq\epsilon  \right]\leq 2\exp\left[-\frac{\epsilon^2N^2/2}{N3^k+(1+3^k)N\epsilon}\right]\leq 2\exp\left[-\frac{\epsilon^2N}{3^{k+1}}\right]. 
\end{equation}
The argument above applies for any $O_m$, then:
\begin{align}
\mathbb P\left[\max_m|\tr(O_m\hat\rho)-\tr(O_m\rho)|\geq\epsilon\right] &\leq \sum_{m=1}^M     \mathbb P\left[\max_m|\tr(O_m\hat\rho)-\tr(O_m\rho)|\geq\epsilon\right]\nonumber\\
&\leq 2\, M\exp\left[-\frac{\epsilon^2N}{3^{k+1}}\right],
\end{align}
obtaining \eqref{eq:Bern}. Therefore, $N\geq 3^{k+1}\frac{\log\left(\frac{2M}{\delta}\right)}{\epsilon^2}$ implies
\begin{equation}
\mathbb P\left[\max_m|\tr(O_m\hat\rho)-\tr(O_m\rho)|\geq\epsilon\right]\leq \delta\,,
\end{equation}
that implies in turn that $\mathbb P\left[|\tr(O_m\hat\rho)-\tr(O_m\rho)|\leq\epsilon\right]\geq 1-\delta$ for any $m=1,...,M$ that is the claim.

\end{proof}

$n$-qubit Pauli operators are tensor products of $n$ Pauli matrices (identity included), and the locality or Hamming weight $|P|$ of the Pauli operator $P$ is the number of factors different from the identity, \emph{i.e.}, the number of qubits on which $P$ acts nontrivially.
Classical shadows are not the best protocol to estimate the expectation values of Pauli operators if their locality is high.
Indeed, allowing Bell measurements on two copies of the state, the shadow protocol can be improved obtaining a sample complexity which does not depend on the locality degree of the considered Pauli operators \cite{Huang_2021}.  Let $\{\ket{\Psi^+}, \ket{\Psi^-}, \ket{\Phi^+}, \ket{\Phi^-}\}$ be the Bell basis:
\beq
\ket{\Psi^\pm}=\frac{1}{\sqrt 2}(\ket {00}\pm\ket{11})\qquad \ket{\Phi^\pm}=\frac{1}{\sqrt 2}(\ket{01}\pm\ket{10}),
\eeq
assume to perform a measurement in this basis on each qubit pair of the $n$ qubit pairs in the state $\rho\otimes\rho$. After the Bell measurements on $N_1$ copies of $\rho\otimes\rho$, one obtains a $(2nN_1)$-bit string from which the value $|\tr(P\rho)|$ can be estimated for any Pauli operator $P$. Then, with additional $N_2$ measurements, on can estimate the sign of $\tr(P\rho)$. Any Bell state is an eigenvector of $\sigma\otimes\sigma$ with eigenvalue $\pm 1$ and $\sigma\in\{I,X,Y,Z\}$. Consider the Pauli operator $P=\sigma_1\otimes\cdots\otimes\sigma_n$, then:
\beq
|\tr(P\rho)|^2=\tr[(P\otimes P)(\rho\otimes\rho)]=\mathbb E\left[\prod_{k=1}^n \tr[(\sigma_k\otimes\sigma_k)S_k]\right],
\eeq
where $S_k$ is an eigenprojector of $\sigma_k\otimes\sigma_k$. The average is taken over the distribution of the outcomes of a Bell measurement on any qubit pair in $\rho\otimes\rho$. Let $\{S_k^{(t)}\}$ be the collection of obtained outcomes from the repeated Bell measurements, then we can get the empirical mean as an estimation of $|\tr(P\rho)|^2$:
\beq
\hat a (P)=\frac{1}{N_1}\sum_{t=1}^{N_1}\prod_{k=1}^n \tr[(\sigma_k\otimes\sigma_k)S_k^{(t)}]\,.
\eeq
More precisely, the following proposition is proven in \cite{Huang_2021}:

\begin{prop}[\!\!{\cite{Huang_2021}}]
Given $N_1=\Theta(\log(1/\delta)/\epsilon^4)$ copies of $\rho\otimes\rho$, the following is true for any Pauli operator $P$:
\beq
|\sqrt{\max(\hat a(P),0)}-|\mathrm{Tr}(P\rho)||<\epsilon,
\eeq
with probability $1-\delta$.
\end{prop}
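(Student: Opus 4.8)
The plan is to recognize $\hat a(P)$ as an empirical mean of i.i.d.\ bounded random variables, apply a concentration inequality to control its deviation from $|\tr(P\rho)|^2$, and then transfer the resulting bound through the square root. First I would observe that each factor $\tr[(\sigma_k\otimes\sigma_k)S_k^{(t)}]$ appearing in the product is simply the eigenvalue of $\sigma_k\otimes\sigma_k$ on the Bell state recorded by the outcome $S_k^{(t)}$. Since every Bell state is a simultaneous eigenvector of $I\otimes I$, $X\otimes X$, $Y\otimes Y$ and $Z\otimes Z$ with eigenvalue $\pm1$ (and always $+1$ on the identity factors), each factor lies in $\{-1,+1\}$, and hence so does the product $X^{(t)} := \prod_{k=1}^n \tr[(\sigma_k\otimes\sigma_k)S_k^{(t)}]$. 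Consequently $\hat a(P)=\frac{1}{N_1}\sum_{t=1}^{N_1}X^{(t)}$ is the average of $N_1$ independent random variables taking values in $[-1,1]$, whose common expectation is $|\tr(P\rho)|^2$ by the identity displayed just before the statement.

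Next I would apply Hoeffding's inequality to the sum $\sum_t X^{(t)}$. Because each $X^{(t)}$ ranges over an interval of width $2$, this gives
\begin{equation}
\mathbb P\!\left[\left|\hat a(P) - |\tr(P\rho)|^2\right| \ge s\right] \le 2\exp\!\left(-\frac{N_1\,s^2}{2}\right)
\end{equation}
for every $s>0$, which I will use with the choice $s=\epsilon^2$.

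The remaining, and conceptually most delicate, step is to pass from an additive error on the square $|\tr(P\rho)|^2$ to an additive error on $|\tr(P\rho)|$ itself. Writing $a=|\tr(P\rho)|^2\in[0,1]$, the key elementary inequality is
\begin{equation}
\left|\sqrt{\max(\hat a(P),0)} - \sqrt{a}\right| \le \sqrt{\left|\hat a(P) - a\right|}\,,
\end{equation}
which I would verify by splitting into the case $\hat a(P)\ge0$ (where it reduces to the standard bound $|\sqrt x-\sqrt y|\le\sqrt{|x-y|}$, obtained from $|\sqrt x-\sqrt y|^2\le|\sqrt x-\sqrt y|(\sqrt x+\sqrt y)=|x-y|$) and the case $\hat a(P)<0$ (where the left-hand side equals $\sqrt a$ while $|\hat a(P)-a|\ge a$). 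Combining this with the concentration bound, the event $|\hat a(P)-a|<\epsilon^2$ forces $|\sqrt{\max(\hat a(P),0)}-|\tr(P\rho)||<\epsilon$, and it occurs with probability at least $1-2\exp(-N_1\epsilon^4/2)$. Demanding $2\exp(-N_1\epsilon^4/2)\le\delta$ yields $N_1\ge 2\log(2/\delta)/\epsilon^4$, i.e.\ $N_1=\Theta(\log(1/\delta)/\epsilon^4)$ as claimed.

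I expect the main obstacle to be exactly this square-root deformation: near $|\tr(P\rho)|=0$ the map $a\mapsto\sqrt a$ has unbounded derivative, so a naive linearization fails and one cannot hope for better than $\epsilon^2$ accuracy on $a$. This is precisely what produces the fourth power of $\epsilon$ (rather than the usual second power) in the sample complexity, and the truncation $\max(\cdot,0)$ is what keeps the estimator well defined when statistical fluctuations push $\hat a(P)$ below zero.
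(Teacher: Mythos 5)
Your proof is correct: the paper itself does not reprove this proposition (it imports it from the cited reference \cite{Huang_2021}, having only set up the estimator $\hat a(P)$ and the identity $\mathbb{E}[\prod_k \mathrm{Tr}[(\sigma_k\otimes\sigma_k)S_k]]=|\mathrm{Tr}(P\rho)|^2$), and your argument is precisely the natural completion of that setup and the one used in the reference. The three ingredients — each outcome product being a $\pm1$-valued i.i.d.\ random variable with mean $|\mathrm{Tr}(P\rho)|^2$, Hoeffding's inequality at accuracy $s=\epsilon^2$, and the inequality $|\sqrt{\max(\hat a,0)}-\sqrt{a}|\le\sqrt{|\hat a-a|}$ handling both signs of $\hat a$ — are all sound, and they correctly account for the $\epsilon^{-4}$ (rather than $\epsilon^{-2}$) sample complexity.
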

If $|\tr(P\rho)|$ is large enough, then considering $N_2$ copies of $\rho$ allows to estimate the sign by measuring $P$ on each of the $N_2$ copies and taking the majority voting of the obtained $1$s and $-1$s. Assuming $|\tr(P\rho)|$ is large enough, then the majority voting is close to the correct answer, and if $N_2$ is also large enough such that $\rho^{\otimes N_2}$ is not highly perturbed by the measurement, then it can be used to decide the sign for many different Pauli operators. Formally:

\begin{prop}[\!\!{\cite{Huang_2021}}]
For any $\delta,\epsilon,M>0$, let $N_2=\Theta(\log(M/\delta)\epsilon^2)$. For any $M$ Pauli operators $P_1,...,P_M$ with $|\mathrm{Tr}(P_i\rho)|>\epsilon$ for all $i$, measuring $\rho^{\otimes N_2}$, $\mathrm{sign}(\mathrm{Tr}(P_i\rho))$ can be obtained with probability $1-\delta$ for any $i=1,...,M$.    
\end{prop}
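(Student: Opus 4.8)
The plan is to turn each sign query into a biased coin-flip estimate and then to reuse the same copies for all $M$ Paulis by controlling the measurement back-action with a gentle-measurement argument. Since each $P_i$ has spectrum $\{+1,-1\}$, the two-outcome projective measurement $\{\tfrac{\mathbb{I}+P_i}{2},\tfrac{\mathbb{I}-P_i}{2}\}$ applied to a single copy of $\rho$ returns $+1$ with probability $\tfrac{1+\mathrm{Tr}(P_i\rho)}{2}$ and $-1$ with probability $\tfrac{1-\mathrm{Tr}(P_i\rho)}{2}$. Hence, whenever $|\mathrm{Tr}(P_i\rho)|>\epsilon$, the likelier outcome is exactly $\mathrm{sign}(\mathrm{Tr}(P_i\rho))$, favored by a bias of at least $\epsilon/2$ over a fair coin. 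First I would measure $P_i$ on each of the $N_2$ copies and output the sign of the majority; a Hoeffding bound for $N_2$ i.i.d.\ outcomes in $[-1,1]$ with mean of modulus exceeding $\epsilon$ shows that this majority disagrees with $\mathrm{sign}(\mathrm{Tr}(P_i\rho))$ with probability at most $\eta:=2\,e^{-N_2\epsilon^2/2}$.

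The substance of the proof is that $N_2$ must be only logarithmic in $M$, so one cannot allocate a fresh block of copies to each $P_i$: the single register $\rho^{\otimes N_2}$ has to be reused for all $M$ Paulis. Because the $P_i$ need not commute, the majority measurement for $P_1$ disturbs the state seen by the measurement for $P_2$, and so on. I would package, for each $i$, the procedure ``measure $P_i$ on every copy and record the sign of the majority'' as a single coarse two-outcome projective measurement $\mathcal{E}_i$ on $\rho^{\otimes N_2}$, whose correct outcome has probability at least $1-\eta$. Writing $\Pi_i$ for the projector onto this correct outcome, one has $\mathrm{Tr}\!\big(\Pi_i\,\rho^{\otimes N_2}\big)\ge 1-\eta$ for every $i$, and the quantum union bound for the sequence $\mathcal{E}_1,\dots,\mathcal{E}_M$ applied in order gives that all $M$ majorities simultaneously return the correct sign except with probability $O(\sqrt{M\eta})$.

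Finally, I would fix $N_2$ by imposing $O(\sqrt{M\eta})\le\delta$, that is $\eta$ of order $\delta^2/M$, which yields $N_2=\Theta\!\big(\log(M/\delta)/\epsilon^2\big)$, the asserted sample complexity. The main obstacle is exactly the reuse step: bounding the accumulated, non-commuting back-action of the $M$ measurements so that it does not wash out the single-copy bias. This is precisely what the quantum union bound (equivalently, an iterated gentle-measurement estimate) is designed for; the one quantitative subtlety is that each step costs $\sqrt{\eta}$ rather than $\eta$, so the per-query error must be pushed down to order $\delta^2/M$, which still enlarges $N_2$ by only a constant factor. The single-copy bias and the Chernoff/Hoeffding concentration are then routine.
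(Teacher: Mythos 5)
Your proof is correct and follows essentially the same route as the paper, which states this proposition only as a citation of Huang \emph{et al.}\ and sketches exactly your argument: majority voting of single-copy $P_i$ measurements (a Hoeffding/Chernoff bound giving per-Pauli failure probability $\eta$) combined with reuse of the single register $\rho^{\otimes N_2}$ for all $M$ Paulis, controlled by a sequential gentle-measurement/quantum union bound whose $\sqrt{\cdot}$ cost is absorbed by taking $\eta$ of order $\delta^2/M$. Note only that the scaling written in the paper's statement, $N_2=\Theta(\log(M/\delta)\,\epsilon^2)$, is a typo for $\Theta(\log(M/\delta)/\epsilon^2)$, which is what your derivation correctly produces.
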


The total number of copies of $\rho$ required by the tomographic procedure above to estimate the expectations of $M$ local Pauli operators is $2N_1+N_2$. More precisely, as a consequence of the propositions above, we have the next lemma.

\begin{lem}[\!\!{\cite[Theorem 2]{Huang_2021}}]\label{Bell_lem}
    Given any $P_1, ..., P_M$ Pauli operators and a state $\rho$, there is a procedure that produces $\hat{p}_1, ..., \hat{p}_M$ with
    \begin{equation}
        |\hat{p}_i-\mathrm{Tr}(P_i\rho)| \le \epsilon, \forall i \in [M]
    \end{equation}
    with probability at least $1-\delta$ using $O(\frac{\log(M/\delta)}{\epsilon^4})$ copies of $\rho$.
\end{lem}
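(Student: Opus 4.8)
The plan is to combine the two propositions above---the one that estimates the magnitude $|\mathrm{Tr}(P\rho)|$ from Bell measurements on $\rho\otimes\rho$ (call it the \emph{magnitude proposition}) and the one that fixes the sign $\mathrm{sign}(\mathrm{Tr}(P\rho))$ from single-copy measurements on $\rho^{\otimes N_2}$ (the \emph{sign proposition})---and to stitch them together with a threshold rule, choosing the tolerances so that both the magnitude and the sign contribute an error of at most $\epsilon$. To control all $M$ Pauli operators simultaneously rather than a single one, I would apply each proposition with its internal failure probability set to $\delta/(2M)$ and union bound, so that the magnitude step succeeds for all $i$ with probability at least $1-\delta/2$ and likewise for the sign step; the two halves together then fail with probability at most $\delta$.

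Concretely, I would run the magnitude estimation at precision $\epsilon/4$, obtaining estimates $\hat m_i := \sqrt{\max(\hat a(P_i),0)}$ with $\bigl|\hat m_i - |\mathrm{Tr}(P_i\rho)|\bigr| \le \epsilon/4$ for every $i$. By the magnitude proposition (with $\delta\mapsto\delta/(2M)$ and $\epsilon\mapsto\epsilon/4$) this costs $N_1 = \Theta\!\left(\log(2M/\delta)/\epsilon^4\right)$ copies of $\rho\otimes\rho$, i.e.\ $2N_1$ copies of $\rho$. In parallel I would run the sign proposition on $\rho^{\otimes N_2}$ with threshold $\epsilon/4$, producing sign guesses $\hat s_i$ that are guaranteed correct for every $i$ with $|\mathrm{Tr}(P_i\rho)|>\epsilon/4$; by that proposition this costs $N_2 = \Theta\!\left(\log(2M/\delta)\,\epsilon^2\right)$ copies of $\rho$, which is subdominant to $N_1$.

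The output rule is the heart of the argument: set $\hat p_i = 0$ whenever $\hat m_i \le \epsilon/2$, and $\hat p_i = \hat s_i\,\hat m_i$ otherwise. I would then verify the error bound by two cases. If $\hat m_i \le \epsilon/2$, then $|\mathrm{Tr}(P_i\rho)| \le \hat m_i + \epsilon/4 \le 3\epsilon/4 \le \epsilon$, so $|\hat p_i - \mathrm{Tr}(P_i\rho)| = |\mathrm{Tr}(P_i\rho)| \le \epsilon$. If instead $\hat m_i > \epsilon/2$, then $|\mathrm{Tr}(P_i\rho)| \ge \hat m_i - \epsilon/4 > \epsilon/4$, so the guess $\hat s_i$ is correct by the sign proposition and $|\hat p_i - \mathrm{Tr}(P_i\rho)| = \bigl|\hat m_i - |\mathrm{Tr}(P_i\rho)|\bigr| \le \epsilon/4 \le \epsilon$. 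In both cases the error is at most $\epsilon$, and the total sample cost is $2N_1 + N_2 = O\!\left(\log(M/\delta)/\epsilon^4\right)$, the magnitude step dominating.

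The step I expect to be most delicate is the intermediate-magnitude regime, where $|\mathrm{Tr}(P_i\rho)|$ sits near the threshold and the sign may genuinely be unrecoverable. The whole point of estimating the magnitude at the finer precision $\epsilon/4$ while thresholding at the coarser $\epsilon/2$ is to guarantee that any Pauli for which we actually use a sign is far enough from zero (true magnitude $>\epsilon/4$) for the sign proposition to apply, while any Pauli we round to zero is close enough to zero that discarding the sign costs at most $\epsilon$. Making these two tolerances mutually consistent---so that the data-dependent set on which we invoke the sign guarantee is contained in the set for which that guarantee holds---together with the bookkeeping of the two union bounds, is the only real subtlety; everything else is a direct invocation of the two propositions.
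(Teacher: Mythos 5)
Your proposal is correct and takes essentially the same route as the paper, which presents this lemma as a direct consequence of the two quoted propositions (magnitude estimation via Bell measurements on $\rho\otimes\rho$, plus sign recovery on $\rho^{\otimes N_2}$, for a total of $2N_1+N_2$ copies), deferring the details to \cite{Huang_2021}. Your thresholding rule ($\epsilon/4$ precision against an $\epsilon/2$ cutoff, so that every sign you actually use is covered by the sign proposition's hypothesis) and the union-bound bookkeeping are exactly the details that combination requires.
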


Remarkably, the sample complexity of the protocol does not depend on the locality degree (i.e. the number of qubits on which the action is nontrivial) of the considered Pauli operators.

\section{Convergence of the classical shadow in the local quantum \texorpdfstring{$W_1$}{W1} distance}\label{sec:conv}

Let us study the convergence of the empirical mean of the classical shadows to the original state with respect to the local quantum $W_1$ distance:

\begin{thm}\label{thm_CS_2}
Let $\rho$ be an unknown quantum state of $n$ qubits, let $\hat\rho_1,...,\hat\rho_N$ be classical shadows of $\rho$ constructed out from a Pauli measurement primitive, and let $\hat\rho:=\frac{1}{N}\sum_{i=1}^N \hat\rho_i$ be their empirical mean.
We set the coefficients of the local norm (\autoref{local norm}) to $c_l=\frac{c^{l-1}}{l}$, $l=1,...,n$ with $c>\sqrt{10}$, Then, for any $0<\delta<1$ the normalized local $W_1$ distance between $\rho$ and $\hat\rho$ can be bounded as follows:
\beq
\frac{1}{n}\| \hat\rho-\rho\|_{W_1 \mathrm{loc}}\,\leq w,
\eeq
with probability at least $1-\delta$, by a number of classical shadows that scales as:
\beq
N= O\left(\frac{\log\frac{1}{\delta}+\log\frac{1}{w}\log n}{w^2}\right).
\eeq
\end{thm}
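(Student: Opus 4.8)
The plan is to bound the local $W_1$ distance by the trace norms of all marginals of $\hat\rho-\rho$ and then to control those marginals through their Pauli coefficients. With the penalties $c_l=c^{l-1}/l$ one has $|\Lambda|\,c_{|\Lambda|}=c^{|\Lambda|-1}$, so \autoref{prop:W1locUB} gives $\tfrac1n\|\hat\rho-\rho\|_{W_1\mathrm{loc}}\le\tfrac12\max_{\Lambda\subseteq[n]} c^{-(|\Lambda|-1)}\|\hat\rho_\Lambda-\rho_\Lambda\|_1$, where $\hat\rho_\Lambda=\mathrm{Tr}_{\Lambda^c}\hat\rho$. It therefore suffices to show that the event $\|\hat\rho_\Lambda-\rho_\Lambda\|_1\le 2w\,c^{|\Lambda|-1}$ holds simultaneously for all regions $\Lambda$ with probability at least $1-\delta$.

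First I would trade the trace norm for a sum over Pauli coefficients. For $|\Lambda|=k$, writing $\xi_P:=\mathrm{Tr}[P(\hat\rho-\rho)]$ for the Paulis $P$ supported in $\Lambda$, the estimate $\|A\|_1\le 2^{k/2}\|A\|_2$ together with Pauli orthogonality yields $\|\hat\rho_\Lambda-\rho_\Lambda\|_1\le(\sum_{P\subseteq\Lambda}\xi_P^2)^{1/2}$. Each $\xi_P$ is an empirical mean over the $N$ shadows of i.i.d. centered variables $\mathrm{Tr}[P\hat\rho_i]-\mathrm{Tr}[P\rho]$ which, by the product form \eqref{CS_product}, are bounded by $2\cdot 3^{|P|}$ and have per-sample variance at most $\|P\|_{sh}^2=3^{|P|}$ (\autoref{Var} and \autoref{Pauli_shadows}). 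The decisive arithmetic identity is $\sum_{P\subseteq\Lambda}3^{|P|}=\sum_{j=0}^k\binom kj 9^j=10^k$, whence $\mathbb E[\sum_{P\subseteq\Lambda}\xi_P^2]\le 10^k/N$ and $c^{-(k-1)}(\mathbb E\sum_P\xi_P^2)^{1/2}\le (c/\sqrt N)(\sqrt{10}/c)^k$. The hypothesis $c>\sqrt{10}$ is exactly what makes this geometric in $k$, so that the contribution of large regions decays and the maximum over $\Lambda$ is governed by small regions.

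To make this uniform and high-probability I would fix a weight threshold $k^\ast\asymp\log(1/w)$ and treat the two weight ranges differently. For $|P|\le k^\ast$ I apply Bernstein's inequality as in the proof of \autoref{conv.mean} to each $\xi_P$ and union-bound over the $\sum_{j\le k^\ast}\binom nj 3^j\le n^{O(k^\ast)}$ Paulis involved; choosing the deviation level $\tau_j\asymp\sqrt{3^j(j\log n+\log(1/\delta))/N}$ keeps every retained $\xi_P$ in the sub-Gaussian regime and bounds the total failure probability by $\delta$. Inserting $|\xi_P|\le\tau_{|P|}$ into $\sum_{P\subseteq\Lambda}\xi_P^2$ and using $\sum_j\binom kj 9^j=10^k$ reproduces, after dividing by $c^{2(k-1)}$, a bound of order $(k^\ast\log n+\log(1/\delta))/N$ uniformly in $k$, which is at most $w^2$ once $N=O((\log(1/\delta)+\log(1/w)\log n)/w^2)$. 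For the complementary weights $|P|>k^\ast$, which occur only for $|\Lambda|>k^\ast$, I would avoid concentration entirely and use the deterministic identity $\|\hat\rho_\Lambda\|_2^2=5^{|\Lambda|}$ (from the single-qubit eigenvalues $\{2,-1\}$ in \eqref{CS_product}) to get $\sum_{P\subseteq\Lambda}\xi_P^2= 2^{|\Lambda|}\|\hat\rho_\Lambda-\rho_\Lambda\|_2^2=O(10^{|\Lambda|})$; dividing by $c^{2(|\Lambda|-1)}$ and using $|\Lambda|>k^\ast\asymp\log(1/w)$ makes this tail smaller than $w^2$.

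The hard part will be making the balance between these two regimes rigorous. The deterministic tail is suppressed only at rate $(\sqrt{10}/c)^{|\Lambda|}$, so for $c$ near $\sqrt{10}$ the threshold $k^\ast$ must be taken with a large constant, which pushes the heaviest retained weight toward the edge of the sub-Gaussian regime, where the linear Bernstein correction (scaling like $27^{|P|}$ inside the sum, against only $c^{2(|\Lambda|-1)}$ in the denominator) is no longer negligible. Keeping that correction under control while simultaneously union-bounding over the $\binom nk$ regions of every size is the delicate point, and is exactly where the precise interplay between the penalty growth $c$ and the sample size $N$ must be tracked.
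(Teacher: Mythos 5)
Your proposal is, in all essentials, the paper's own proof of \autoref{thm_CS_2}: the same reduction via \autoref{prop:W1locUB}, the same passage $\|\hat\rho_\Lambda-\rho_\Lambda\|_1\le(\sum_{P\subseteq\Lambda}\xi_P^2)^{1/2}$ through the Hilbert--Schmidt norm, the same weight-dependent accuracy $\tau_{|P|}\propto 3^{|P|/2}$ combined with the identity $\sum_{j}\binom{k}{j}9^j\le 10^{k}$ (which is exactly where $c>\sqrt{10}$ enters), the same cutoff $k\asymp\log(1/w)$ with Bernstein plus a union bound over the $O(n^{k})$ retained Paulis (packaged in the paper as \autoref{conv.mean}), and the same deterministic treatment of regions above the cutoff. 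The only cosmetic difference is that for those large regions the paper bounds $\|\hat\rho_\Lambda\|_1\le 3^{|\Lambda|}$ directly from the product form \eqref{CS_product}, instead of your bound $\|\hat\rho_\Lambda\|_2\le 5^{|\Lambda|/2}$ (for the empirical mean this is an inequality, by convexity, not an identity); the paper's route is slightly tighter ($3<\sqrt{10}$) but plays the same role, and both suffice once $c>\sqrt{10}$.

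The ``hard part'' you flag in your last paragraph is a genuine issue, but you should be aware that the paper does not resolve it: it incurs it silently. The paper demands $|\hat P-\langle P\rangle|\le\epsilon_P=3^{|P|/2}(3/c)^{k}$ for all $|P|\le k$ and invokes \autoref{conv.mean}, but the step inside the proof of \autoref{conv.mean} that replaces the Bernstein denominator $N3^{k}+(1+3^{k})N\epsilon$ by $N3^{k+1}$ is valid only for $\epsilon\le 3^{k}/\left(2(3^{k}+1)\right)<1/2$, whereas for weights near the cutoff $\epsilon_P$ reaches $(3\sqrt3/c)^{k}$, which diverges as $k\to\infty$ for every $c<3\sqrt{3}\approx 5.2$ --- in particular for all $c\in(\sqrt{10},3\sqrt3)$ admitted by the theorem. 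In that regime Bernstein only yields the linear tail $\exp\left(-\Theta(N\epsilon_P/3^{|P|})\right)$, and redoing the union bound gives a requirement $N=\Omega\left((c/\sqrt3)^{k}\log(M_k/\delta)\right)$, a strictly worse power of $1/w$ than the claimed $(c/3)^{2k}=4/w^2$. So the delicate balance you describe is not something the paper handles and you failed to reproduce; it is a gap common to both arguments. Closing it requires either strengthening the hypothesis to $c>3\sqrt3$ (so that every $\epsilon_P$ stays in the sub-Gaussian window for large $k$), or abandoning per-Pauli concentration for the heavy weights in favour of concentration of the whole sum $\sum_{P\subseteq\Lambda}\xi_P^2=2^{|\Lambda|}\|\hat\rho_\Lambda-\rho_\Lambda\|_2^2$, whose dominant ``diagonal'' contribution $\frac{2^{|\Lambda|}}{N^2}\sum_i\|\hat\rho_{i\Lambda}\|_2^2=10^{|\Lambda|}/N$ is deterministic, so that only the exponentially smaller cross terms need a probabilistic bound.
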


\begin{proof}
For a given $k< n$, let us assume we need to estimate the expectations of all the $|P|$-local Pauli operators on $n$ qubits for any $|P|\leq k$, up to an error $\epsilon_P:=3^{|P|/2}(3/c)^k$, using the empirical mean $\hat\rho$. By \autoref{conv.mean}, we need a number $N$ of classical shadows satisfying:
\beq\label{eq:M_k}
N\geq 3 \,\left(\frac{c}{3}\right)^{2k}\log\frac{2 M_k}{\delta}\qquad \qquad M_k=\sum_{i=1}^k \binom{n}{i}3^i \leq \frac{(3n)^k}{(k-1)!},
\eeq
where $M_k$ is the total number of $|P|$-local Pauli operators with $|P|\leq k$. The quantum state $\rho$ and its estimator $\hat \rho$ can be decomposed onto the Pauli basis $\{P\}$:
\beq\label{rho}
\rho=\frac{1}{2^n}\sum_P \langle P\rangle P \quad\mbox{where}\quad \langle P\rangle=\tr(P\rho)\,, \qquad \hat\rho = \frac{1}{2^n}\sum_{|P|\leq k} \hat P P\qquad k\leq n\,,
\eeq
where $\hat{P} = \mathrm{Tr}\left[\hat{\rho}\,P\right]$ is the estimated expectation value of $P$ computed with the empirical mean $\hat{\rho}$ of the classical shadows.
Let us consider the difference of the marginals on a region $\Lambda\subseteq[n]$:
\beq
\hat \rho_\Lambda -\rho_\Lambda=\frac{1}{2^{|\Lambda|}}\sum_{P\in\mathcal P_\Lambda} (\hat P -\langle P\rangle) P,
\eeq
where $\mathcal P_\Lambda$ is the set of local Pauli observables defined on $\Lambda$. Now, we need to bound the trace norm of $\hat \rho_\Lambda -\rho_\Lambda$ and apply \autoref{prop:W1locUB}. Let us recall that, given a $d\times d$ complex matrix $A$, we have $\| A\|_1\leq \sqrt d \| A\|_2$ where $\|\,\,\,\|_2$ is the Hilbert-Schmidt norm.
Therefore:
\beq\label{tracenorm}
\| \hat\rho_\Lambda-\rho_\Lambda\|_1\leq 2^{\frac{|\Lambda|}{2}} \| \hat\rho_\Lambda -\rho_\Lambda\|_2\leq 2^{\frac{|\Lambda|}{2}}\sqrt{\frac{1}{2^{2|\Lambda|}}\sum_{P\in\mathcal P_\Lambda} (\hat P-\langle P\rangle)^2 2^{|\Lambda|}   }=\sqrt{\sum_{P\in\mathcal P_\Lambda} (\hat P-\langle P\rangle)^2    }.
\eeq

\noindent For $|\Lambda|\leq k$, we have $(\hat P-\langle P\rangle)^2\leq\epsilon_P^2$. Thus, applying \eqref{tracenorm}:
\beq
\|\hat \rho_\Lambda-\rho_\Lambda\|_1\leq\sqrt{\sum_{P\in\mathcal P_\Lambda} \epsilon_P^2}= \sqrt{\sum_{i=1}^{|\Lambda|}\binom{|\Lambda|}{i} 9^i \,\left(\frac{3}{c}\right)^{2k}}=\frac{3^k}{c^k} \sqrt{(10^{|\Lambda|}-1)}\leq \sqrt{10^{|\Lambda|}}\,{\frac{3^k}{c^k}},
\eeq
where we used the standard identity $\sum_{i=0}^N\binom{N}{i} x^i=(1+x)^N$ with $N=|\Lambda|$ and $x=9$.
In the case $|\Lambda|>k$, we can bound the trace distance as follows:
\beq
\| \hat\rho_\Lambda-\rho_\Lambda\|_1\leq \|\hat\rho_\Lambda\|_1+1\leq 3^{|\Lambda|}+1,
\eeq
where we have used the triangle inequality and the form of $\hat\rho_\Lambda$ given by \eqref{CS_product}. Therefore, within the choice $c_l=\frac{c^{l-1}}{l}$, for $|\Lambda|\leq k$:
\beq\label{inq_1}
\frac{\|\hat\rho_\Lambda-\rho_\Lambda\|_1}{2|\Lambda|c_{|\Lambda|}}\leq 
\frac{\sqrt{10}^{|\Lambda|}}{2c^{|\Lambda|-1}}{\frac{3^k}{c^k}}\leq \frac{\sqrt{10}}{2}\frac{3^k}{c^k}.
\eeq
For $|\Lambda|\geq k+1$, we have:
\beq\label{inq_2}
\frac{\|\hat\rho_\Lambda-\rho_\Lambda\|_1}{2|\Lambda|c_{|\Lambda|}}\leq\frac{3^{|\Lambda|}+1}{2c^{|\Lambda|-1}}\leq\frac{2}{3} \frac{3^{|\Lambda|}}{ c^{|\Lambda|-1}}\leq 2\, \frac{3^{k}}{c^k}.
\eeq
In \eqref{inq_1} and \eqref{inq_2}, we have used the fact that the terms $\frac{\sqrt{10}^{|\Lambda|}}{c^{|\Lambda|-1}}$ and $\frac{3^{|\Lambda|}}{c^{|\Lambda|-1}}$ are decreasing in $|\Lambda|$ and achieve the maximum for $|\Lambda|=1$ and $|\Lambda|=k+1$ respectively. 
\noindent
According to \autoref{prop:W1locUB}, we can set: 
\beq\label{eq:w}
\frac{1}{n}\|\hat\rho-\rho\|_{W_1\mathrm{loc}}\leq 2\,{\frac{3^k}{c^k}}=w.
\eeq
In view of \eqref{eq:M_k}, we can estimate the required number of classical shadows to guarantee $\frac{1}{n}\|\hat\rho-\rho\|_{W_1\mathrm{loc}}\leq w$. From \eqref{eq:w}, we obtain that $N\geq 3\, \frac{4}{w^2}\log \left(\frac{2M_k}{\delta}\right) $
and observing that $M_k=O(n^k)$ with $k=O(\log(1/w))$ the claim is proved.

\noindent


\end{proof}

A second protocol to estimate an unknown $n$-qubit state $\rho$ employs the estimates of the expectation values of all the Pauli operators acting on few qubits.
Let $\hat{P}$ be the estimate of the Pauli operator $P$.
We can then build the following estimate of the state $\rho$:
\beq\label{hat_rho}
\hat\rho:=\frac{1}{2^n}\sum_{P} \hat P P\,.
\eeq
The operator $\hat{\rho}$ of \eqref{hat_rho} may not be positive semidefinite.
However, it satisfies $\mathrm{Tr}\left[\hat{\rho}\,P\right] = \hat{P}$ for any Pauli operator $P$.
In \autoref{sec:ClassicalShadows}, we have summarized the tomographic procedure presented in \cite{Huang_2021}, based on Bell measurements, which improves the shadow protocol. Let us consider $\hat\rho$ as defined in \eqref{hat_rho}, where the expectation values of the Pauli operators acting on at most $k$ qubits are estimated by the \emph{Bell procedure}, and the expectation values of the Pauli operators acting on more than $k$ qubits are set to $0$.
We determine in \autoref{Bellconvergence} below the convergence rate of the above estimate to the true state with respect to the local quantum $W_1$ distance.

\begin{thm}\label{Bellconvergence}
    Let $\rho$ be an unknown quantum state of $n$ qubits, $\hat\rho$ be the estimating operator defined in \eqref{hat_rho} constructed out from the Bell procedure accessing $N$ copies of $\rho$.
    Let us set the coefficients of the local norm (\autoref{local norm}) to $c_l = \frac{c^{l-1}}{l}$, $l=1,...,n$ with $c>2$.
    Then, for any $0<\delta<1$ the normalized local $W_1$ distance between $\rho$ and $\hat\rho$ can be bounded as follows:
    \begin{equation}
        \frac{1}{n}\|\hat{\rho} - \rho\|_{W_1 \mathrm{loc}} \le w
    \end{equation}
        with probability $1-\delta$ using a number of copies that scales as:
    \begin{equation}
        N = O\left(\frac{\log\frac{1}{\delta}+\log\frac{1}{w}\log n}{w^4}\right)\,.
    \end{equation}
\end{thm}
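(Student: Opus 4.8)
The plan is to mirror the proof of \autoref{thm_CS_2} almost verbatim, replacing the locality-dependent classical-shadow guarantee of \autoref{conv.mean} by the uniform-error Bell guarantee of \autoref{Bell_lem}. Concretely, I would fix a truncation scale $k<n$ and use \autoref{Bell_lem} to estimate every Pauli operator $P$ with $|P|\le k$ to a single uniform accuracy $|\hat P-\langle P\rangle|\le\epsilon$, setting the higher-weight coefficients to zero as in \eqref{hat_rho}. This makes all $M_k=\sum_{i=1}^k\binom{n}{i}3^i=O(n^k)$ estimated expectation values simultaneously $\epsilon$-accurate with probability $1-\delta$ using $N=O\!\left(\log(M_k/\delta)/\epsilon^4\right)$ copies. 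As in \eqref{tracenorm}, I would then bound each marginal difference $\hat\rho_\Lambda-\rho_\Lambda$ via $\|A\|_1\le 2^{|\Lambda|/2}\|A\|_2$ together with the Pauli decomposition, and feed the results into \autoref{prop:W1locUB}.

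For $|\Lambda|\le k$ every Pauli supported on $\Lambda$ has weight at most $k$ and is therefore estimated, so $(\hat P-\langle P\rangle)^2\le\epsilon^2$ and
\begin{equation}
\|\hat\rho_\Lambda-\rho_\Lambda\|_1 \le \sqrt{\sum_{P\in\mathcal P_\Lambda}(\hat P-\langle P\rangle)^2} \le \epsilon\sqrt{4^{|\Lambda|}-1}\le \epsilon\,2^{|\Lambda|}.
\end{equation}
With $c_{|\Lambda|}=c^{|\Lambda|-1}/|\Lambda|$ this gives $\frac{\|\hat\rho_\Lambda-\rho_\Lambda\|_1}{2|\Lambda|c_{|\Lambda|}}\le\epsilon\,(2/c)^{|\Lambda|-1}\le\epsilon$, which is decreasing in $|\Lambda|$ precisely because $c>2$, so this regime contributes at most $\epsilon$.

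The step that genuinely differs from \autoref{thm_CS_2}, and which I expect to be the main obstacle, is the regime $|\Lambda|>k$. There the classical-shadow proof exploited the factorized form \eqref{CS_product} of $\hat\rho_i$ to obtain $\|\hat\rho_\Lambda\|_1\le 3^{|\Lambda|}$, but the Bell estimator \eqref{hat_rho} is a \emph{truncated Pauli expansion} with no product structure, so a new bound on $\|\hat\rho_\Lambda\|_1$ is required. I would instead estimate the Hilbert--Schmidt norm directly: since $|\hat P|\le 1+\epsilon\le2$ (using $\epsilon\le1$) and at most $4^{|\Lambda|}$ Paulis are supported on $\Lambda$,
\begin{equation}
\|\hat\rho_\Lambda\|_1 \le 2^{|\Lambda|/2}\|\hat\rho_\Lambda\|_2 = 2^{|\Lambda|/2}\sqrt{\frac{1}{2^{|\Lambda|}}\sum_{\mathrm{supp}\,P\subseteq\Lambda,\,|P|\le k}\hat P^2}\le 2^{|\Lambda|+1},
\end{equation}
whence $\|\hat\rho_\Lambda-\rho_\Lambda\|_1\le\|\hat\rho_\Lambda\|_1+1\le 2^{|\Lambda|+2}$ and $\frac{\|\hat\rho_\Lambda-\rho_\Lambda\|_1}{2|\Lambda|c_{|\Lambda|}}\le 4\,(2/c)^{|\Lambda|-1}$. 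This is again decreasing in $|\Lambda|$ for $c>2$ (this is exactly where the hypothesis $c>2$ enters, replacing the $c>\sqrt{10}$ of \autoref{thm_CS_2}), so it is maximized at $|\Lambda|=k+1$, giving $4\,(2/c)^k$.

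Combining the two regimes through \autoref{prop:W1locUB} yields $\frac{1}{n}\|\hat\rho-\rho\|_{W_1\mathrm{loc}}\le\max\{\epsilon,\,4\,(2/c)^k\}$. I would then balance the two terms by setting $\epsilon=w$ and choosing $k$ so that $4\,(2/c)^k\le w$, i.e. $k=O(\log(1/w))$ since $c$ is a fixed constant larger than $2$; this makes the right-hand side at most $w$. Finally, substituting $\epsilon=w$, $M_k=O(n^k)$ and $k=O(\log(1/w))$ into the Bell sample complexity $N=O(\log(M_k/\delta)/\epsilon^4)$ and using $\log M_k=O(k\log n)=O(\log(1/w)\log n)$ gives
\begin{equation}
N=O\!\left(\frac{\log\frac1\delta+\log\frac1w\log n}{w^4}\right),
\end{equation}
which is the claim. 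The only genuinely new estimate is the $|\Lambda|>k$ bound on $\|\hat\rho_\Lambda\|_1$; everything else reduces to the computation of \autoref{thm_CS_2} with the uniform error $\epsilon$ in place of the locality-weighted errors $\epsilon_P$, and the single extra power of $1/w$ (giving $w^4$ rather than $w^2$) is inherited directly from the $1/\epsilon^4$ scaling of the Bell protocol.
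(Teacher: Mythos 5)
Your proof is correct and follows essentially the same route as the paper's: a truncated Pauli expansion estimated to a single uniform accuracy $\epsilon$ via \autoref{Bell_lem}, the Hilbert--Schmidt bound \eqref{tracenorm} applied to the marginals in the two regimes $|\Lambda|\le k$ and $|\Lambda|>k$, \autoref{prop:W1locUB}, and the final balancing of $\epsilon$ against $(2/c)^k$ with $k=O(\log(1/w))$. The one step where you diverge---the $|\Lambda|>k$ regime, which you flag as needing a genuinely new estimate---is handled more simply in the paper: since the unestimated Pauli coefficients are set to zero, every per-coefficient error $|\hat P - \langle P\rangle|$ is at most $1$ (it equals $|\langle P\rangle|\le 1$ for the truncated Paulis and is at most $\epsilon\le 1$ for the estimated ones), so the same bound \eqref{tracenorm} gives $\|\hat\rho_\Lambda-\rho_\Lambda\|_1 \le \sqrt{4^{|\Lambda|}-1}\le 2^{|\Lambda|}$ directly, avoiding your triangle-inequality detour through $\|\hat\rho_\Lambda\|_1$ and its extra factor of $4$; both versions yield the stated theorem.
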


\begin{proof}
    The construction of $\hat\rho$ as in \eqref{hat_rho} requires the shadow tomography over all the local Pauli observables up to $k$ qubits, that are
    \beq
    M_k=\sum_{i=1}^k \binom{n}{i} 3^i\leq \sum_{i=1}^k \frac{(3n)^i}{i!} \leq \frac{(3n)^k}{(k-1)!},
    \eeq
for estimating the expectation value of any $P$ with $|P|\leq k$.

    As in the previous proof, we consider the trace norm of the difference of the marginals $\hat\rho_\Lambda-\rho_\Lambda$ on a region $\Lambda\subset[n]$. 
    Since \autoref{Bell_lem} does not depend on the choice of Paulis, we can fix a single $0 < \epsilon < 1$ for all of them. Applying \eqref{tracenorm}, for $|\Lambda|\leq k$, we get with probability $1-\delta$:
    \begin{equation}
        \|\hat{\rho}_\Lambda - \rho_\Lambda\|_1 \le \sqrt{\sum_{P\in\mathcal{P}_\Lambda}\epsilon^2} \le \epsilon\sqrt{\sum_{i=1}^{|\Lambda|}\binom{|\Lambda|}{i}3^i} \le \epsilon2^{|\Lambda|}.
    \end{equation}
    Else if $|\Lambda| \ge k+1$, since we estimated non-local Paulis to be 0, the error for each is at most one:
    \begin{equation}
        \|\hat{\rho}_\Lambda - \rho_\Lambda\|_1 \le \sqrt{\sum_{i=1}^{|\Lambda|}\binom{|\Lambda|}{i}3^i} \le \sqrt{4^{|\Lambda|}-1}.
    \end{equation}
    Applying \autoref{prop:W1locUB}, we have
    \begin{equation}
        \frac{1}{n}\|\hat{\rho} - \rho\|_{W_1 \mathrm{loc}} \le \max_{\Lambda\subseteq[n]}\frac{\|\hat{\rho}_\Lambda - \rho_\Lambda\|_1}{2|\Lambda|c_{|\Lambda|}}.
    \end{equation}
    Choosing $c > 2$, and fixing the coefficients $c_{|\Lambda|} = \frac{c^{|\Lambda|-1}}{|\Lambda|}$, we have, for $|\Lambda| \le k$,
    \begin{equation}
        \frac{\epsilon2^{|\Lambda|-1}}{c^{|\Lambda|-1}} \le \epsilon,
    \end{equation}
    with equality achieved for $|\Lambda| = 1$.
    And for $|\Lambda| \ge k+1$,
    \begin{equation}
        \frac{2^{|\Lambda|-1}}{c^{|\Lambda|-1}} \le \frac{2^k}{c^k},
    \end{equation}
    with equality achieved for $|\Lambda| = k+1$.
    Therefore we have:
    \begin{equation}
        \frac{1}{n}\|\hat{\rho} - \rho\|_{W_1 \mathrm{loc}} \le \max\left(\epsilon, \frac{2^k}{c^k}\right) = w.
    \end{equation}
    It is in our interest to make both arguments of the maximum equal, as it will not change the required number of copies.
    If $\epsilon \le \frac{2^k}{c^k}$, we can increase $\epsilon$ up to $\frac{2^k}{c^k}$ without needing to use extra copies, making both arguments of the $\max$ equal. If it is larger, we can decrease the size of the region $k$ until $\epsilon$ is smaller, and then increase $\epsilon$.
    In the end, we can write:
    \begin{equation}
        \epsilon = \frac{2^k}{c^k}, k = \ceil[\Bigg]{\frac{-\log(\epsilon)}{\log(c/2)}}.
    \end{equation}
    The total number of copies needed, using \autoref{Bell_lem} is:
    \begin{equation}
        N = O\left(\frac{\log(M/\delta)}{w^4}\right) = O\left(\frac{1}{w^4}(\log(1/\delta) + \frac{\log \frac{1}{w}}{\log(c/2)}\log n)\right).
    \end{equation}
\end{proof}

Let us compare the convergence results of \autoref{thm_CS_2} and \autoref{Bellconvergence}.
On the one hand, the number of copies required by the empirical mean of the classical shadows has a better scaling with respect to the local quantum $W_1$ distance compared to the Bell protocol ($O\left(\frac{1}{w^2}\log\frac{1}{w}\right)$ compared to $O\left(\frac{1}{w^4}\log\frac{1}{w}\right)$).
On the other hand, while for the convergence of the Bell-protocol estimate it is enough that the coefficients $c_k$ in the definition of the local quantum norm grow as $\frac{c^k}{k}$ with $c>2$, the convergence of the empirical mean of the classical shadows requires $c>\sqrt{10}$.

\section{Gibbs states}\label{sec:Gibbs}

We have proved in \autoref{sec:conv} that an estimate of a generic state of $n$ qubits that is accurate in the local quantum $W_1$ distance can be obtained by measuring $O(\log n)$ copies of the state.
Measuring $O(\mathrm{polylog}\,n)$ copies of the state $\omega\in\mathcal{S}_{[n]}$ is sufficient to get an estimate that is accurate for the quantum $W_1$ distance of \cite{de2021quantum} if $\omega$ is a Gibbs state of a local Hamiltonian satisfying the transportation cost-inequality
\begin{equation}\label{eq:TC}
\left\|\rho - \omega\right\|_{W_1}^2 \le \frac{n\,C}{2}\,S(\rho\|\omega)\qquad\forall\;\rho\in\mathcal{S}_{[n]}\,,
\end{equation}
which upper bounds the quantum $W_1$ distance between $\omega$ and a generic state $\rho$ with their quantum relative entropy \cite{rouze2023learning,onorati2023efficient}.
Such transportation-cost inequality has been proved for the Gibbs states of local Hamiltonians satisfying suitable forms of decay of correlations \cite{de2022quantum,onorati2023efficient}.

In this section, we connect the two results by proving that when we restrict the quantum $W_1$ distance and the local quantum $W_1$ distance to any family of Gibbs states of Hamiltonians with local quantum norm $O(1)$ satisfying the transportation-cost inequality \eqref{eq:TC}, the two distances become equivalent: 
\begin{prop}
For any Hamiltonian $H\in\mathcal{O}_{[n]}$, let
\begin{equation}
\omega_H = \frac{e^{-H}}{\mathrm{Tr}\,e^{-H}}
\end{equation}
be the associated Gibbs state (the inverse temperature does not appear since it can be reabsorbed in $H$).
Let $\mathcal{F}\subseteq\mathcal{O}_{[n]}$ be a family of Hamiltonians with local quantum norm at most $M$.
Let us assume that for any $H\in\mathcal{F}$, the Gibbs state $\omega_H$ satisfies the transportation-cost inequality \eqref{eq:TC} with a constant $C$ that does not depend on $H$.
Then, for any $H,\,K\in\mathcal{F}$ we have
\begin{equation}\label{eq:loc2}
\frac{\left\|\omega_H-\omega_K\right\|_{W_1\mathrm{loc}}^2}{n^2} \le \frac{\left\|\omega_H-\omega_K\right\|_{W_1}^2}{n^2} \le \frac{M\,C}{2}\,\frac{\left\|\omega_H-\omega_K\right\|_{W_1\mathrm{loc}}}{n}\,.
\end{equation}
\end{prop}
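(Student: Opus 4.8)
The plan is to dispatch the first inequality immediately and to spend all the effort on the second. The bound $\frac{\|\omega_H-\omega_K\|_{W_1\mathrm{loc}}^2}{n^2}\le\frac{\|\omega_H-\omega_K\|_{W_1}^2}{n^2}$ is nothing but the inequality $\|\cdot\|_{W_1\mathrm{loc}}\le\|\cdot\|_{W_1}$ from \autoref{prop:Lloc}, squared. So the entire content lies in the second inequality, which is equivalent to $\|\omega_H-\omega_K\|_{W_1}^2\le\frac{n\,M\,C}{2}\,\|\omega_H-\omega_K\|_{W_1\mathrm{loc}}$.

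First I would invoke the transportation-cost inequality \eqref{eq:TC} twice, once with $\rho=\omega_K$, $\omega=\omega_H$ and once with the roles of $H$ and $K$ swapped; this is legitimate because both $H$ and $K$ lie in $\mathcal{F}$, so both Gibbs states obey \eqref{eq:TC} with the same constant $C$. Since both left-hand sides equal $\|\omega_H-\omega_K\|_{W_1}^2$, adding the two inequalities gives $2\|\omega_H-\omega_K\|_{W_1}^2\le\frac{nC}{2}\bigl(S(\omega_K\|\omega_H)+S(\omega_H\|\omega_K)\bigr)$. Symmetrizing the relative entropy in this way is the crucial move: it is precisely what will produce the sharp constant and, at the same time, eliminate the log-partition functions.

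Next I would compute the symmetrized relative entropy explicitly. Writing $\ln\omega_H=-H-(\ln\mathrm{Tr}\,e^{-H})\,\mathbb{I}$ and likewise for $K$, a direct expansion shows that the two $\ln$-partition-function terms cancel, leaving the clean identity $S(\omega_K\|\omega_H)+S(\omega_H\|\omega_K)=\mathrm{Tr}\bigl[(\omega_K-\omega_H)(H-K)\bigr]$. The purpose of this step is to convert entropic quantities into a single trace pairing the traceless state difference $\omega_K-\omega_H\in\mathcal{O}_{[n]}^T$ against the observable $H-K$, which is exactly the object controlled by the local quantum $W_1$ norm.

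Finally I would apply duality. Since $\omega_K-\omega_H$ is traceless, the defining maximization \eqref{eq:defW1loc} yields, by homogeneity, $\mathrm{Tr}\bigl[(\omega_K-\omega_H)\,G\bigr]\le\|G\|_{\mathrm{loc}}\,\|\omega_H-\omega_K\|_{W_1\mathrm{loc}}$ for any observable $G\in\mathcal{O}_{[n]}$. Taking $G=H-K$ and using the triangle inequality $\|H-K\|_{\mathrm{loc}}\le\|H\|_{\mathrm{loc}}+\|K\|_{\mathrm{loc}}\le 2M$ bounds the trace by $2M\,\|\omega_H-\omega_K\|_{W_1\mathrm{loc}}$. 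Substituting back gives $2\|\omega_H-\omega_K\|_{W_1}^2\le n\,M\,C\,\|\omega_H-\omega_K\|_{W_1\mathrm{loc}}$, which is the claimed second inequality after dividing by $2n^2$. I do not expect a genuine obstacle; the only delicate point is the bookkeeping of the factor of $2$, since it is exactly the symmetrization in the second paragraph (turning $\tfrac12$ into $\tfrac14$) together with the factor $2M$ from the triangle inequality that conspire to give the stated constant $\frac{MC}{2}$ rather than $MC$.
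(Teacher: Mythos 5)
Your proposal is correct and follows essentially the same route as the paper's proof: symmetrizing the transportation-cost inequality over $H$ and $K$, collapsing $S(\omega_H\|\omega_K)+S(\omega_K\|\omega_H)$ to the trace pairing $\mathrm{Tr}\left[(\omega_H-\omega_K)(K-H)\right]$, and then applying the $\mathrm{loc}$/$W_1\mathrm{loc}$ duality together with $\|H-K\|_{\mathrm{loc}}\le 2M$. The constants work out exactly as you describe, so there is nothing to fix.
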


\begin{proof}
The first inequality in \eqref{eq:loc2} follows from \autoref{prop:Lloc}.
We have from \eqref{eq:TC}
\begin{align}
\frac{\left\|\omega_H-\omega_K\right\|_{W_1}^2}{n^2} &\le \frac{C}{4\,n}\left(S(\omega_H\|\omega_K) + S(\omega_K\|\omega_H)\right) = \frac{C}{4\,n}\,\mathrm{Tr}\left[\left(\omega_H-\omega_K\right)\left(K - H\right)\right]\nonumber\\
&\le \frac{C}{4\,n}\left\|\omega_H-\omega_K\right\|_{W_1\mathrm{loc}}\left\|K - H\right\|_{\mathrm{loc}} \le \frac{C\,M}{2\,n}\left\|\omega_H-\omega_K\right\|_{W_1\mathrm{loc}}\,.
\end{align}
The claim follows.
\end{proof}

\section{Quantum Wasserstein Generative Adversarial Networks}\label{sec:GAN}
Quantum Generative Adversarial Networks (QGANs) constitute an algorithm to train a parametric quantum circuit to learn an unknown quantum state \cite{lloyd2018quantum}.
The training takes the form of an adversarial game, where a generator parametric quantum circuit with the goal of generating a state as close as possible to the true state is trained against a discriminator with the goal of discriminating between the true state and the generated state.
In the typical setup, the discriminator trains a parametric observable to maximize the difference between its expectation value on the true state and on the generated state.
The choice of the parametric observable plays a crucial role for the success of the training.
In the original proposal of \cite{lloyd2018quantum}, the observable is constrained to have operator norm at most one, such that if the available set of parametric observables is large enough, the discriminator obtains the trace distance between the true and the generated state.

This choice has later been shown to suffer from the problem of barren plateaus, \emph{i.e.}, the gradient of the cost function decays exponentially with the number of qubits and quickly becomes indistinguishable from zero, thus making the training impossible \cite{mcclean2018barren}.
This problem can be ascribed to the property that any two orthogonal states have trace distance equal to one.
Therefore, if we want to obtain the state $|1\rangle^{\otimes n}$ starting from the state $|0\rangle^{\otimes n}$ and we proceed by flipping the qubits one by one, the trace distance will not notice any progress until the last qubit is flipped.

To solve this problem, Ref. \cite{kiani2022learning} has proposed a quantum Wasserstein GAN (QWGAN) where the discriminator optimizes his cost over observables with quantum Lipschitz constant at most one.
In this case, if the available set of parametric observables is large enough, the discriminator obtains the quantum $W_1$ distance between the true and the generated state.
This choice was inspired both by the predominance of the Wasserstein distance as cost function of the classical GANs \cite{arjovsky2017wasserstein} and by the results of Ref. \cite{cerezo2021cost} proving that local cost functions computed at the output of quantum circuits with logarithmic depth do not suffer from barren plateaus.
Ref. \cite{kiani2022learning} shows that, contrarily to the original QGAN, the QWGAN is capable of learning complex quantum states, such as the $n$-qubit GHZ state.

In practice, the computational complexity of computing the exact Lipschitz constant grows exponentially with the number of qubits.
Therefore, the QWGAN of \cite{kiani2022learning} actually replaces the quantum Lipschitz constant with the upper bound given by the local quantum norm of the present paper with all the coefficients $c_k$ set to one.
Moreover, since the dimension of the vector space of the observables grows exponentially with the number of qubits $n$, the QWGAN of \cite{kiani2022learning} restricts the optimization of the discriminator to the linear combinations of a set of $O(\mathrm{poly}\,n)$ tensor products of Pauli matrices.
If no a priori information on the state to be learnt is available, the most natural choice for such set is made by the tensor product of few Pauli matrices.
With this choice, the constraint on the observable becomes effectively a constraint on its local quantum norm, and the QWGAN will measure the quality of the generated state with respect to the local quantum $W_1$ distance.

We have proved in \autoref{sec:conv} that the classical shadow obtained by measuring $O(\mathrm{poly} n)$ copies of any quantum state constitutes an accurate estimate with respect to the local quantum $W_1$ distance.
Therefore, our results imply that the QWGAN can be equivalently trained using the classical shadow in place of the true state and does not get any advantage in having quantum access to the true state, unless some prior information on the true state motivates the addition of some tensor product of many Pauli matrices to the set of observables available to the discriminator.
Indeed, the successful learning of the $n$-qubit GHZ state by the QWGAN of \cite{kiani2022learning} was based on such an addition.

\section{Conclusions}\label{sec:concl}
We have defined the local quantum $W_1$ distance as a distance that captures the notion of local distinguishability and we have proved that the classical shadow produced by measuring $O(\log n)$ copies of any state of $n$ qubits provides an estimate of the state which is accurate with respect to the local quantum $W_1$ distance. In particular, we have determined the speed of convergence toward the true state of the estimate given by the empirical mean of a collection of classical shadows (\autoref{thm_CS_2}). Moreover, we have considered the tomographic protocol presented in \cite{Huang_2021} that improves the shadow protocol by means of Bell measurements. Also in this case we have determined the speed of convergence of the estimate to the true quantum state in the local quantum $W_1$ distance (\autoref{Bellconvergence}). 

Moreover, we have proved that when restricted to the set of Gibbs states of local Hamiltonians which can be efficiently estimated with respect to the quantum $W_1$ distance, the local quantum $W_1$ distance is equivalent to the quantum $W_1$ distance.
Furthermore, we have applied our results to quantum generative adversarial networks, showing that the QWGAN proposed in \cite{kiani2022learning} can get advantages from having quantum access to the state to be learned only when some prior information on such state is available.

Fundamental questions that are left open are whether the convergence speeds of \autoref{thm_CS_2} and \autoref{Bellconvergence} are optimal, and whether the requirements of such theorems on the scaling of the coefficients $c_k$ in the definition of the local quantum norm can be relaxed.

\section*{Acknowledgements}
GDP was supported by the HPC Italian National Centre for HPC, Big Data and Quantum Computing - Proposal code CN00000013 and by the Italian Extended Partnership PE01 - FAIR Future Artificial Intelligence Research - Proposal code PE00000013 under the MUR National Recovery and Resilience Plan funded by the European Union - NextGenerationEU.
DP was supported by project SERICS (PE00000014) under the MUR National Recovery and Resilience Plan funded by the European Union - NextGenerationEU.
GDP is a member of the ``Gruppo Nazionale per la Fisica Matematica (GNFM)'' of the ``Istituto Nazionale di Alta Matematica ``Francesco Severi'' (INdAM)''.

\appendix

\section{Further approaches to quantum optimal mass transport}\label{app:appr}

Several quantum generalizations of optimal transport distances have been proposed besides the one of Ref. \cite{de2021quantum}.
One line of research by Carlen, Maas, Datta and Rouz\'e \cite{carlen2014analog,carlen2017gradient,carlen2020non,rouze2019concentration,datta2020relating,van2020geometrical,wirth2022dual} defines a quantum Wasserstein distance of order $2$ from a Riemannian metric on the space of quantum states based on a quantum analog of a differential structure.
Exploiting their quantum differential structure, Refs. \cite{rouze2019concentration,carlen2020non,gao2020fisher} also define a quantum generalization of the Lipschitz constant and of the Wasserstein distance of order $1$.
Alternative definitions of quantum Wasserstein distances of order $1$ based on a quantum differential structure are proposed in Refs. \cite{chen2017matricial,ryu2018vector,chen2018matrix,chen2018wasserstein}.
Refs. \cite{agredo2013wasserstein,agredo2016exponential,ikeda2020foundation} propose quantum Wasserstein distances of order $1$ based on a distance between the vectors of the canonical basis.

Another line of research by Golse, Mouhot, Paul and Caglioti \cite{golse2016mean,caglioti2021towards,golse2018quantum,golse2017schrodinger,golse2018wave, caglioti2019quantum,friedland2021quantum, cole2021quantum, duvenhage2021optimal,bistron2022monotonicity,van2022thermodynamic} arose in the context of the study of the semiclassical limit of quantum mechanics and defines a family of quantum Wasserstein distances of order $2$ built on a quantum generalization of couplings.
Such distances have been generalized to von Neumann algebras \cite{duvenhage2020quadratic,duvenhage2021wasserstein,duvenhage2022extending}.

Ref. \cite{de2021quantumAHP} proposes another quantum Wasserstein distance of order $2$ based on couplings, with the property that each quantum coupling is associated to a quantum channel.
The relation between quantum couplings and quantum channels in the framework of von Neumann algebras has been explored in \cite{duvenhage2018balance}.
The problem of defining a quantum Wasserstein distance of order $1$ through quantum couplings has been explored in Ref. \cite{agredo2017quantum}.

The quantum Wasserstein distance between two quantum states can be defined as the classical Wasserstein distance between the probability distributions of the outcomes of an informationally complete measurement performed on the states, which is a measurement whose probability distribution completely determines the state.
This definition has been explored for Gaussian quantum systems with the heterodyne measurement in Refs. \cite{zyczkowski1998monge,zyczkowski2001monge,bengtsson2017geometry}.

\section{Related works on classical shadows}\label{app:shadows}

Tomography with classical shadows, summarized in \autoref{sec:ClassicalShadows}, has been proposed as a restricted version of the shadow tomography protocol originally developed by Aaronson \cite{aaronson2020shadow}. In general, the shadow tomography addresses this problem: given a collection of $m$ observables, how many copies of an $n$-qubit state $\rho$ are necessary and sufficient to estimate their expectation values over $\rho$ up to an error $\epsilon$? A crucial requirement is to avoid considering an exponential number of copies of the unknown state as done in standard quantum state tomography. Using post selected learning \cite{aaronson2007learnability}, shadow tomography achieves a sample complexity $\widetilde O((n\log^4 m)/\epsilon^4)$. However, the original shadow tomography protocol presents an exponential time complexity, in this respect  classical shadows provide a more efficient protocol \cite{huang2020predicting}. Moreover, tomography with classical shadows has been analyzed in presence of noise \cite{Koh2022classicalshadows}, extended to continuous variables quantum systems \cite{becker2023classical}, characterized in terms of Bayesian analysis \cite{Lukens2021bayesian}, and applied in several contexts \cite{PRXQuantum.3.020365, becker2023classical, zhang2021experimental, Hadfield2022measurements, zhao2021fermionic}.

A recent theoretical generalization of classical shadows, called \emph{hybrid shadows} \cite{shivam2023classical}, has been proposed. In this case, given an $n$-qubit state, some of the qubits are measured to store classical shadows and the entangled states of the remaining qubits are stored as quantum data. This technique can be used for providing more accurate estimates of expectations values at the cost of more quantum memory.   

Another recent generalization of classical shadow tomography has been proposed considering unitary ensembles where the probability distribution of the evolution unitaries is invariant under local-basis transformations, such as random unitary circuits and quantum Brownian dynamics \cite{hu2023classical}.

Beyond classical shadows, there are other improvements of the shadow tomography. For instance, Badescu and O’Donnell \cite{badescu2021improved}, improved the sample complexity of shadow tomography to $\widetilde O((n^2\log^2m)/\epsilon^2)$ based on a procedure called \emph{quantum hypothesis selection} which can be viewed as an \emph{agnostic learning} of quantum states \cite{anshu2023survey}.

Shadow tomography with only allowed separable measurements is considered by Chen et al. \cite{chen21exponential}, they proved that $\widetilde\Omega(\min\{m,d\})$ copies of a $d$-dimensional quantum states are necessary for estimating $m$ expectation values. This sample complexity matches to the upper bound $\widetilde O(\min\{m,d\})$ showed in the first proposal of classical shadow tomography \cite{huang2020predicting}.

\section{Auxiliary Lemmas}\label{app:lem}

\begin{lem}\label{tensorlem}
    For any $\Delta_1\in\mathcal{O}_{m}^T,\Delta_2\in\mathcal{O}_{n}^T$, we have
    \begin{equation}
        \|\Delta_1\otimes\Delta_2\|_{W_1\mathrm{loc}} \le
        \|\Delta_1\|_{W_1\mathrm{loc}}\|\Delta_2\|_1.
    \end{equation}
\end{lem}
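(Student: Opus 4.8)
The plan is to work with the primal linear-program characterization \eqref{eq:Wloc} of the local quantum $W_1$ norm rather than with the dual, since the former lets me build an explicit feasible point from an optimizer for $\Delta_1$ alone. Label the first $m$ qubits as system $1$ and the last $n$ as system $2$, and let $\{a_x^{(1)}\}_{x\in[m]}$ be an optimal solution of the program \eqref{eq:Wloc} for $\Delta_1$, so that $\sum_{x\in[m]}a_x^{(1)} = \|\Delta_1\|_{W_1\mathrm{loc}}$ and $\|\mathrm{Tr}_{\Lambda_1^c}\Delta_1\|_1 \le 2\,c_{|\Lambda_1|}\sum_{x\in\Lambda_1}a_x^{(1)}$ for every nonempty $\Lambda_1\subseteq[m]$. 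I would then propose for $\Delta_1\otimes\Delta_2$ the candidate assignment $a_x = \|\Delta_2\|_1\,a_x^{(1)}$ for $x\in[m]$ and $a_x = 0$ for $x\in\{m+1,\dots,m+n\}$, whose total cost is exactly $\|\Delta_1\|_{W_1\mathrm{loc}}\|\Delta_2\|_1$; the whole proof then reduces to checking that this assignment is feasible.

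To check feasibility, fix an arbitrary region $\Lambda\subseteq[m+n]$ and split it as $\Lambda = \Lambda_1\sqcup\Lambda_2$ with $\Lambda_1 = \Lambda\cap[m]$ and $\Lambda_2 = \Lambda\cap\{m+1,\dots,m+n\}$. Because $\Delta_1\otimes\Delta_2$ is a product, the partial trace factorizes as $\mathrm{Tr}_{\Lambda^c}(\Delta_1\otimes\Delta_2) = (\mathrm{Tr}_{\Lambda_1^c}\Delta_1)\otimes(\mathrm{Tr}_{\Lambda_2^c}\Delta_2)$, and since the trace norm is multiplicative under tensor products this gives $\|\mathrm{Tr}_{\Lambda^c}(\Delta_1\otimes\Delta_2)\|_1 = \|\mathrm{Tr}_{\Lambda_1^c}\Delta_1\|_1\,\|\mathrm{Tr}_{\Lambda_2^c}\Delta_2\|_1$. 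I would then bound the second factor by $\|\Delta_2\|_1$ using contractivity of the trace norm under the partial trace (a trace-preserving operation, exactly as invoked in the contractivity proposition above), and the first factor by the feasibility of $\{a_x^{(1)}\}$ for $\Delta_1$. Using that the penalties are non-decreasing, so $c_{|\Lambda|}\ge c_{|\Lambda_1|}$, I arrive at $\|\mathrm{Tr}_{\Lambda^c}(\Delta_1\otimes\Delta_2)\|_1 \le 2\,c_{|\Lambda|}\,\|\Delta_2\|_1\sum_{x\in\Lambda_1}a_x^{(1)} = 2\,c_{|\Lambda|}\sum_{x\in\Lambda}a_x$, which is exactly the constraint required in \eqref{eq:Wloc}.

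The one delicate point — and the place where the hypothesis $\Delta_1\in\mathcal{O}_m^T$ is actually used — is the boundary case $\Lambda_1 = \emptyset$, for which the coefficient $c_{|\Lambda_1|} = c_0$ is undefined and the chain above breaks. Here I would argue separately: if $\Lambda_1=\emptyset$ then the first factor is $\mathrm{Tr}_{[m]}\Delta_1 = \mathrm{Tr}[\Delta_1] = 0$ by tracelessness of $\Delta_1$, so $\|\mathrm{Tr}_{\Lambda^c}(\Delta_1\otimes\Delta_2)\|_1 = 0$ and the constraint holds trivially against $\sum_{x\in\Lambda}a_x = 0$. I expect this edge case to be the only real obstacle; note that tracelessness of $\Delta_2$ is never needed, which is consistent with the lemma being applied with $\Delta_2$ a genuine state in the proof of \autoref{prop:tens}. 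Since the constructed assignment is feasible with cost $\|\Delta_1\|_{W_1\mathrm{loc}}\|\Delta_2\|_1$, the minimization in \eqref{eq:Wloc} yields the claimed inequality.
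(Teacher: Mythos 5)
Your proof is correct and follows essentially the same route as the paper's: both work with the primal linear program \eqref{eq:Wloc}, factorize $\mathrm{Tr}_{\Lambda^c}(\Delta_1\otimes\Delta_2) = (\mathrm{Tr}_{\Lambda_1^c}\Delta_1)\otimes(\mathrm{Tr}_{\Lambda_2^c}\Delta_2)$, and use contractivity of the trace norm together with the monotonicity $c_{|\Lambda_1|}\le c_{|\Lambda|}$; phrasing it as exhibiting an explicit feasible point $a_x = \|\Delta_2\|_1 a_x^{(1)}$ is just a cleaner packaging of the paper's LP manipulation. Your explicit treatment of the boundary case $\Lambda_1=\emptyset$ (where tracelessness of $\Delta_1$ is what saves the constraint) is a detail the paper's proof passes over silently, so it is a welcome addition rather than a deviation.
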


\begin{proof}
    Using \eqref{eq:Wloc}, we can write
    \begin{equation}
        \|\Delta_1\otimes\Delta_2\|_{W_1\mathrm{loc}} = \min\left\{\sum_{x\in[m+n]}a_x : \frac{\left\|\mathrm{Tr}_{\Lambda^c}\Delta_1\otimes\Delta_2\right\|_1}{2\,c_{|\Lambda|}} \le \sum_{x\in\Lambda}a_x\;\forall\,\Lambda\subseteq[m+n]\right\}.
    \end{equation}
    We can then separate $\Lambda$ in two, one part acting on $\Delta_1$ and another acting on $\Delta_2$: $\Lambda=\Lambda_1\cup\Lambda_2$. We can therefore write
    \begin{equation}
        \|\Delta_1\otimes\Delta_2\|_{W_1\mathrm{loc}} = \min\left\{\sum_{x\in[m+n]}a_x : \frac{\left\|\mathrm{Tr}_{\Lambda_1^c}\Delta_1\right\|_1 \left\|\mathrm{Tr}_{\Lambda_2^c}\Delta_2\right\|_1}{2\,c_{|\Lambda|}} \le \sum_{x\in\Lambda}a_x\;\forall\,\Lambda\subseteq[m+n]\right\}.
    \end{equation}
    Since $\left\|\mathrm{Tr}_{\Lambda_2^c}\Delta_2\right\|_1 \le \|\Delta_2\|_1$, and $c_{|\Lambda_1|} \le c_{|\Lambda|}$, we get
    \begin{equation}
        \|\Delta_1\otimes\Delta_2\|_{W_1\mathrm{loc}} \le
        \min\left\{\sum_{x\in[m+n]}a_x : \frac{\left\|\mathrm{Tr}_{\Lambda_1^c}\Delta_1\right\|_1 }{2\,c_{|\Lambda_1|}}\left\|\Delta_2\right\|_1 \le \sum_{x\in\Lambda}a_x\;\forall\,\Lambda\subseteq[m+n]\right\}.
    \end{equation}
    There is no longer any dependency on $\Lambda_2$, which means $\forall x \ge m+1, a_x = 0$. Therefore we can extract $\|\Delta_2\|_1$,
    \begin{align}
        \|\Delta_1\otimes\Delta_2\|_{W_1\mathrm{loc}} &\le
        \min\left\{\sum_{x\in[m]}a_x : \frac{\left\|\mathrm{Tr}_{\Lambda_1^c}\Delta_1\right\|_1 }{2\,c_{|\Lambda_1|}} \le \sum_{x\in\Lambda_1}a_x\;\forall\,\Lambda_1\subseteq[m]\right\}\left\|\Delta_2\right\|_1\nonumber\\
        &=\|\Delta_1\|_{W_1\mathrm{loc}}\|\Delta_2\|_1.
    \end{align}
\end{proof}

\bibliography{biblio}

\begin{thebibliography}{10}

\bibitem{ariano2003quantum}
Giacomo~M D'Ariano, Matteo~GA Paris, and Massimiliano~F Sacchi.
\newblock Quantum tomography.
\newblock {\em Advances in imaging and electron physics}, 128:206--309, 2003.

\bibitem{donnell2016efficient}
Ryan O'Donnell and John Wright.
\newblock Efficient {Q}uantum {T}omography.
\newblock In {\em Proceedings of the Forty-Eighth Annual ACM Symposium on
  Theory of Computing}, STOC '16, page 899–912, New York, NY, USA, 2016.
  Association for Computing Machinery.

\bibitem{haah2017sample}
Jeongwan Haah, Aram~W. Harrow, Zhengfeng Ji, Xiaodi Wu, and Nengkun Yu.
\newblock Sample-{O}ptimal {T}omography of {Q}uantum {S}tates.
\newblock {\em IEEE Transactions on Information Theory}, 63(9):5628--5641,
  2017.

\bibitem{anshu2023survey}
Anurag Anshu and Srinivasan Arunachalam.
\newblock A survey on the complexity of learning quantum states.
\newblock {\em arXiv:2305.20069}, 2023.

\bibitem{aaronson2020shadow}
Scott Aaronson.
\newblock Shadow {T}omography of {Q}uantum {S}tates.
\newblock {\em SIAM Journal on Computing}, 49(5):STOC18--368--STOC18--394,
  2020.

\bibitem{huang2020predicting}
Hsin-Yuan Huang, Richard Kueng, and John Preskill.
\newblock Predicting many properties of a quantum system from very few
  measurements.
\newblock {\em Nature Physics}, 16(10):1050--1057, 2020.

\bibitem{huang2022learning}
Hsin-Yuan Huang.
\newblock Learning quantum states from their classical shadows.
\newblock {\em Nature Reviews Physics}, 4(2):81--81, 2022.

\bibitem{Huang_2021}
Hsin-Yuan Huang, Richard Kueng, and John Preskill.
\newblock Information-theoretic bounds on quantum advantage in machine
  learning.
\newblock {\em Physical Review Letters}, 126(19), may 2021.

\bibitem{de2021quantum}
Giacomo De~Palma, Milad Marvian, Dario Trevisan, and Seth Lloyd.
\newblock The {Q}uantum {W}asserstein {D}istance of {O}rder 1.
\newblock {\em IEEE Transactions on Information Theory}, 67(10):6627--6643,
  2021.

\bibitem{bratteli2013operatorII}
Ola Bratteli and Derek~W. Robinson.
\newblock {\em Operator Algebras and Quantum Statistical Mechanics II:
  Equilibrium States Models in Quantum Statistical Mechanics}.
\newblock Theoretical and Mathematical Physics. Springer Berlin Heidelberg,
  2013.

\bibitem{naaijkens2017quantum}
Pieter Naaijkens.
\newblock {\em Quantum Spin Systems on Infinite Lattices: A Concise
  Introduction}.
\newblock Lecture Notes in Physics. Springer International Publishing, 2017.

\bibitem{rouze2023learning}
Cambyse Rouz{\'e} and Daniel~Stilck Fran{\c{c}}a.
\newblock Learning quantum many-body systems from a few copies.
\newblock {\em arXiv:2107.03333}, 2023.

\bibitem{onorati2023efficient}
Emilio Onorati, Cambyse Rouz{\'e}, Daniel~Stilck Fran{\c{c}}a, and James~D.
  Watson.
\newblock Efficient learning of ground \& thermal states within phases of
  matter.
\newblock {\em arXiv:2301.12946}, 2023.

\bibitem{de2022quantum}
Giacomo De~Palma and Cambyse Rouz{\'e}.
\newblock Quantum {C}oncentration {I}nequalities.
\newblock {\em Annales Henri Poincar{\'e}}, 23:3391--3429, 2022.

\bibitem{kiani2022learning}
Bobak~Toussi Kiani, Giacomo~De Palma, Milad Marvian, Zi-Wen Liu, and Seth
  Lloyd.
\newblock Learning quantum data with the quantum earth mover's distance.
\newblock {\em Quantum Science and Technology}, 7(4):045002, jul 2022.

\bibitem{de2022wasserstein}
Giacomo De~Palma and Dario Trevisan.
\newblock The {W}asserstein {D}istance of {O}rder 1 for {Q}uantum {S}pin
  {S}ystems on {I}nfinite {L}attices.
\newblock {\em Annales Henri Poincar{\'e}, DOI 10.1007/s00023-023-01340-y},
  pages 1--46, 2023.

\bibitem{arnaud2013exploring}
Ludovic Arnaud and Nicolas~J. Cerf.
\newblock Exploring pure quantum states with maximally mixed reductions.
\newblock {\em Phys. Rev. A}, 87:012319, Jan 2013.

\bibitem{PRXQuantum.3.020365}
Stefan~H. Sack, Raimel~A. Medina, Alexios~A. Michailidis, Richard Kueng, and
  Maksym Serbyn.
\newblock Avoiding barren plateaus using classical shadows.
\newblock {\em PRX Quantum}, 3:020365, Jun 2022.

\bibitem{lloyd2018quantum}
Seth Lloyd and Christian Weedbrook.
\newblock Quantum {G}enerative {A}dversarial {L}earning.
\newblock {\em Phys. Rev. Lett.}, 121:040502, Jul 2018.

\bibitem{mcclean2018barren}
Jarrod~R McClean, Sergio Boixo, Vadim~N Smelyanskiy, Ryan Babbush, and Hartmut
  Neven.
\newblock Barren plateaus in quantum neural network training landscapes.
\newblock {\em Nature communications}, 9(1):4812, 2018.

\bibitem{arjovsky2017wasserstein}
Martin Arjovsky, Soumith Chintala, and L{\'e}on Bottou.
\newblock {W}asserstein {G}enerative {A}dversarial {N}etworks.
\newblock In Doina Precup and Yee~Whye Teh, editors, {\em Proceedings of the
  34th International Conference on Machine Learning}, volume~70 of {\em
  Proceedings of Machine Learning Research}, pages 214--223. PMLR, 06--11 Aug
  2017.

\bibitem{cerezo2021cost}
Marco Cerezo, Akira Sone, Tyler Volkoff, Lukasz Cincio, and Patrick~J Coles.
\newblock Cost function dependent barren plateaus in shallow parametrized
  quantum circuits.
\newblock {\em Nature communications}, 12(1):1791, 2021.

\bibitem{carlen2014analog}
Eric~A Carlen and Jan Maas.
\newblock An analog of the 2-{W}asserstein metric in non-commutative
  probability under which the {F}ermionic {F}okker--{P}lanck equation is
  gradient flow for the entropy.
\newblock {\em Communications in Mathematical Physics}, 331(3):887--926, 2014.

\bibitem{carlen2017gradient}
Eric~A Carlen and Jan Maas.
\newblock Gradient flow and entropy inequalities for quantum {M}arkov
  semigroups with detailed balance.
\newblock {\em Journal of Functional Analysis}, 273(5):1810--1869, 2017.

\bibitem{carlen2020non}
Eric~A Carlen and Jan Maas.
\newblock Non-commutative calculus, optimal transport and functional
  inequalities in dissipative quantum systems.
\newblock {\em Journal of Statistical Physics}, 178(2):319--378, 2020.

\bibitem{rouze2019concentration}
Cambyse Rouz{\'e} and Nilanjana Datta.
\newblock Concentration of quantum states from quantum functional and
  transportation cost inequalities.
\newblock {\em Journal of Mathematical Physics}, 60(1):012202, 2019.

\bibitem{datta2020relating}
Nilanjana Datta and Cambyse Rouz{\'e}.
\newblock Relating relative entropy, optimal transport and {F}isher
  information: {A} quantum {H}{W}{I} inequality.
\newblock {\em Annales Henri Poincar{\'e}}, 21:2115--2150, 2020.

\bibitem{van2020geometrical}
Tan Van~Vu and Yoshihiko Hasegawa.
\newblock Geometrical {B}ounds of the {I}rreversibility in {M}arkovian
  {S}ystems.
\newblock {\em Phys. Rev. Lett.}, 126:010601, Jan 2021.

\bibitem{wirth2022dual}
Melchior Wirth.
\newblock A dual formula for the noncommutative transport distance.
\newblock {\em Journal of Statistical Physics}, 187(2):1--18, 2022.

\bibitem{gao2020fisher}
Li~Gao, Marius Junge, and Nicholas LaRacuente.
\newblock Fisher information and logarithmic sobolev inequality for
  matrix-valued functions.
\newblock {\em Annales Henri Poincar{\'e}}, 21(11):3409--3478, 2020.

\bibitem{chen2017matricial}
Yongxin Chen, Tryphon~T Georgiou, Lipeng Ning, and Allen Tannenbaum.
\newblock Matricial {W}asserstein-1 distance.
\newblock {\em IEEE control systems letters}, 1(1):14--19, 2017.

\bibitem{ryu2018vector}
Ernest~K Ryu, Yongxin Chen, Wuchen Li, and Stanley Osher.
\newblock Vector and matrix optimal mass transport: theory, algorithm, and
  applications.
\newblock {\em SIAM Journal on Scientific Computing}, 40(5):A3675--A3698, 2018.

\bibitem{chen2018matrix}
Yongxin Chen, Tryphon~T Georgiou, and Allen Tannenbaum.
\newblock Matrix optimal mass transport: a quantum mechanical approach.
\newblock {\em IEEE Transactions on Automatic Control}, 63(8):2612--2619, 2018.

\bibitem{chen2018wasserstein}
Yongxin Chen, Tryphon~T Georgiou, and Allen Tannenbaum.
\newblock Wasserstein geometry of quantum states and optimal transport of
  matrix-valued measures.
\newblock In {\em Emerging Applications of Control and Systems Theory}, pages
  139--150. Springer, 2018.

\bibitem{agredo2013wasserstein}
Juli{\'a}n Agredo.
\newblock A {W}asserstein-type distance to measure deviation from equilibrium
  of quantum {M}arkov semigroups.
\newblock {\em Open Systems \& Information Dynamics}, 20(02):1350009, 2013.

\bibitem{agredo2016exponential}
Juli{\'a}n Agredo.
\newblock On exponential convergence of generic quantum {M}arkov semigroups in
  a {W}asserstein-type distance.
\newblock {\em International Journal of Pure and Applied Mathematics},
  107(4):909--925, 2016.

\bibitem{ikeda2020foundation}
Kazuki Ikeda.
\newblock Foundation of quantum optimal transport and applications.
\newblock {\em Quantum Information Processing}, 19(1):25, 2020.

\bibitem{golse2016mean}
Fran{\c{c}}ois Golse, Cl{\'e}ment Mouhot, and Thierry Paul.
\newblock On the mean field and classical limits of quantum mechanics.
\newblock {\em Communications in Mathematical Physics}, 343(1):165--205, 2016.

\bibitem{caglioti2021towards}
Emanuele Caglioti, Fran{\c{c}}ois Golse, and Thierry Paul.
\newblock Towards optimal transport for quantum densities.
\newblock {\em ANNALI SCUOLA NORMALE SUPERIORE-CLASSE DI SCIENZE}, pages
  49--49, 2022.

\bibitem{golse2018quantum}
Fran{\c{c}}ois Golse.
\newblock The quantum {N}-body problem in the mean-field and semiclassical
  regime.
\newblock {\em Philosophical Transactions of the Royal Society A: Mathematical,
  Physical and Engineering Sciences}, 376(2118):20170229, 2018.

\bibitem{golse2017schrodinger}
Fran{\c{c}}ois Golse and Thierry Paul.
\newblock The {S}chr{\"o}dinger equation in the mean-field and semiclassical
  regime.
\newblock {\em Archive for Rational Mechanics and Analysis}, 223(1):57--94,
  2017.

\bibitem{golse2018wave}
Fran{\c{c}}ois Golse and Thierry Paul.
\newblock Wave packets and the quadratic {M}onge--{K}antorovich distance in
  quantum mechanics.
\newblock {\em Comptes Rendus Mathematique}, 356(2):177--197, 2018.

\bibitem{caglioti2019quantum}
Emanuele Caglioti, Fran{\c{c}}ois Golse, and Thierry Paul.
\newblock Quantum optimal transport is cheaper.
\newblock {\em Journal of Statistical Physics}, 181(1):149--162, 2020.

\bibitem{friedland2021quantum}
Shmuel Friedland, Micha\l{} Eckstein, Sam Cole, and Karol \ifmmode~\dot{Z}\else
  \.{Z}\fi{}yczkowski.
\newblock Quantum {M}onge-{K}antorovich {P}roblem and {T}ransport {D}istance
  between {D}ensity {M}atrices.
\newblock {\em Phys. Rev. Lett.}, 129:110402, Sep 2022.

\bibitem{cole2021quantum}
Sam Cole, Micha{\l{}} Eckstein, Shmuel Friedland, and Karol {\.{Z}}yczkowski.
\newblock Quantum {O}ptimal {T}ransport.
\newblock {\em arXiv:2105.06922}, 2021.

\bibitem{duvenhage2021optimal}
Rocco Duvenhage.
\newblock Optimal quantum channels.
\newblock {\em Phys. Rev. A}, 104:032604, Sep 2021.

\bibitem{bistron2022monotonicity}
R~Bistro{\'n}, M~Eckstein, and K~{\.{Z}}yczkowski.
\newblock Monotonicity of a quantum 2-{W}asserstein distance.
\newblock {\em Journal of Physics A: Mathematical and Theoretical},
  56(9):095301, feb 2023.

\bibitem{van2022thermodynamic}
Tan Van~Vu and Keiji Saito.
\newblock Thermodynamic {U}nification of {O}ptimal {T}ransport: {T}hermodynamic
  {U}ncertainty {R}elation, {M}inimum {D}issipation, and {T}hermodynamic
  {S}peed {L}imits.
\newblock {\em Phys. Rev. X}, 13:011013, Feb 2023.

\bibitem{duvenhage2020quadratic}
Rocco Duvenhage.
\newblock Quadratic {W}asserstein metrics for von {N}eumann algebras via
  transport plans.
\newblock {\em Journal of Operator Theory}, 88(2):289--308, 2022.

\bibitem{duvenhage2021wasserstein}
Rocco Duvenhage.
\newblock Wasserstein distance between noncommutative dynamical systems.
\newblock {\em arXiv:2112.12532}, 2021.

\bibitem{duvenhage2022extending}
Rocco Duvenhage, Samuel Skosana, and Machiel Snyman.
\newblock Extending quantum detailed balance through optimal transport.
\newblock {\em arXiv preprint arXiv:2206.15287}, 2022.

\bibitem{de2021quantumAHP}
Giacomo De~Palma and Dario Trevisan.
\newblock Quantum optimal transport with quantum channels.
\newblock {\em Annales Henri Poincar{\'e}}, 22(10):3199--3234, 2021.

\bibitem{duvenhage2018balance}
Rocco Duvenhage and Machiel Snyman.
\newblock Balance between quantum {M}arkov semigroups.
\newblock {\em Annales Henri Poincar{\'e}}, 19(6):1747--1786, 2018.

\bibitem{agredo2017quantum}
Juli{\'a}n Agredo and Franco Fagnola.
\newblock On quantum versions of the classical {W}asserstein distance.
\newblock {\em Stochastics}, 89(6-7):910--922, 2017.

\bibitem{zyczkowski1998monge}
Karol {\.Z}yczkowski and Wojeciech Slomczynski.
\newblock The {M}onge distance between quantum states.
\newblock {\em Journal of Physics A: Mathematical and General}, 31(45):9095,
  1998.

\bibitem{zyczkowski2001monge}
Karol {\.Z}yczkowski and Wojciech Slomczynski.
\newblock The {M}onge metric on the sphere and geometry of quantum states.
\newblock {\em Journal of Physics A: Mathematical and General}, 34(34):6689,
  2001.

\bibitem{bengtsson2017geometry}
Ingemar Bengtsson and Karol {\.Z}yczkowski.
\newblock {\em Geometry of {Q}uantum {S}tates: {A}n {I}ntroduction to {Q}uantum
  {E}ntanglement}.
\newblock Cambridge University Press, 2017.

\bibitem{aaronson2007learnability}
Scott Aaronson.
\newblock The learnability of quantum states.
\newblock {\em Proc. R. Soc. A.}, pages 3089--3114, 2007.

\bibitem{Koh2022classicalshadows}
Dax~Enshan Koh and Sabee Grewal.
\newblock Classical {S}hadows {W}ith {N}oise.
\newblock {\em {Quantum}}, 6:776, August 2022.

\bibitem{becker2023classical}
Simon Becker, Nilanjana Datta, Ludovico Lami, and Cambyse Rouzé.
\newblock Classical shadow tomography for continuous variables quantum systems.
\newblock {\em arXiv preprint arXiv:2211.07578v2}, 2023.

\bibitem{Lukens2021bayesian}
J.M. Lukens, K.J.H. Law, and R.S. Bennink.
\newblock A {B}ayesian analysis of classical shadows.
\newblock {\em npj Quantum Inf}, 7:951--967, 2021.

\bibitem{zhang2021experimental}
Ting Zhang, Jinzhao Sun, Xiao-Xu Fang, Xiao-Ming Zhang, Xiao Yuan, and He~Lu.
\newblock Experimental quantum state measurement with classical shadows.
\newblock {\em Phys. Rev. Lett.}, 127:200501, Nov 2021.

\bibitem{Hadfield2022measurements}
C.~Hadfield and R.~et~al. Bravyi, S.and~Raymond.
\newblock Measurements of quantum hamiltonians with locally-biased classical
  shadows.
\newblock {\em Commun. Math. Phys.}, 391:951--967, 2022.

\bibitem{zhao2021fermionic}
Andrew Zhao, Nicholas~C. Rubin, and Akimasa Miyake.
\newblock Fermionic partial tomography via classical shadows.
\newblock {\em Phys. Rev. Lett.}, 127:110504, Sep 2021.

\bibitem{shivam2023classical}
Saumya Shivam, Curt~W. von Keyserlingk, and Shivaji~L. Sondhi.
\newblock {On classical and hybrid shadows of quantum states}.
\newblock {\em SciPost Phys.}, 14:094, 2023.

\bibitem{hu2023classical}
Hong-Ye Hu, Soonwon Choi, and Yi-Zhuang You.
\newblock Classical shadow tomography with locally scrambled quantum dynamics.
\newblock {\em Phys. Rev. Res.}, 5:023027, Apr 2023.

\bibitem{badescu2021improved}
Costin B\u{a}descu and Ryan O'Donnell.
\newblock Improved quantum data analysis.
\newblock In {\em Proceedings of the 53rd Annual ACM SIGACT Symposium on Theory
  of Computing}, STOC 2021, page 1398–1411, New York, NY, USA, 2021.
  Association for Computing Machinery.

\bibitem{chen21exponential}
Sitan Chen, Jordan Cotler, Hsin-Yuan Huang, and Jerry Li.
\newblock Exponential separations between learning with and without quantum
  memory.
\newblock In {\em 62nd IEEE Annual Symposium on Foundations of Computer
  Science, FOCS}, pages 574--585, 2021.

\end{thebibliography}
\bibliographystyle{unsrt}
\end{document}